\numberwithin{equation}{section}
\newcommand{\nicefrac}[2]{\ ^{#1}/_{#2}}
\newtheorem{theorem}{Theorem}[section]
\newtheorem{lemma}[theorem]{Lemma}
\newtheorem{proposition}[theorem]{Proposition}
\newtheorem{corollary}[theorem]{Corollary}
\newtheorem{remark}[theorem]{Remark}
\newtheorem*{lt}{Theorem L($k$)}
\newtheorem*{lt0}{Theorem L($0$)}
\newtheorem*{ct}{Condition C($k$)}
\newcommand{\bt}{\begin{theorem}}
\newcommand{\et}{\end{theorem}}
\newcommand{\bl}{\begin{lemma}}
\newcommand{\el}{\end{lemma}}
\newcommand{\bc}{\begin{corollary}}
\newcommand{\ec}{\end{corollary}}
\newcommand{\bp}{\begin{proposition}}
\newcommand{\ep}{\end{proposition}}
\newcommand{\bpf}{\begin{proof}}
\newcommand{\epf}{\end{proof}}
\newcommand{\be}{\begin{equation}} 
\newcommand{\ee}{\end{equation}}
\newcommand{\bmul}{\begin{multline}}
\newcommand{\emul}{\end{multline}}
\newcommand{\bal}{\begin{align}} 
\newcommand{\eal}{\end{align}}
\newcommand{\beq}{\begin{eqnarray}}
\newcommand{\eeq}{\end{eqnarray}}
\newcommand{\ba}{\begin{array}}
\newcommand{\ea}{\end{array}}
\newcommand{\bma}{\begin{bmatrix}}
\newcommand{\ema}{\end{bmatrix}}
\newcommand{\bi}{\begin{itemize}}
\newcommand{\ei}{\end{itemize}}
\newcommand{\comm}[1]{}
\newcommand{\opnorm}{\@ifstar\@opnorms\@opnorm}
\newcommand{\@opnorms}[1]{%
  \left|\mkern-1.5mu\left|\mkern-1.5mu\left|
   #1
  \right|\mkern-1.5mu\right|\mkern-1.5mu\right|
}
\newcommand{\@opnorm}[2][]{%
  \mathopen{#1|\mkern-1.5mu#1|\mkern-1.5mu#1|}
  #2
  \mathclose{#1|\mkern-1.5mu#1|\mkern-1.5mu#1|}
}
\newcommand{\twoline}[2]{\genfrac{}{}{0pt}{}{#1}{#2}}
\newcommand{\e}{\rm e}
\newcommand{\cB}{{\mathcal B}}
\newcommand{\cE}{{\mathcal E}}
\newcommand{\cF}{{\mathcal F}}
\newcommand{\cG}{{\mathcal G}}
\newcommand{\cH}{{\mathcal H}}
\newcommand{\cR}{{\mathcal R}}
\newcommand{\cM}{{\mathcal M}}
\newcommand{\cS}{{\mathcal S}}
\newcommand{\cT}{{\mathcal T}}
\newcommand{\bbC}{{\mathbb C}}
\newcommand{\bbE}{{\mathbb E}}
\newcommand{\bbN}{{\mathbb N}}
\newcommand{\bbR}{{\mathbb R}}
\newcommand{\bbZ}{{\mathbb Z}}
\newcommand{\bI}{{\bf I}}
\newcommand{\bJ}{{\bf J}}
\newcommand{\bS}{{\bf S}}
\newcommand{\fh}{{\mathfrak h}}
\newcommand{\dist}{\textrm{dist\ }} 
\newcommand{\wt}{\widetilde}  
\newcommand{\ol}{\overline}
\newcommand{\bks}{\backslash}
\newcommand{\ipc}[2]{\left \langle #1 , \ #2 \right \rangle }
\newcommand{\Ev}[1]{\E \left( #1 \right)}  
\newcommand{\Evc}[2]{\E \left ( #1 \middle | #2 \right )}
\newcommand{\norm}[1]{\left\Vert#1\right\Vert}
\newcommand{\abs}[1]{\left\vert#1\right\vert}
\newcommand{\set}[1]{\left\{#1\right\}}
\newcommand{\setb}[2]{\left \{ #1 \ \middle | \ #2 \right \} }
\newcommand{\com}[2]{\left[ #1 , #2 \right ]}
\newcommand{\bra}[1]{\left < #1 \right |}
\newcommand{\ket}[1]{\left | #1 \right >}
\newcommand{\dirac}[3]{\bra{#1} #2 \ket{#3}} 
\newcommand{\diracip}[2]{\left <#1 \middle | #2 \right >}
\newcommand{\bb}[1]{\mathbb{#1}}
\newcommand{\mc}[1]{\mathcal{#1}}
\renewcommand{\vec}[1]{\bm{#1}}
\def\Z{\mathbb Z}
\def\N{\mathbb N}
\def\R{\mathbb R}
\def\E{\mathbb E}
\def\e{\mathrm e}
\def\im{\mathrm i}
\def\di{\mathrm d}
\def\1{{\mathsf 1}}
\def\tem{\textemdash}
\def\Pr{\operatorname{Prob}} 
\def\diam{\operatorname{diam}} 
\def\ran{\operatorname{ran}} 
\def\dist{\operatorname{dist}}   
\def\dim{\operatorname{dim}}  
\def\esssup{\operatorname*{ess-sup}} 
\def\Im{\operatorname{Im}} 
\def\ra{\rightarrow}
\newcommand{\vm}{\vec{m}}
\newcommand{\vxi}{\vec{\xi}}
\newcommand{\vla}{\vec{\lambda}}
\newcommand{\vo}{\vec{0}}
\begin{document}
\pagestyle{plain}
\title{Localization in the Disordered Holstein model}
\author{Rajinder Mavi} 
\author{Jeffrey Schenker}
\address[R. Mavi and J. Schenker]{Michigan State University \\ Department of Mathematics \\ 619 Red Cedar Rd. \\ East Lansing, MI 48824}
\email[R. Mavi]{mavi.maths@gmail.com}
\email[J. Schenker]{jeffrey@math.msu.edu}
\thanks{R. M. was supported by a postdoctoral fellowship from the Michigan State University Institute for Theoretical and Mathematical Physics}\thanks{J.S. was supported by the National Science Foundation under Grant No. 1500386.}
\date{September 19, 2017}
\keywords{Anderson localization, disorder, mathematical physics}
\subjclass[2010]{81Q10,81V70,82D30}

\begin{abstract}
The Holstein model describes the motion of a tight-binding tracer particle interacting with a field of quantum harmonic oscillators.  We consider this model with an on-site random potential. Provided the hopping amplitude for the particle is small, we prove localization for matrix elements of the resolvent, in particle position and in the field Fock space. These bounds imply a form of dynamical localization for the particle position that leaves open the possibility of resonant tunneling in Fock space between equivalent field configurations.  
\end{abstract}

\maketitle
  \section{Introduction}

In this paper we study the phenomenon of  localization for a single particle in a disordered environment interacting with a field of quantum oscillators.   The main result derived here is that a version of Anderson localization, suitably defined, holds for low lying energy states of the model.  More precisely, for a given energy $E$, if the particle hopping amplitude is less than an energy dependent positive threshold, then the Green's function of the system Hamiltonian at an energy below $E$ decays exponentially off the diagonal with respect to particle position \emph{and} in a particular basis of the quantum oscillator Fock space. As a consequence of this estimate, we obtain a dynamical localization bound for the \emph{particle} position.  However, we do not obtain, and do not expect, complete localization of the oscillator excitations.

Localization was initially proposed \cite{Anderson1958} and studied in single particle models \tem see \cite{Lagendijk2009} for a summary of the physics literature.   There is a large body of rigorous results on single particle localization in the mathematics literature \tem the review \cite{disertori2008random} and recent book \cite{aizenman2015random} provide overviews of complementary approaches to the subject. In recent years, the proposed phenomenon of \emph{many body localization}  has been vigorously discussed in the physics literature \tem see for example \cite{Basko2006,Pal2010,Nandkishore2015}.  Despite the intense interest, there have been only a few attempts at rigorous results on the subject. Notable efforts are Imbrie's result on disordered spin chains \cite{Imbrie2016}, a recent paper \cite{Mastropietro2017} of Mastropietro on a quasi-periodic model, and a number of papers that use single particle results to derive consequences for integrable models, see \cite{Abdul-Rahman2017} and references therein.  There is an older mathematical literature on the $n$-particle regime \cite{AW,Chulaevsky2009}, which focused on localization in systems of finitely many particles in an arbitrarily large volume.  More recently, localization of the many body ``droplet spectrum'' of the attractive XXZ spin chain, and related models, was proved independently by two groups \cite{Beaud2017,Elgart2017}. These results present localization in a genuine many body context, although only in one dimension and in a regime of zero energy density.   The present work demonstrates localization in a many body system without particle number conservation, in arbitrary dimension, but also in the regime of zero energy density.  Our main interest is in obtaining localization bounds that extend to Fock space, a proposed characteristic of many body localization \cite{Basko2006} .

We  study a particle confined to a given region $\Lambda\subset\mathbb{Z}^D$. In formulating certain arguments, we may restrict the set $\Lambda$ to be finite. However  \emph{the estimates obtained do not depend on the size of  $\Lambda$}, and when suitably formulated extend \emph{a posteriori} to infinite volumes.
Let $\fh_\Lambda = \ell^2(\Lambda)$ denote the Hilbert space for a particle in $\Lambda$.  
The oscillator Hilbert space is  the Bosonic Fock space over $\ell^2(\Lambda)$, denoted here $\cF_\Lambda$.  The Hilbert space for the combined particle/oscillator system is $\cH_\Lambda := \fh_\Lambda \otimes \cF_\Lambda$, which we represent as \be\cH_{\Lambda} \ = \ \setb{\psi:\Lambda \rightarrow \cF_\Lambda}{\sum_{x\in \Lambda}\norm{\psi(x)}^2 < \infty} \ = \ \ell^2(\Lambda;\cF_\Lambda).  \ee

We  study the \emph{disordered Holstein Hamiltonian} on $\cH_{\Lambda}$:
\be H_\Lambda(\gamma) \ := \ H_{\Lambda}^{(\mathrm{Hol})} + V_\Lambda 
  	\ee
where $V_\Lambda$ is an onsite random potential,
\begin{equation}
V_\Lambda \psi(x) = v_x \psi(x)	,
\end{equation}
and $H_{\Lambda}^{(\mathrm{Hol})}$ is the \emph{Holstein Hamiltonian}, which describes a tight-binding particle interacting with independent harmonic oscillators at each site of the lattice. Here
\begin{enumerate}
\item $\{v_x\}_{x\in \Lambda}$ is a family of independent, identically distributed random variables, whose common distribution $\rho(v)\di v$ has a bounded density $\rho$ supported on $I_V = [0,V_+]$,	and
\item the Holstein Hamiltonian is the following operator on $\cH_\Lambda$:
\be H_{\Lambda}^{(\mathrm{Hol})} \ = \ \gamma \Delta_\Lambda-  (\alpha b_{\vec{X}}^\dagger + \alpha^* b_{\vec{X}}) + \omega \sum_{x\in \Lambda} b_x^\dagger b_x +\frac{|\alpha|^2}{\omega} ,\ee
where   $\gamma$, $\omega$ are non-negative parameters, $\alpha \in \bbC$, and
\begin{enumerate}
	\item $\Delta_\Lambda$ denotes the discrete Laplacian \be 
	\Delta_\Lambda\psi(x) = \sum_{\substack{x\sim y \\ y\in \Lambda}} \psi(x)-\psi(y),\ee
		where $x\sim y$ indicates that $\|x -y\| = 1$;
	\item $b_x$ is the annihilation operator for an oscillator excitation at $x$, that is the operators $\set{b_x,b_x^\dagger}_{x\in \Lambda}$ satisfy the canonical commutation relations,
	    \be  \com{b_x}{b_y}=0, \quad \com{b_x}{b_y^\dagger} = \delta_{x,y}\ ; \ee
	    and
	\item the operator $b_{\mathbf{X}}$ denotes the annihilation operator \emph{at the particle position}:
	   \be  b_{\mathbf{X}}\psi(x) = b_x \psi(x). \ee
\end{enumerate}
\end{enumerate}

One may expect, by analogy with the Anderson model, that disorder will inhibit propagation of the particle.   However, this does not follow directly from results on the Anderson model, because the particle may exchange energy with the oscillator system.  Nonetheless, this expectation is borne out by the results on dynamical localization derived below.  Before proceeding, we formulate a simple version of these results which may be stated without introducing various technical definitions. 

\bt \label{thm:dl} Fix $E>0$ and let $P_{[0,E)}(H_\Lambda(\gamma))$ denote the spectral projection for $H_\Lambda(\gamma)$ onto states with energies below $E$.  Let $\Phi_{0;\Lambda}\in \cF_\Lambda$ denote the vacuum state, i.e., 
\be\label{eq:Phi0form} \sum_{x\in \Lambda } b_x^\dagger b_x \Phi_{0;\Lambda} = 0.\ee
For any $\mu >0$, there is $\gamma_0(E,\mu)$ such that if $\gamma <\gamma_0(E,\mu)$ then there is $C<\infty$ so that
\be  \Ev{\sup_{t\in \bbR} \abs{\ipc{\delta_x\otimes \Phi_{0;\Lambda}}{ e^{-\im t H_\Lambda(\gamma)}P_{[0,E)}(H_\Lambda(\gamma))  \delta_y \otimes \Phi_{0;\Lambda}}}} \ < \ C\e^{-\mu \|x-y\|} , \ee
for any $\Lambda \subset \bbZ^d$ and any $x,y\in \Lambda$. 
Here $\|x-y\|$ denotes the graph distance between $x$ and $y$ and $\bbE$ denotes expectation with respect to the randomness. \et

Thm.\ \ref{thm:dl} is an easy consequence of Thm.\ \ref{thm:DL} below.   Our main result is Theorem \ref{thm:main}, which is an exponential clustering bound on the matrix elements of $\left (H_\Lambda(\gamma)-z \right )^{-1}$ in the orthonormal basis of eigenfunctions for $H_\Lambda(0)$, with $\gamma=0$.  To formulate the result, 
it is useful to decompose $H_\Lambda(\gamma)$ as follows:
\be 
     H_\Lambda(\gamma) \
       = \ \gamma \Delta_\Lambda+ H_\Lambda(0)
   \ee
where
\begin{equation} H_\Lambda(0)\ =\ \omega H_{\Lambda}^{(\mathrm{ph})}+ V_\Lambda , \label{eq:H0}\end{equation} 
and
\be\label{eq:Hph}
H_{\Lambda}^{(\mathrm{ph})} \ = \ \sum_{x\in \Lambda}b_x^\dagger b_x - \left ( \frac{\alpha}{\omega} b_{\vec{X}}^\dagger + \frac{\alpha^*}{\omega} b_{\vec{X}} \right ) + \frac{ |\alpha|^2}{\omega^2}.\ee
To state Theorem \ref{thm:main}, we must first describe the  explicit orthonormal eigenbasis for $H_\Lambda(0)$,  indexed by particle position and oscillator excitation numbers.

\subsection{Description of $H_\Lambda(\gamma)$ in the eigenbasis for $H_\Lambda(0)$}  The two operators which sum in eq.\ \eqref{eq:H0} to define $H_\Lambda(0)$ commute with one another.  The ``phonon'' operator $H_\Lambda^{(\mathrm{ph})}$ can be written as follows:
\begin{equation}
H_{\Lambda}^{(\mathrm{ph})} \ = \ \sum_{x\in \Lambda} a_{x}^\dagger a_x
\end{equation}
with 
\begin{equation}
a_x = b_x -\beta I[\mathbf{X}=x], \quad \beta = \frac{\alpha}{\omega}.
\end{equation}
Note that the family $\{a_x^\dagger,a_x\}_{x\in \Lambda}$ satisfies the canonical commutation relations,  however these operators do not commute with the particle kinetic energy $\Delta_\Lambda$.

A single harmonic oscillator has a Hilbert space spanned by states $\{\ket{m}\}_{m=0}^\infty$ with Hamiltonian  $b^\dagger b \ket{m}=m\ket{m}$, $\com{b^\dagger}{b}=1.$ Recall that
 \be  (b^\dagger - \beta^*)(b  -  \beta) D(\beta) | m \rangle
            = D(\beta) (b^\dagger b ) | m \rangle = m D(\beta) |m\rangle \ee
where $D(\beta)$ is the unitary \emph{Glauber displacement operator} \cite{Glauber1963},
      \be
      D(\beta) = \e^{\beta b^\dagger - \beta^* b} 
              = \e^{-\frac{1}{2}|\beta|^2}  \e^{\beta b^\dagger} \e^{ - \beta^* b }. \ee
Since $D(\beta)$ is unitary,  $\sum_{\xi} \abs{\dirac{\xi}{D(\beta)}{m}}^2 = 1 .$ We derive in Appendix \ref{displacement}
the following  bound on  matrix elements of $D(\beta)$. 
 \bp     \label{disp0}
    Let $\beta \in \bb{C}$.
    For any $\mu>0$ and $0 < p \le  2$ there is $C_{\mu,\beta,p}<\infty$ such that
\be\label{dispsum}  \sum_{m=0}^\infty \e^{\mu|\sqrt{n} - \sqrt{m}| } \abs{\dirac{m}{D(\beta)}{n}}^p < C_{\mu,\beta,p}  \left ( n \vee 1 \right )^{\frac{1}{2}-\frac{p}{4}} \ .
\ee
In particular,  \begin{equation} \label{dispdecay}
	\abs{\dirac{m}{D(\beta)}{n}} \ \leq \  A_{\mu,\beta}  \,  \e^{-\mu|\sqrt n- \sqrt m|} ,
\end{equation} 
with $A_{\mu,\beta} < \infty$.
 \ep 

\begin{remark}
	Here and throughout  we use the notation $n \vee m = \max(n,m)$.  Similarly, $n\wedge m=\min(n,m)$.
\end{remark}

We can now write down an eigenbasis for $H_\Lambda(0)$. Let 
\be\label{eq:MLambda} \cM_\Lambda \ := \  \setb{\vec{m}}{\vec{m}:\Lambda \ra \{0,1,2,\ldots\} \quad \text{and} \sum_{x\in \Lambda} \vm(x) <\infty}.\ee
The condition $\sum_x \vm(x)<\infty$ is equivalent to the condition ``$\vm(x)=0$ for all but finitely many $x$,'' which is automatic for $\Lambda$ finite.
Given $\vec{m}\in \cM_\Lambda$ let
    \be  \ket{\vec{m}} \ := \ \prod_{x \ :\ \vec{m}(x)\neq 0} \frac{1}{\sqrt{\vec{m}(x)!}} \left (b^\dagger_x \right )^{\vec{m}(x)}  \ket{\vec{0}},\ee
where $\ket{\vec{0}}$ is the vacuum vector in $\cF_\Lambda$.  The set $\setb{ \ket{\vec{m}}}{\vec{m}\in \cM_\Lambda}$
     is  basis for the oscillator Fock space $\cF_{\Lambda}$.  Given $x\in \Lambda$ and $\vec{m}:\Lambda \rightarrow \bbN$, let
      \be D_x(\beta) \ket{\vec{m}} \ := \ \e^{\beta b_x^\dagger-\beta^*b_x} \ket{\vec{m}}\ee
and 
\be \label{eq:ketxm} \ket{x,\vec{m}}  \ := \ \ket{x} \otimes D_x(\beta) \ket{\vec{m}}.\ee
Thus for $\vec{m} \neq \vec{0}$, we have 
\be  \ket{x,\vec{m}} \ =  \ 
\prod_{u\in \Lambda} \frac{(a^\dagger_u)^{\vec{m}(u)}}{\sqrt{\vec{m}(u) !}} \ket{x,\vec{0}} ,
     \ee
     where the product is, in fact, finite since all but finitely many terms are the identity.

The vectors defined in eq.\ \eqref{eq:ketxm} are eigenstates of $H_\Lambda(0)$, with corresponding eigenvalues
\be  H_\Lambda(0)  \ket{x,\vec{m}} = (  \omega N_\Lambda(\vec{m})   + v_x  )\ket{x,\vec{m}}    \ee
     where 
\be \label{eq:NLambda} N_\Lambda(\vec{m})  = \sum_{x\in \Lambda} \vec{m}(x). 
\ee
Furthermore the collection of eigenstates
     \be \label{eq:cELambda}
     \cE_\Lambda = 
     \setb{\ket{x,\vec{m}} }{x\in \Lambda,\ \vec{m}\in \cM_\Lambda }\ee
     is an orthonormal basis for $\cH_\Lambda$. Thus the spectrum of $H_\Lambda(0)$ is 
     $$\sigma(H_\Lambda(0)) = \overline{\setb{ v_x + \omega k}{ x\in \Lambda \ \text{and} \ k\in \bbN }} \ \subset \ \bigcup_{k = 0}^\infty  [\omega k,\omega k + V_+].  $$
For finite $\Lambda$, the spectrum is a random, countable set, consisting of $|\Lambda|^k$ points in each interval $[\omega k,\omega k+V_+]$. For infinite $\Lambda$, it is equal to $\bigcup_{k=0}^\infty [\omega k + \overline{\esssup \rho}],$
with probability one.

The kinetic energy operator $\Delta_\Lambda$ is not diagonal in the basis $\cE_\Lambda$, but has the explicit matrix elements
 \be \label{Kdef}
 \dirac{x,\vec{m}}{\Delta_\Lambda}{y,\bm{\xi}} \ = \ K_{x,\bm{m}}^{y,\bm{\xi}} 1_{x,y\in \Lambda} ,
 \ee
 where
 \begin{align}
 	 K_{x,\bm{m}}^{y,\bm{\xi}} \ 
       :=& \ 2D 1_{x=y} \diracip{\bm{m}}{\bm{\xi}} - 1_{x\sim y} \dirac{\vec{m}}{ D_x(\beta) D_y(\beta)}{\vec{ \xi}} \\ \nonumber
 	 =& \ \begin{cases} 2 D & \text{ if } x=y \text{ and } \bm{m}=\bm{\xi} , \\
 	  \langle \bm{m}(x) | D(-\beta) |\bm{\xi}(x)\rangle 
                    \langle \bm{m}(y) | D(\beta) | \bm{\xi}(y)  \rangle & \text{ if } x\sim y \text{ and } \bm{m}(u)=\bm{\xi}(u) \text{ for }u\neq x,y , \\
                    0 & \text{ otherwise.}
 \end{cases}
 \end{align}
As $D(\beta)$ is a unitary operator, we have  $\sum_{\bm{m}}  |K_{x,\bm{m}}^{y,\bm{\xi}}|^2 = 2D 1_{x=y} - 1_{\|x-y\| = 1} $.  In particular, $0\le \Delta_\Lambda\le 4D$, which also follows from the fact that  $\Delta_\Lambda$ is the same as $\wt{\Delta}_\Lambda \otimes 1$ acting on the tensor product $\fh_\Lambda \otimes \cF_\Lambda $ where $\wt{\Delta}_\Lambda$ is the usual discrete Laplacian on $\fh_\Lambda$. Let 
\begin{equation}\label{eq:bands} \bI_k \ = \   [\omega k   , \omega k + V_+ +    4D \gamma    ].\end{equation}
Then the spectrum of $H_\Lambda(\gamma)$ is contained in $\Sigma = \bigcup_{k = 0}^\infty \bI_k  $.

\subsection{Localization} We characterize localization in terms of  the amplitude of the \emph{Green's function}, 
\be \label{eq:GLambda} G_\Lambda (x,\bm{m};y,\bm{\xi} ;z)   \ := \ \dirac{x,\bm{m} }{ (H_\Lambda(\gamma) - z)^{-1}}{ y,\bm{\xi}},  \ee 
for $\ket{x,\bm{m}},\ket{y,\bm{\xi}} \in \cE_\Lambda $.  Localization is signified by exponential decay of $G_\Lambda(x,\bm{m};y,\bm{\xi};z)$ in a suitable metric on $\cE_\Lambda$, which we now specify. 

Let $\|x - y\|$ denote the graph metric on $\Lambda$, i.e.,
\be \label{eq:metric} \|x-y\| \ := \ \text{length of the shortest lattice path from $x$ to $y$} \ = \ \sum_{i=1}^{D} |x_i-y_i|. \ee
Let $R_{\bm{m}|\bm{\xi}}(x)$ be the radius of the smallest ball centered at $x$ containing all sites where $\bm{m}>0$ and not-equal to $\bm{\xi}$, i.e., 
\be \label{eq:Rmxix}   R_{\bm{m} | \bm{\xi}} (x)  \ := \  \max \setb{\|u- x\|  }{ \bm{m}(u) >0 ; \bm{m} (u)\neq \bm{\xi} (u)   }  ,
  \ee
  with the convention that the maximum over the empty set is $0$.
We define the following function on $\cE_\Lambda^2$,
    \be  \label{ups}
      \Upsilon_\Lambda(x,\vm; y,\vxi) \ := \    \max \set{ \|x - y\|, \ R_{\vm |  \vxi} (x) , \ R_{\vxi| \vm}(y) } .  \ee
In Prop.\ \ref{metric} below, we show that  $\Upsilon_\Lambda$  is a pseudo-metric\footnote{That is $\Upsilon_\Lambda$ is non-negative, vanishes on the diagonal, is symmetric and satisfies the triangle inequality.  However, $\Upsilon_\Lambda$ may vanish for $\ket{x,\vm} \neq \ket{y,\vxi}$.} on $\cE_\Lambda$ and that $\Upsilon_\Lambda(x,\vm; y,\vxi)=0$ if and only if $x=y$ and $\vm(x')=\vxi(x')$ for $x'\neq x$. 

Our bound on the Green's function  also depends on a family of pseudo-metrics on $\cM_\Lambda$ that further control the distance between oscillator states in Fock space.  Motivated by Prop.\ \ref{disp0}  define  the following \emph{metric}:
\be \label{eq:metric}
r_\Lambda(\vec{m},\vec{\xi}) \ := \ \sum_{x\in \Lambda} \sqrt{\abs{\vm(x)-\vxi(x)}}.\ee
This metric will appear in the Combes-Thomas bound, Thm.\ \ref{thm:specCT} below, but is stronger than the pseudo-metric that appears in our localization bound.
For $k \geq 0$,
let $\cM_{\Lambda}^{(k)}$ denote the set of  oscillator configurations with total excitation number $N_\Lambda(\vm)=k$, i.e.,
\be\label{eq:cMk} \cM_{\Lambda}^{(k)}  \ := \ \setb{ \vm \in \cM_\Lambda}{ N_\Lambda(\vm) = k }. 
\ee
Thus, $\cM_\Lambda^{(k)}$ consists of those oscillator states that are ``on-shell'' for the  $k$-th band.   For each $k$, we let $\cR_\Lambda^{(k)}$ denote the  pseudo-metric  on $\cM_\Lambda$ obtained from $r_\Lambda$ by collapsing the elements of $\cM_\Lambda^{(k)}$ to a point:
 \be \label{eq:cRk}  \cR^{(k)}_\Lambda(\vm,\vxi)  
        =   \min\left\{ r_\Lambda(\vm,\vxi),
                 r_\Lambda(\vm,\cM_\Lambda^{(k)}) + r_\Lambda(\vxi,\cM_\Lambda^{(k)})  \right\} ,
         \ee
         where
         $ r_\Lambda(\vm,\cM_\Lambda^{(k)})  =   \min \setb{r_\Lambda(\vm,\vxi)}{\vxi\in \cM_\Lambda^{(k)}}.$
Note that $\mc{R}_\Lambda^{(k)}(\vm,\vxi)=0$ if both $\vm,\vxi\in \cM_\Lambda^{(k)}$.  Furthermore, for any band $k$
\be\label{eq:rlower} r_\Lambda(\vm,\vxi) \ \ge \ \cR_\Lambda^{(k)}(\vm,\vxi) \ \ge \ \sqrt{N_\Lambda(\vm)-N_\Lambda(\vxi)}.\ee

It follows, for each $k$, that $ \Upsilon_\Lambda(x,\vec{m};y,\vec{\xi}) + \cR^{(k)}_\Lambda(\vec{m},\vec{\xi})$ 
is a metric on $\cE_\Lambda$.\footnote{Since it is a sum of two pseudo-metrics, $ \Upsilon_\Lambda +\cR^{(k)}$ is itself a pseudo-metric.  Furthermore, if  $ \Upsilon_\Lambda(x,\vec{m};y,\vec{\xi}) + \cR^{(k)}_\Lambda(\vec{m},\vec{\xi})$ vanishes, then  $\Upsilon_\Lambda(x,\vec{m};y,\vec{\xi})=0$ and  $N_\Lambda(\vm)=N_\Lambda(\vxi)$.  From $\Upsilon_\Lambda(x,\vec{m};y,\vec{\xi})=0$ we conclude that $x=y$ and $\vm(x')=\vxi(x')$ for $x\neq x'$.  Since $N_\Lambda(\vm)=N_\Lambda(\vxi)$, we must have $\vm(x)=\vxi(x)$, and thus $\vm\equiv \vxi$.}   
Our main technical result is that for small enough $\gamma$  the Green's function  $G_\Lambda(x,\bm{m};y,\bm{\xi};E+\im \epsilon)$ for energies $E$ in the $k$-th band $\bI_k$ is exponentially small in  $\Upsilon_\Lambda(x,\vec{m};y,\vec{\xi}) + \cR^{(k)}_\Lambda(\vec{m},\vec{\xi})$.  We restrict the imaginary part of the energy, $\epsilon$, to be less than one, and introduce the notation 
\be \bS_k \ = \ \setb{E+\im \epsilon}{E\in \bI_k \text{ and } |\epsilon| <1}.\ee
\bt\label{thm:main} Suppose that $V_+ < \omega$ and fix $s<1$ and $\mu >0$. For each  $k=0,1,\ldots$ there is $\gamma_{k} >0$, depending on $k,\mu,s,V_+,\omega,\beta$, and $D$, such that if $\gamma <\gamma_{k}$, then there is $A <\infty$ so that for $z\in \bS_k$, 
     \be  
    \Ev{  |G_\Lambda(x,\vm;y,\vxi;z)|^s }   \leq         
                   A \e^{ - \mu \left(  \Upsilon_\Lambda(x,\vm;y,\vxi) + \cR_\Lambda^{(k)}(\vm,\vxi)\right ) }
   \ee 
for any $\Lambda \subset \bbZ^d$ and any  $\ket{x , \vm}$, $\ket{y,\vxi} \in \cE_\Lambda$.
\et

Taking $\gamma < \wt{\gamma}_k := \min_{0\le j\le k} \gamma_j$, we obtain localization throughout $\bS_0 \cup \cdots \cup \bS_k$. To state a bound that is uniform over all bands, we use the lower bound eq.\ \eqref{eq:rlower} for $\mc{R}_\Lambda^{(k)}$. Thus
\begin{corollary}\label{cor:main} Suppose that $V_+<\omega$ and fix $s <1$ and $\mu >0$.  For each $k=0,1,\ldots$ there is $\wt{\gamma}_k >0$ such that if $\gamma<\wt{\gamma}_k$ then there is $C <\infty$ so that for $E \le  \omega k + V_+ +4D\gamma$,
\be  
    \Ev{  |G_\Lambda(x,\vm;y,\vxi;E+\im \epsilon)|^s }   \leq         
                   C\e^{ - \mu \left(  \Upsilon_\Lambda(x,\vm;y,\vxi) + \abs{\sqrt{N_\Lambda(\vm)}-\sqrt{N_\Lambda(\vxi)}} \right ) }
   \ee 
for any $\Lambda \subset \bbZ^d$ and $|x , \vm\rangle,|y,\vxi \rangle \in \cE_\Lambda$.	
\end{corollary}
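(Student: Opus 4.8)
The plan is to deduce the corollary from Theorem~\ref{thm:main} by breaking the energy range $E\le\omega k+V_++4D\gamma$ into the spectral bands $\bI_0,\dots,\bI_k$ and then using \eqref{eq:rlower} to trade the pseudo-metric $\cR^{(j)}_\Lambda$ appearing there for the weaker term $|\sqrt{N_\Lambda(\vm)}-\sqrt{N_\Lambda(\vxi)}|$ in the corollary. First I would fix $s<1$ and $\mu>0$, invoke Theorem~\ref{thm:main} with these $s,\mu$ to obtain thresholds $\gamma_0,\dots,\gamma_k$, and set $\wt{\gamma}_k:=\min_{0\le j\le k}\gamma_j$ (shrinking it, if need be, so that the bands $\bI_0,\dots,\bI_k$ are pairwise disjoint, i.e.\ $4D\gamma<\omega-V_+$). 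It is convenient to record once and for all the elementary inequality $\sqrt{|a-b|}\ge|\sqrt a-\sqrt b|$ for $a,b\ge0$ (square both sides and reduce, for $a\ge b$, to $\sqrt a\ge\sqrt b$); together with \eqref{eq:rlower} this gives, for each band index $j$,
\[
r_\Lambda(\vm,\vxi)\ \ge\ \cR^{(j)}_\Lambda(\vm,\vxi)\ \ge\ \sqrt{\,|N_\Lambda(\vm)-N_\Lambda(\vxi)|\,}\ \ge\ \big|\sqrt{N_\Lambda(\vm)}-\sqrt{N_\Lambda(\vxi)}\,\big|.
\]

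Next, take $\gamma<\wt{\gamma}_k$ and $z=E+\im\epsilon$ with $|\epsilon|<1$. If $E\in\bI_j$ for some $j\le k$, then $z\in\bS_j$ and $\gamma<\gamma_j$, so Theorem~\ref{thm:main} (in the $j$-th band) furnishes $A_j<\infty$ with
\[
\Ev{|G_\Lambda(x,\vm;y,\vxi;z)|^s}\ \le\ A_j\,\e^{-\mu\left(\Upsilon_\Lambda(x,\vm;y,\vxi)+\cR^{(j)}_\Lambda(\vm,\vxi)\right)}\ \le\ A_j\,\e^{-\mu\left(\Upsilon_\Lambda(x,\vm;y,\vxi)+\big|\sqrt{N_\Lambda(\vm)}-\sqrt{N_\Lambda(\vxi)}\big|\right)},
\]
the last step by the display above. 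Since $\sigma(H_\Lambda(\gamma))\subseteq\Sigma=\bigcup_l\bI_l$, the only remaining energies with $E\le\omega k+V_++4D\gamma$ lie below $\bI_0$ or in one of the finitely many open gaps $(\omega j+V_++4D\gamma,\,\omega(j+1))$ with $j\le k-1$; for such $z$ the resolvent is off the spectrum, and I would apply the Combes--Thomas estimate Theorem~\ref{thm:specCT}, which yields a deterministic, exponentially small bound on $|G_\Lambda(x,\vm;y,\vxi;z)|$ in a metric dominating $\Upsilon_\Lambda+r_\Lambda$, hence—again by the display—in $\Upsilon_\Lambda+|\sqrt{N_\Lambda(\vm)}-\sqrt{N_\Lambda(\vxi)}|$; raising to the $s$-th power and taking expectations puts this in the required form (the case $|\epsilon|\ge1$ is an even easier instance of the same estimate). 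Choosing $C$ to be the larger of $\max_{j\le k}A_j$ and the Combes--Thomas constants attached to the finitely many gaps below band $k$ then finishes the argument.

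There is no deep obstacle here: all of the analytic content is already contained in Theorem~\ref{thm:main} and in the Combes--Thomas bound, and the corollary is essentially a repackaging of the former across bands together with the comparison of pseudo-metrics above. The one point deserving attention is that the band decomposition and Theorem~\ref{thm:specCT} must jointly cover $\{E\le\omega k+V_++4D\gamma,\ |\epsilon|<1\}$ with a single constant; energies deep in a gap or well below $\bI_0$ are handled uniformly and trivially, so that the statement is naturally read as concerning energies in (a fixed neighborhood of) the spectrum $\Sigma$, outside of which $G_\Lambda$ is bounded and decays exponentially simply because $z$ is far from $\sigma(H_\Lambda(\gamma))$.
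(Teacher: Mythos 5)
Your main argument is exactly the paper's own proof: take $\wt{\gamma}_k=\min_{0\le j\le k}\gamma_j$, apply Theorem \ref{thm:main} in whichever band $\bI_j$ contains $E$, and weaken $\cR^{(j)}_\Lambda$ to $\abs{\sqrt{N_\Lambda(\vm)}-\sqrt{N_\Lambda(\vxi)}}$ via eq.\ \eqref{eq:rlower}; your elementary inequality $\sqrt{\abs{a-b}}\ge\abs{\sqrt a-\sqrt b}$ is precisely the step the paper leaves implicit in \eqref{eq:rlower}. So for $z\in \bS_0\cup\cdots\cup\bS_k$ your proposal is correct and identical in route to the paper's one-line derivation.

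Where you go beyond the paper is in trying to also cover $E$ below $\bI_0$ and in the spectral gaps, and there your argument as written does not deliver the uniform constant you claim. The distance from $z=E+\im\epsilon$ (with $\abs{\epsilon}<1$; note $\epsilon=0$ is what is actually needed later, in Thm.\ \ref{thm:gf2ec}) to $\Sigma$ tends to $0$ as $E$ approaches a band edge from inside a gap or from below $0$, so the Combes--Thomas constants degrade there: in the remark after Thm.\ \ref{thm:CT} the admissible rate shrinks and the prefactor is of order $2/\Delta$ with $\Delta=\dist(E,\sigma(H_\Lambda))\to 0$, so gap energies are not ``handled uniformly and trivially.'' Moreover, Thm.\ \ref{thm:specCT} is stated for restrictions $H_{\cS}$ with $\cS\cap\cE^{(k)}_\Lambda=\emptyset$ and $E\in\bI_k$, so it does not literally apply to the full $H_\Lambda$ at energies outside every band; you would have to invoke the abstract Thm.\ \ref{thm:CT}. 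The paper itself sidesteps this point: its derivation only asserts the bound ``throughout $\bS_0\cup\cdots\cup\bS_k$,'' i.e.\ it implicitly reads the corollary as a statement about band energies. To get the literal statement for every $E\le\omega k+V_++4D\gamma$ with one constant, you would need either to adopt that reading, or to argue that the proof of Theorem L($j$) extends to slightly enlarged bands, so that the enlarged bands together with a Combes--Thomas estimate on the middle portion of each gap (where $\dist(E,\Sigma)$ is bounded below) cover the whole interval; neither step is supplied by your appeal to Thm.\ \ref{thm:specCT} alone.
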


\subsection{Dynamical Localization} Adapting methods developed for continuum \cite{Aizenman2006} and multi-particle \cite{AW} Schr\"odinger operators, we may use decay of the Green's function to control the dynamical behavior of solutions to the time dependent Schr\"odinger equation.  Because Cor.\ \ref{cor:main} gives localization only in a region at the bottom of the spectrum, we will only prove dynamical localization for initial states that are localized in energy.  For each $k$, let 
\be \bJ_k \ = \ \setb{\lambda}{0\le \lambda  \le  \omega k+V_++4D\gamma} \ee
and let $\cB_1(\bJ_k)$ denote the set of all complex Borel measurable functions supported on $\bJ_k$ and point-wise bounded by one:
\begin{multline} \cB_1(\bJ_k) \ = \ \left \{ f:\bbR \rightarrow \bbC \quad \right | \quad f\text{ is Borel measurable},  \\ \left . |f(\lambda)|\le 1 \text{ and } f(\lambda)=0 \text{ for } \lambda \ge \omega k+V_++4D\gamma \right \}.\end{multline}
Then we have the following
\bt \label{thm:DL} Suppose that $V_+<\omega$ and fix $\mu >0$ and $\epsilon >0$.  For each $k=0,1,\ldots$ there is $\wt{\gamma}_{k}$ such that if $\gamma <\wt{\gamma}_{k}$ then there is $C_{k}$ so that
\be \label{eq:DL}\Ev{ \sup_{f \in \cB_1(\bJ_k)} \abs{\dirac{x,\vm}{f(H_\Lambda)}{y,\vxi}}} \ \le \ \frac{C_{k}}{1+N_\Lambda(\vxi)^{\frac{1}{2}-\epsilon}} \e^{-\mu\|x-y\|} \ee
for any $\Lambda \subset \bbZ^d$ and  $|x , \vm\rangle,|y,\vxi \rangle \in \cE_\Lambda$. 
 \et

\begin{remark} Since the left hand side is symmetric with respect to $\ket{x,\vm} \leftrightarrow \ket{y,\vxi}$, it follows from eq.\ \eqref{eq:DL} that
\be \label{eq:DL'}\Ev{ \sup_{f \in \cB_1(\bJ_k)} \abs{\dirac{x,\vm}{f(H_\Lambda)}{y,\vxi}}} \ \le \ \frac{C_{k}}{1+N_\Lambda(\vm)^{\frac{1}{4}-\frac{\epsilon}{2}}N_\Lambda(\vxi)^{\frac{1}{4}-\frac{\epsilon}{2}}} \e^{-\mu\|x-y\|} \ee
\end{remark}

Fixing $E>0$ and applying eq.\ \eqref{eq:DL} to the family of functions 
 \be f_t(\lambda) = \e^{-\im t \lambda } I[\lambda  < E]\ee
 results in the following bound
 \be \label{eq:trueDL}
 \Ev{ \sup_{t \in \bbR}  \abs{\dirac{x,\vm}{\e^{-\im t H_\Lambda } P_{[0,E)}(H_\Lambda)}{y,\vxi}}} \ \le \ C \e^{-\mu\|x-y\|},\ee
 provided $E<\omega(k+1)$ and $\gamma < \wt{\gamma}_k$.  Theorem \ref{thm:dl} follows immediately. 

The proof of Thm.\ \ref{thm:DL} is based on estimates for \emph{eigenfunction correlators} associated to $H_\Lambda$ for \emph{finite} $\Lambda \subset \bbZ^d$.  Given finite $\Lambda$, the operator $H_\Lambda$ has compact resolvent, and thus pure point spectrum \tem indeed, the restriction of $H_\Lambda$ to $\bJ_k$ is finite dimensional.  Following \cite{AW}, we  define the eigenfunction correlator on $\bJ_k$ as follows
\be Q_\Lambda(x,\vm;y,\vxi;\bJ_k) \ = \ \sum_{E\in \sigma(H_\Lambda)\cap \bJ_k}
 \abs{\dirac{x,\vm}{P_{\{E\}}(H_\Lambda)}{y,\vxi}},
\ee
where $P_{\{E\}}(H_\Lambda)$ denotes the spectral projection onto the eigenspace for $H_\Lambda$ corresponding to eigenvalue $E$. 
Note that for any bounded, Borel measurable function $f:\bbR\rightarrow \bbC$ that vanishes on $\bJ_k^c$, we have
\be  \abs{\dirac{x,\vm}{f(H_\Lambda) }{y,\vxi}} \ \le \ \left [\sup_{E\in \bJ_k} \abs{f(E)}  \right ] Q_\Lambda(x,\vm;y,\vxi;\bJ_k). \label{eq:f} 
\ee

The correlator $Q_\Lambda$ is defined only for finite $\Lambda$; however bounds on $Q_\Lambda$ that are uniform in $\Lambda$ can be used to control matrix elements of $f(H_\Omega)$ for infinite $\Omega$.  Theorem 4.1 of \cite{AW}, adapted to the present context, gives a result of this type.
\bt \label{thm:ec2dl} Let $\Omega\subset \bbZ^d$ and suppose that for some increasing sequence $\Lambda_n$, $n=1,2,\cdots$, of finite domains converging to $\Omega$ the following bound holds
\be \Ev{Q_{\Lambda_n}(x,\vm;y,\vxi;\bJ_k)} \ \le \ A \e^{-K(x,\vm;y,\vxi)}.\ee
Then
\be \Ev{ \sup_{f\in \cB_1(\bJ_k)} \abs{\dirac{x,\vm}{f(H_\Omega) }{y,\vxi}} } \ \le \ A \e^{-K(x,\vm;y,\vxi)}.\ee
\et

\begin{proof} Essentially this boils down to three facts:
\begin{enumerate}
\item As $n\rightarrow \infty$, $H_{\Lambda_n}$ converges to $H_\Omega$ in the strong resolvent sense.
\item Given  vectors $\phi,\psi$ in a Hilbert space and a self-adjoint operator $H$, let $\mu_{\phi,\psi;H}$ denote the spectral measure for $H$ associated to $\phi$ and $\psi$; i.e., $ \int_{\bbR} f(E) \mu_{\phi,\psi;H}(\di E) \ = \ \ipc{\phi}{f(H)\psi}.$
	Then for any interval $I\subset \bbR$, we have $\sup_{f\in \cB_1(I)} \ipc{\phi}{f(H)\psi}\ = \ |\mu_{\phi,\psi}| (I),$ where $|\mu_{\phi,\psi}|$ denotes the total variation of $\mu_{\phi,\psi}.$
\item If a sequence $H_n$  of self-adjoint operators converges to a limit $H$ in the strong resolvent sense, then for any fixed pair $\phi,\psi$, the sequence of spectral measures $\mu_{\phi,\psi;H_n}$  converges to $\mu_{\phi,\psi;H}$ in the weak-$\star$ topology.
\end{enumerate}
For further details, see the proof of Theorem 4.1 in \cite{AW}.
\end{proof}

Bounds on the correlator $Q_\Lambda$ in terms of the Green's function are provided by the following
\bt \label{thm:gf2ec} For each $s<1$ there is $C_s$, depending on $s$ and the distribution $\rho$ of the random potentials but not on the volume $\Lambda$ or the band index $k$, such that
\begin{multline}\label{eq:gf2ec} \Ev{Q_\Lambda(x,\vm;y,\vxi;\bJ_k)} \\ \le \ C_s \left [ \frac{k\vee 1}{N_\Lambda(\vxi)\vee 1} \right ]^{1-\frac{s}{2-s}} \left [  \sum_{\vla \in \cE_\Lambda} \int_{\bJ_k} \Ev{\abs{G_\Lambda(x,\vla;y,\vxi;E)}^s}  \di E \right ]^{\frac{1}{2-s}} .\end{multline}
\et
\noindent In Appendix \ref{sec:DL} we explain how to adapt the proof of Theorem 4.5 in \cite{AW} to obtain this result.  

The sum over $\vla$ on the right hand side of eq.\ \eqref{eq:gf2ec} is a consequence of an argument in the proof based on averaging over the potential $v_x$ at site $x$. Note that $H_\Lambda = \wt{H} + v_x P_x$ where $\wt{H}$ is independent of $v_x$ and $P_x$ is the projection onto the closed linear span of $\{\ket{x,\vla}\}_{\vla \in \cE_\Lambda}$.  Thus this sum amounts to a sum over a basis of states affected directly by the potential $v_x$. At first sight, the emergence of the sum on the right hand side of eq. \eqref{eq:gf2ec} might seem to be a technical issue.  However, there is a physical reason behind it.  Without the hopping $\gamma \Delta_\Lambda$, any two states $\ket{x,\vla_1}$ and $\ket{x,\vla_2}$ with $N_\Lambda(\vla_1)=N_\Lambda(\vla_2)$ are exactly resonant.  Thus the right hand side of eq.\ \eqref{eq:gf2ec} includes a sum over states nearly resonant to $\ket{x,\vm}$, to which we cannot rule out tunneling.  

In \cite{Mavi2017} we study resonant tunneling of this type in a simplified model.  In that paper, we study two copies of the Anderson model on $\bbZ^d$, with identical disorder, coupled by a single hopping term connecting the origins of the two copies of $\bbZ^d$.  The Hamiltonian is a random operator $H$ on the graph $\Gamma = \bbZ^d \times \{0,1\}$  with graph metric
\begin{equation}
d_\Gamma(x,i;y,j) \ = \ \begin{cases} \|x-y\| & \text{if } i=j, \text{and}\\ 1+
 \|x\|+\|y\| & \text{if } i\neq j.	
 \end{cases}
\end{equation}
The operator $H$ is of the form $-\gamma \Delta + V$ where $\Delta$ is the nearest neighbor graph Laplacian on $\Gamma$ and $V = \sum_{x} v_x \left (\sum_{i=0,1} \ket{x,i}\bra{x,i} \right )$.
In \cite{Mavi2017}, we show that, for small enough $\gamma$,
\begin{enumerate}
	\item the fractional moments of the matrix elements of the resolvent of $H$ decay exponentially \emph{with respect to the graph metric},
	\be \Ev{\abs{\dirac{x,i}{\frac{1}{H-z}}{y,j}}^s} \ \le \ A \e^{-\mu d_\Gamma(x,i;y,j)};\ee
	\item the matrix elements of the Schr\"odinger evolution are bounded as follows
	\be \Ev{\sup_{t\in \bbR} \abs{\dirac{x,i}{\e^{-\im tH} }{y,j}}} \ \le \ C \e^{-\mu \|x-y\|};\label{eq:cartoonDL} \ee
	\item but eq.\ \eqref{eq:cartoonDL} cannot be improved to give decay in the metric $d_\Gamma$, because for any given realization of the randomness there is a sequence of vectors $\phi_k$ localized around points $x_k \rightarrow \infty$ such that for each $k$ there is a time $t_k$ at which
	\be \abs{\dirac{\phi,0}{\e^{-\im t_k H}}{\phi,1}} \ > \ 1 - \e^{-\epsilon |x_k|}.\ee
\end{enumerate}
That is, at time $t_k$ there is nearly perfect tunneling from the state $\ket{\phi_k,1}$ to the state $\ket{\phi_k,0}$, despite the fact that both states are localized far from the bond at the origin connecting the two copies of $\bbZ^d$.  

Our control on the spectrum and dynamics of the random Holstein hamiltonians considered here is not precise enough to carry out an analysis as in \cite{Mavi2017}.  However, the results of that paper strongly suggest that we cannot expect to do better here.  
 
Putting together Cor.\ \ref{cor:main} and Thms.\ref{thm:ec2dl} and \ref{thm:gf2ec} we obtain the following
\begin{lemma}\label{lem:DL}	Suppose that $V_+<\omega$ and fix $\mu >0$ and $\epsilon >0$.  For each $k=0,1,\ldots$ there is $\wt{\gamma}_{k}$ such that if $\gamma <\wt{\gamma}_{k}$ then there is $C_k<\infty$ so that
\begin{multline} \label{eq:DL2}\Ev{ \sup_{f \in \cB_1(\bJ_k)} \abs{\dirac{x,\vm}{f(H_\Lambda)}{y,\vxi}}} \\ \le \ C_{k,\epsilon} \left [ \frac{1}{N_\Lambda(\vxi)\vee 1} \right ]^{1-\epsilon }  \sum_{\vla \in \cE_\Lambda} \e^{-\mu \left  (\Upsilon_\Lambda(x,\vla;y,\vxi) + \abs{\sqrt{N_\Lambda(\vla)}-\sqrt{N_\Lambda(\vxi)}} \right )} \end{multline}
for any $\Lambda \subset \bbZ^d$ and  $|x , \vm\rangle,|y,\vxi \rangle \in \cE_\Omega$. 
\end{lemma}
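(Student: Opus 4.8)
The plan is to combine Corollary~\ref{cor:main} with the eigenfunction-correlator estimate of Theorem~\ref{thm:gf2ec}, and then pass to infinite volume through Theorem~\ref{thm:ec2dl}. Assume first that $\Lambda$ is finite. Since every $f\in\cB_1(\bJ_k)$ is supported on $\bJ_k$ and bounded by $1$, inequality \eqref{eq:f} gives $\sup_{f\in\cB_1(\bJ_k)}\abs{\dirac{x,\vm}{f(H_\Lambda)}{y,\vxi}}\le Q_\Lambda(x,\vm;y,\vxi;\bJ_k)$ pointwise in the disorder, so it suffices to bound $\Ev{Q_\Lambda(x,\vm;y,\vxi;\bJ_k)}$ by the right-hand side of \eqref{eq:DL2}. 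Given $\epsilon>0$ (which we may take $<1$), set $s:=\tfrac{2\epsilon}{1+\epsilon}<1$, chosen so that $1-\tfrac{s}{2-s}=\tfrac{2(1-s)}{2-s}\ge 1-\epsilon$. Apply Corollary~\ref{cor:main} with decay rate $(2-s)\mu$ in place of $\mu$: this produces a threshold, which we call $\wt{\gamma}_k$, so that for $\gamma<\wt{\gamma}_k$ and $z=E+\im\epsilon$ with $E\in\bJ_k$ one has $\Ev{\abs{G_\Lambda(x,\vla;y,\vxi;z)}^s}\le C\,\e^{-(2-s)\mu(\Upsilon_\Lambda(x,\vla;y,\vxi)+\abs{\sqrt{N_\Lambda(\vla)}-\sqrt{N_\Lambda(\vxi)}})}$; letting $\epsilon\downarrow 0$ gives the same bound for the real-energy Green's function appearing in \eqref{eq:gf2ec}, the passage being the routine one via Fatou together with a.e.\ existence of the boundary values of the resolvent.

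Next, substitute this bound into \eqref{eq:gf2ec}. The energy window has length $\abs{\bJ_k}=\omega k+V_++4D\gamma\le C'(k\vee 1)$, since $V_+<\omega$ and $\gamma$ is bounded, so the bracketed quantity on the right of \eqref{eq:gf2ec} is at most $C''(k\vee 1)\sum_{\vla}\e^{-(2-s)\mu(\Upsilon_\Lambda(x,\vla;y,\vxi)+\abs{\sqrt{N_\Lambda(\vla)}-\sqrt{N_\Lambda(\vxi)}})}$. Raising to the power $\tfrac1{2-s}$, which lies in $(0,1)$, and using subadditivity of $t\mapsto t^{1/(2-s)}$ to pull the exponent inside the $\vla$-sum, this is at most a constant times $(k\vee 1)^{1/(2-s)}\sum_{\vla}\e^{-\mu(\Upsilon_\Lambda(x,\vla;y,\vxi)+\abs{\sqrt{N_\Lambda(\vla)}-\sqrt{N_\Lambda(\vxi)}})}$. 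Multiplying by the prefactor $\bigl[\tfrac{k\vee 1}{N_\Lambda(\vxi)\vee 1}\bigr]^{1-s/(2-s)}$ from \eqref{eq:gf2ec}, the two powers of $(k\vee 1)$ combine into $(k\vee 1)^{(3-2s)/(2-s)}$, which is absorbed into a $k$-dependent constant $C_{k,\epsilon}$; meanwhile $\bigl[\tfrac1{N_\Lambda(\vxi)\vee 1}\bigr]^{1-s/(2-s)}\le\bigl[\tfrac1{N_\Lambda(\vxi)\vee 1}\bigr]^{1-\epsilon}$ because the base is $\le 1$ and the exponent was arranged to be $\ge 1-\epsilon$. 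This is exactly \eqref{eq:DL2} for finite $\Lambda$.

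Finally, for infinite $\Omega$, fix an increasing sequence of finite domains $\Lambda_n\uparrow\Omega$ and apply the finite-volume bound to each $\Lambda_n$. The key point is that $\Upsilon_\Lambda(x,\vla;y,\vxi)$ and $N_\Lambda(\vla)$ do not depend on the ambient region beyond determining which configurations are admissible: the lattice distance and the radii in \eqref{ups}, \eqref{eq:Rmxix} are computed in $\bbZ^D$, and $N_\Lambda(\vla)=\sum_u\vla(u)$. Hence, for $n$ large enough that $\vxi$ is admissible, the right-hand side of the finite-volume bound, viewed as a function of $n$, is a sum with fixed terms over a growing index set, so it increases monotonically in $n$ and is dominated by $C_{k,\epsilon}\bigl[\tfrac1{N_\Omega(\vxi)\vee 1}\bigr]^{1-\epsilon}\sum_{\vla\in\cE_\Omega}\e^{-\mu(\Upsilon_\Omega(x,\vla;y,\vxi)+\abs{\sqrt{N_\Omega(\vla)}-\sqrt{N_\Omega(\vxi)}})}$. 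Thus $\Ev{Q_{\Lambda_n}(x,\vm;y,\vxi;\bJ_k)}$ is bounded, uniformly in $n$, by a quantity of the form $A\,\e^{-K(x,\vm;y,\vxi)}$, and Theorem~\ref{thm:ec2dl} transfers this estimate to $f(H_\Omega)$, yielding \eqref{eq:DL2} for $\Omega$.

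The argument is largely bookkeeping, and the step needing genuine care is the last one: one must check that $\Upsilon$ and $N$ are ambient-independent, so that the finite-volume estimates form a monotone family with a common infinite-volume dominant --- precisely the input that Theorem~\ref{thm:ec2dl} requires. The lesser technical wrinkles are the choice of $s$ relative to $\epsilon$ needed to recover the exponent $1-\epsilon$ on $N_\Lambda(\vxi)\vee 1$, and the $\epsilon\downarrow 0$ passage from $\bS_k$ to real energies inside the $E$-integral of \eqref{eq:gf2ec}.
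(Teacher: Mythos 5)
Your proposal is correct and follows exactly the route the paper intends: it is the combination of Corollary \ref{cor:main}, Theorem \ref{thm:gf2ec}, and Theorem \ref{thm:ec2dl}, with the bookkeeping (the choice $s=\tfrac{2\epsilon}{1+\epsilon}$ so that $1-\tfrac{s}{2-s}=1-\epsilon$, the $|\bJ_k|\le C(k\vee 1)$ estimate, subadditivity of $t\mapsto t^{1/(2-s)}$, and the monotone infinite-volume passage) carried out correctly.
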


We close this section by explaining how Lem.\ \ref{lem:DL} implies Thm.\ \ref{thm:DL}.  By Lem.\ \ref{lem:expsumbound} below
\be \sum_{\vla \in \cE_\Lambda} \e^{-\mu \left  (\Upsilon_\Lambda(x,\vla;y,\vxi) + \abs{\sqrt{N_\Lambda(\vla)}-\sqrt{N_\Lambda(\vxi)}} \right )} \ \le \ C_{\mu,\nu} \sqrt{N_\Lambda(\vxi) \vee 1} \e^{-\nu \|x-y\|}\ee
for any $0 <\nu <\mu$. Thus eq.\ \eqref{eq:DL2} implies eq.\ \eqref{eq:DL} with a modified value of the decay constant $\mu$.  Thus Thm.\ \ref{thm:DL} follows from Lem.\ \ref{lem:DL}.

 \subsection{A sketch of the proof} The proof of Thm.\ \ref{thm:main} will be accomplished by induction on the band number $k$. We  refer to the specialization of Thm.\ \ref{thm:main} to the $k$-th band as:
\begin{lt} 
  For any $\mu > 0$ and $0 <s < 1 $ there is $\gamma_{k,\mu,s}$  so that if $0 < \gamma < \gamma_{k,\mu,s}$, then there is $A_{k,\mu,s} <\infty$ such that 
   \be  \Ev{\abs{G_\Lambda(x,\vm;y,\vxi,z)}^s} \ \le \ A_{k,\mu,s} \e^{-\mu \Upsilon_\Lambda (z,\vm;y,\vxi) -\mu \cR^{(k)}_\Lambda(\vm,\vxi)} \ee 
   for any  $\Lambda \subset \Z^d$,  $|x,\vm\rangle, |y,\vxi\rangle \in  \cE_{\Lambda}$,  and   $z\in \bS_k$.
\end{lt}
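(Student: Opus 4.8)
The plan is to prove Theorem~L($k$) by induction on the band index $k$, treating the base case $k=0$ first and having L($0$),\dots,L($k-1$) available at stage $k$. It suffices to establish the bound for finite $\Lambda$ (the estimate being uniform in $\Lambda$, the infinite-volume case follows by the strong-resolvent-convergence argument used for Theorem~\ref{thm:ec2dl}) and for $z\in\bS_k$ with $\Im z\neq 0$ (the case $\Im z=0$ following by Fatou). The organizing observation is the hypothesis $V_+<\omega$: since $H_\Lambda(0)$ has eigenvalues $\omega N_\Lambda(\vm)+v_x$ with $v_x\in[0,V_+]$, for $z\in\bS_k$ the only configurations $\ket{x,\vm}\in\cE_\Lambda$ within a bounded distance of $z$ are those with $N_\Lambda(\vm)=k$; every other configuration is at distance at least $\omega-V_+-4D\gamma$ from $\bS_k$. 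I therefore split $\cH_\Lambda=\cH^{(k)}_\Lambda\oplus(\cH^{(k)}_\Lambda)^\perp$, where $\cH^{(k)}_\Lambda$ is the closed span of $\setb{\ket{x,\vm}}{N_\Lambda(\vm)=k}$, set $P=P^{(k)}$ and $Q=1-P$, and use the Feshbach/Schur-complement representation of $(H_\Lambda(\gamma)-z)^{-1}$: on $P\cH_\Lambda$ it is the resolvent of
\[
  H^{(k)}_{\mathrm{eff}}(z)\ =\ PH_\Lambda(\gamma)P\ -\ \gamma^2\,P\Delta_\Lambda Q\big(Q(H_\Lambda(\gamma)-z)Q\big)^{-1}Q\Delta_\Lambda P ,
\]
and the remaining blocks of the resolvent are recovered from $\big(Q(H_\Lambda(\gamma)-z)Q\big)^{-1}$, $\gamma\Delta_\Lambda$, and the $P$-block via the block-inverse identities. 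For $k=0$ one has $\cM^{(0)}_\Lambda=\{\vec 0\}$, so $\cH^{(0)}_\Lambda$ is one-dimensional over each site, $H^{(0)}_{\mathrm{eff}}(z)$ is an Anderson Hamiltonian with hopping of size $O(\gamma)$ plus an exponentially decaying $O(\gamma^2)$ correction, and the classical Aizenman–Molchanov bound supplies L($0$).

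For the off-shell block, $z\in\bS_k$ lies at distance $\ge\omega-V_+-O(\gamma)$ from $\sigma(QH_\Lambda(\gamma)Q)$ (and much farther from the distant bands, the gap growing with band separation). I would run the Combes–Thomas estimate, Theorem~\ref{thm:specCT}, conjugating $Q(H_\Lambda(\gamma)-z)Q$ by $e^{\lambda\varphi}$ with $\varphi$ the distance, in the metric $\|x-y\|+r_\Lambda(\vm,\vxi)$, to a fixed configuration; the commutator terms are products of $\gamma$, a nearest-neighbour factor, and matrix elements of Glauber displacement operators, which by Proposition~\ref{disp0} are summable and dominate $e^{\lambda|\varphi(a)-\varphi(b)|}$ provided $\lambda$ is below the (arbitrary) rate in \eqref{dispsum}. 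Together with the gap this yields a \emph{deterministic} exponential bound on the off-shell resolvent in $\|x-y\|+r_\Lambda(\vm,\vxi)$, with decay rate $\ge\mu$ once $\gamma$ is small (uniformly in $\Lambda$). Since $r_\Lambda\ge\cR^{(k)}_\Lambda$ and $\|x-y\|\le\Upsilon_\Lambda$, this more than suffices for the off-shell block, and it also shows that the $\gamma^2$ self-energy term in $H^{(k)}_{\mathrm{eff}}(z)$ has an exponentially decaying kernel of norm $O(\gamma^2)$.

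The heart of the argument is the on-shell block. Grouping band-$k$ configurations by tracer position identifies $\cH^{(k)}_\Lambda$ with $\bigoplus_{x\in\Lambda}\big(\{x\}\otimes\ell^2(\cM^{(k)}_\Lambda)\big)$; on this space the unperturbed part of $H^{(k)}_{\mathrm{eff}}(z)$ acts as the \emph{scalar} $\omega k+v_x$, the hopping $\gamma P\Delta_\Lambda P$ is $\gamma$ times the compression to the $N_\Lambda=k$ sector of a product of Glauber displacement operators (hence of norm $\le\gamma$ per nearest-neighbour bond), and the self-energy is the exponentially decaying $O(\gamma^2)$ correction above. Thus $H^{(k)}_{\mathrm{eff}}(z)$ is an operator-valued Anderson model with scalar i.i.d.\ potential of bounded density and small hopping, and I would run the fractional-moment method on it: (i) an a priori bound $\Ev{\norm{P_x(H^{(k)}_{\mathrm{eff}}(z)-z)^{-1}P_x}^s}\le C_s$, obtained by averaging over $v_x$ — the self-energy depends on $v_x$ only through the gapped off-shell resolvent, hence is Lipschitz in $v_x$ with constant $O(\gamma^2)$, so $v_x\mapsto v_x-\zeta_x(v_x)$ is a bi-Lipschitz change of variable under which the bounded density survives; and (ii) a geometric resolvent inequality which, iterated against (i), produces exponential decay, the sums over internal excitation numbers on each visited bond being controlled by the summable tails of Proposition~\ref{disp0} — these sums, together with the bond degree, grow with $k$ and are the source of the shrinking threshold $\gamma_k$. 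Because the hopping changes the internal excitation configuration only at the current tracer site, a path from $\ket{x,\vm}$ to $\ket{y,\vxi}$ must have the tracer visit every site where $\vm$ and $\vxi$ disagree, which upgrades the $\|x-y\|$-decay to decay in $\Upsilon_\Lambda(x,\vm;y,\vxi)$; there is no decay within band $k$ itself, which is precisely why $\cR^{(k)}_\Lambda$ collapses $\cM^{(k)}_\Lambda$ to a point. Whenever a resolvent identity here couples band $k$ to a lower band, the corresponding Green's-function factor is supplied by the inductive hypotheses L($j$), $j<k$.

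It remains to reassemble $\Ev{|G_\Lambda(x,\vm;y,\vxi;z)|^s}$ from the $P$-block fractional-moment bound, the deterministic off-shell bound, and at most two factors of $\gamma\Delta_\Lambda$ (whose matrix elements decay in $|\sqrt{\,\cdot\,}-\sqrt{\,\cdot\,}|$), using $s<1$ and Hölder to split the products and the exponential-sum estimates (Lemma~\ref{lem:expsumbound} and its relatives) to absorb the combinatorial factors from the sums over intermediate configurations; this produces decay in $\Upsilon_\Lambda(x,\vm;y,\vxi)+\cR^{(k)}_\Lambda(\vm,\vxi)$ at rate $\mu$. I expect the real obstacle to be (i)–(ii) for $H^{(k)}_{\mathrm{eff}}(z)$: the random potential is hugely degenerate on $\cH^{(k)}_\Lambda$ — it depends only on the tracer position, so every internal excitation configuration over a site is exactly resonant with every other — so one cannot localize the oscillator cloud and must carry out the fractional-moment argument modulo the resonant internal subspace, tracking only $\Upsilon_\Lambda$. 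Making the a priori bound robust against the $v_x$-dependence of the Feshbach self-energy, and keeping the geometric expansion convergent uniformly in $\Lambda$ despite the $k$-growth of the internal dimension and of the per-bond excitation sums, is where the work concentrates.
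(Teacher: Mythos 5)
Your route (Feshbach/Schur complement onto the band-$k$ subspace, deterministic Combes--Thomas control of the off-shell block, then a fractional-moment analysis of an operator-valued Anderson model $H^{(k)}_{\mathrm{eff}}(z)$) is genuinely different from the paper's, but the two steps you yourself single out as the crux are exactly where the argument is missing, and I do not see how to complete them as stated. Your a priori bound (i) is an operator-norm fractional-moment estimate on the one-site block $P_x(H^{(k)}_{\mathrm{eff}}(z)-z)^{-1}P_x$. The only averaging input available here is the weak-$L^1$ bound of Lemma \ref{dissipative}, whose constant carries the Hilbert--Schmidt norms of the coupling operators; the internal space at a single site is $\ell^2(\cM_\Lambda^{(k)})$, of dimension of order $\abs{\Lambda}^k$, so the resulting constant is not uniform in $\Lambda$. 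Because all states $\ket{x,\vla}$, $\vla\in\cM_\Lambda^{(k)}$, are exactly degenerate in $v_x$, you cannot reduce to rank-one couplings without discarding precisely the resonant directions you need to control; this is why the paper works exclusively with matrix elements (rank-one $B$'s in Lemma \ref{lem:apriori}) and why sums over the resonant family $\{\ket{x,\vla}\}_{\vla}$ reappear in its dynamical bounds (Thm.\ \ref{thm:gf2ec}). Saying the expansion must be run ``modulo the resonant internal subspace'' names the obstruction but is not a construction, and the bi-Lipschitz change-of-variable handling of the $v_x$-dependence of the self-energy does not address it, since the failure is in the rank dependence, not in the smoothness in $v_x$.

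The second gap is the decay in $R_{\vm|\vxi}(x)$, i.e.\ the part of $\Upsilon_\Lambda$ not controlled by $\norm{x-y}$: you assert it through the path heuristic that the tracer must visit every disagreement site, but never convert this into an estimate, and --- tellingly --- your scheme leaves the induction hypotheses L($j$), $j<k$, with essentially nothing to do, since all other bands are treated deterministically by Combes--Thomas. In the paper the induction is exactly the mechanism that produces this decay: when $\norm{x-y}\le R_{\vm|\vxi}(x)$ one decouples the ball $\Gamma_{\vm|\vxi}(x)$, notes that $\ket{x,\vm}$ and $\ket{y,\vxi}$ then lie in orthogonal invariant subspaces and that the interior of the ball carries fewer than $k$ excitations, so the restricted Green's function is evaluated at an energy shifted down by a multiple of $\omega$ and L($k-\bar k$) applies (Lems.\ \ref{lem:filtersmallR}, \ref{lem:filter_large_R} and \ref{FM fterm}), after which the in-band bound closes by the bootstrap on $F_\Lambda(k,\mu,s)$ and the out-of-band cases follow from Combes--Thomas. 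Without an analogue of this step, each bond of your expansion must supply a factor small enough to beat volume-uniform sums over the degenerate internal configurations at every visited site, which is again the unresolved point above. As it stands the proposal is a plausible program, but the two difficulties you flag at the end are not peripheral: they are the theorem.
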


The general strategy of the proof is to show that Thms.\ L($j$) for $j=0,\ldots,k-1$ together imply Thm.\ L($k$). For each $k$, the basic tool will be an inequality of the form
\begin{equation}\label{eq:basicstep}
	\Ev{\abs{G_\Lambda(x,\vm;y,\vxi,z)}^s} \ \le \ \sum_{u\not \in \Gamma,\vla } K_\Lambda (x,\vm;u,\vla) \Ev{\abs{G_{\Lambda\setminus \Gamma} (u,\vla;y,\vxi,z)}^s},
\end{equation}
valid when $\|x-y\|$ is not too small.  Here $\Gamma =\Gamma(x,\vm;y,\vxi)$ is a lattice ball that depends on the positions $x,y$ and the oscillator configurations $\vm,\vxi$. When the kernel $K$ satisfies $\sum_{u,\vla} K_\Lambda(x,\vm;u,\vla) < 1$, this bound may be iterated to obtain decay. 

The result for $k=0$ will be established by adapting a similar large disorder argument for the Anderson model (see \cite{Schenker2014a}).  The induction to higher $k$ will be accomplished by an expansion similar to the $k=0$ case.  However, the basic step eq.\ \eqref{eq:basicstep} will be used only if $\|x-y\|> R=R_{\vm|\vxi}(x)$.   For $\|x-y\| \le R$, by restricting the system to a ball of radius $R$ centered at $x$, we will obtain a local volume with excitation number less than $k$ on which the induction hypothesis may be used.

There are three main ingredients in the derivation of the expansion eq. \eqref{eq:basicstep} and the proof of Theorem \ref{thm:main}:
 \begin{enumerate}
 \item Geometric resolvent identities, presented in \S\ref{sec:GRI}, which allow to localize the Green's function to finite regions. 
 \item The \emph{a priori} finiteness of fractional moments, explained in \S\ref{sec:FM}.
 \item Combes-Thomas bounds, presented in \S\ref{sec:CT}, which control decay off the diagonal for the Green's function at energies away from the spectrum of Hamiltonian.	
 \end{enumerate}

\section{Preliminaries} 
\subsection{A word regarding constants}  Throughout we will use the notation $C_{a,b,c,\ldots}$ for an unspecified finite quantity that depends on the indicated parameters $a,b,c,\ldots$ but is otherwise independent of  various other parameters relevant to the discussion.

\subsection{Conditional expectations}
Let $\Omega_\Lambda$ denote the probability space $\Pi_{x\in \Lambda} I_V$ with probability measure $\Pr \ = \ \otimes_{x\in \Lambda} \rho(v_x)\di v_x$. Given $\omega = (v_x)_{x\in \Lambda} \in \Omega_\Lambda$ and $\Gamma \subset \Lambda$, let 
\be \omega_\Gamma \ = \ (v_x)_{x\in \Gamma}\ee
denote the corresponding restriction of the potential to $\Gamma$, which is an element of $\Omega_\Gamma$. Given  $\Gamma \subset \Lambda$, let $\bbE_\Gamma$ denote the operation of \emph{integrating out the variables $(v_x)_{x\in \Gamma}$}. That is  $\bbE_\Gamma$ is  conditional  expectation $\Evc{\cdot}{\Sigma_{\Gamma^c}}$ onto the space of functions measurable with respect to the $\sigma$-field $\Sigma_{\Gamma^c}$ generated by $(v_x)_{x\in \Gamma^c}$.  Put more simply, given $f\in L^1(\Omega)$, 
\be \bbE_\Gamma [ f ] \left ( (v_x)_{x\in \Gamma^c} \right ) \ = \ 
  \int_{\Omega_\Gamma}     f\left ((v_x)_{x \in \Lambda}  \right ) \prod_{x \in \Gamma}  \rho(v_x)\di v_x \ \in \ L^1(\Omega_{\Gamma^c}).
                   \ee

\subsection{Geometry of  $\Lambda$ and $\cE_\Lambda$}  Let $\Gamma\subset \Lambda$ be a subset of $\Lambda$. Given an  oscillator state $\vm\in \cM_\Lambda$,  let $\vm_\Gamma$ denote the restriction of $\vm$ to $\Gamma$.  That is, $\vm_\Gamma \in \cM_{\Gamma}$ and 
\be
\vm_\Gamma(x) \ = \ \vm(x) \quad \text{ for all }x\in \Gamma.
\ee
 We use the following notation for two sets affiliated to $\Gamma$: 
    \be \partial \Gamma \ := \ \setb{x\in \Gamma}{\text{there is } y \in \Lambda \bks \Gamma \textnormal{ so that } \norm{x-y}=1}, \ee
 the boundary of $\Gamma$, and 
 \be \Gamma^\circ \ := \ \Gamma \bks \partial \Gamma,\ee
 the ``interior'' of $\Gamma$. 
 
Given $x\in \Lambda$ and $R \ge0$, let
\be B_{R;\Lambda}(x) \ := \ \setb{u\in \Lambda}{\norm{u-x}\le R},\ee
the ball of radius $R$ centered at $x$.  We note that the numbers of sites in the ball and on its boundary satisfy
\be \abs{B_{R;\Lambda}(x)} \ \le \ \sigma_D (R\vee 1)^D \quad \text{and} \quad \abs{\partial B_{R;\Lambda}(x)} \ \le \ \sigma_D (R\vee 1)^{D-1},\ee
with $\sigma_D$ a $\Lambda$ independent constant.

We make use of two pseudo-metrics on the basis set  $\cE_\Lambda$, defined above in eq.\ \eqref{eq:cELambda}. The first of these, $\Upsilon_\Lambda$, appeared in Thm.\ \ref{thm:main} and was defined in eq.\ \eqref{ups}.
 \bp \label{metric}
    $ \Upsilon_\Lambda$ is a pseudo-metric and  $\Upsilon_\Lambda(x,\vec{m};y,\vec{\xi})=0$ if and only if $x=y$ and $\vec{m}(w)=\vec{\xi}(w)$ for $w\neq x$.      \ep
   \bpf  Clearly $\Upsilon_{\Lambda}$ is non-negative, is symmetric, and vanishes precisely on the claimed set of arguments.  In particular $\Upsilon_{\Lambda}$ vanishes on the diagonal. To verify the triangle inequality, consider points $\ket{x,\vm}$, $\ket{w,\vla}$ and $\ket{y,\vxi}$ in $\cE_\Lambda$.  We consider three cases:
\begin{enumerate}
   \item $\Upsilon_\Lambda(x,\vm;y,\vxi)= \|x-y\|$,
   \item $\Upsilon_\Lambda(x,\vm;y,\vxi) = \|x-u\|$ for some point $u$ such that $\vxi(u)\neq \vm(u)>0$, and
   \item $\Upsilon_\Lambda(x,\vm;y,\vxi) = \|v-y\|$ for some point $v$ such that $\vm(v)\neq \vxi(v)>0$.
\end{enumerate}
In case 1, 
   \be \Upsilon_\Lambda(x,\vm;y,\vxi) \ \le \ \|x-w\| + \|w-y\| \ \le \ \Upsilon_\Lambda(x,\vm;w,\vla) + \Upsilon_\Lambda(w,\vla;y,\vxi).\ee
In case 2, if $\vla(u)\neq \vm(u)$, then
\be \Upsilon_\Lambda(x,\vm;y,\vxi) \ = \ \|x-u\| \ \le \ \Upsilon_\Lambda(x,\vm;w,\vla) \ \le \ \Upsilon_\Lambda(x,\vm;w,\vla) + \Upsilon_\Lambda(w,\vla;y,\vxi).\ee 
On the other hand, if $\vla(u)=\vm(u)$, then $\vla(u)>0$ and $\vla(u)\neq \vxi(u)$ so
\be \Upsilon_\Lambda(x,\vm;y,\vxi) \ = \ \|x-u\| \ \le \ \|x-w\| + \|w-u\| \ \le \ \Upsilon_\Lambda(x,\vm;w,\vla) + \Upsilon_\Lambda(w,\vla;y,\vxi).\ee
Case 3, is similar to case 2 with the roles of $(x,\vm)$ and $(y,\vxi)$ interchanged.
   \epf

Sums of exponentials of the pseudo-metric $\Upsilon_\Lambda$  are bounded by the following
\begin{lemma}\label{lem:expsumbound} For any $\epsilon >0$ there is $C_{\epsilon}<\infty$ such that if $\epsilon <\mu$, then 
	\begin{equation}\label{eq:expsumbound0}
	\sum_{\vla \in \cE_\Lambda^{(k)}} 	\e^{-\mu \Upsilon_\Lambda(x,\vla;y,\vxi)} \ \le \ C_{\epsilon} \e^{-(\mu-\epsilon)\|x-y\|} 
	\end{equation}
	for any $\Lambda \subset \bbZ^d$,  $k=0,1,\ldots$, $x,y\in \Lambda$, and $\vxi\in \cM_\Lambda$.   Thus, given $\mu >0$, $0<\epsilon <\mu$, and $\alpha \in \bbR$ there is $C_{\mu,\epsilon,\alpha} < \infty$ such that
\begin{multline} \label{eq:expsumbound}
 \sum_{\vla \in \cE_\Lambda} \left( N_\Lambda(\vla) \vee 1 \right )^\alpha \e^{-\mu \left  (\Upsilon_\Lambda(x,\vla;y,\vxi) + \abs{\sqrt{N_\Lambda(\vla)}-\sqrt{N_\Lambda}(\vxi)} \right )}  \\ \le \ C_{\mu,\epsilon,\alpha} \left (N_\Lambda(\vxi)\vee 1 \right)^{\alpha + \frac{1}{2}}  \ \e^{-(\mu-\epsilon) \norm{x-y}} \end{multline}
for any $\Lambda \subset \bbZ^d$, $x,y\in \Lambda$, and $\vxi \in \cM_\Lambda$.
\end{lemma}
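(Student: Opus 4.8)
The plan is to prove \eqref{eq:expsumbound0} first, and then deduce \eqref{eq:expsumbound} from it by an elementary summation over excitation numbers.

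For \eqref{eq:expsumbound0}, I fix a configuration $\vxi$ and split the sum over $\vla \in \cE_\Lambda^{(k)}$ according to the value of $R := R_{\vla|\vxi}(x)$, the radius of the smallest ball about $x$ carrying all sites where $\vla$ is positive and differs from $\vxi$. Since $\Upsilon_\Lambda(x,\vla;y,\vxi) \ge \|x-y\|$ always and $\Upsilon_\Lambda(x,\vla;y,\vxi) \ge R$, I can write $\e^{-\mu \Upsilon_\Lambda} \le \e^{-(\mu-\epsilon)\|x-y\|}\,\e^{-\epsilon \|x-y\|}\,\ldots$ — more usefully, $\e^{-\mu\Upsilon_\Lambda} \le \e^{-(\mu-\epsilon')\|x-y\|}\e^{-\epsilon' R}$ for a suitable split $\mu = (\mu-\epsilon') + \epsilon'$ using $\Upsilon_\Lambda \ge \max(\|x-y\|,R)$. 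It then remains to bound the \emph{number} of $\vla \in \cM_\Lambda^{(k)}$ with $R_{\vla|\vxi}(x) = R$. Such a $\vla$ agrees with $\vxi$ outside $B_{R;\Lambda}(x)$ (except possibly at zeros of $\vla$, which do not enlarge the support-difference radius — one must be slightly careful here, but a site where $\vla(u)=0\ne\vxi(u)$ does not count toward $R_{\vla|\vxi}(x)$, so $\vla$ restricted to the complement of $B_{R;\Lambda}(x)$ either equals $\vxi$ there or is zero there; in either case the restriction is determined by at most a subset-choice), and on $B_{R;\Lambda}(x)$ it is an arbitrary configuration with total excitation number at most $k$. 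The number of such configurations on a ball with $|B_{R;\Lambda}(x)| \le \sigma_D(R\vee 1)^D$ sites is at most $\binom{k + \sigma_D(R\vee 1)^D}{k} \le (k+1)^{\sigma_D (R\vee 1)^D}$, which is only polynomially-in-exponent large; crucially it is independent of $\Lambda$. Since $\e^{-\epsilon' R}$ decays faster than any such bound grows once we absorb a further $\epsilon''$ — wait: $(k+1)^{(R\vee1)^D}$ grows super-exponentially in $R$, so $\e^{-\epsilon' R}$ does \emph{not} beat it. This is the crux, and it forces a sharper counting: we must not count \emph{all} configurations on the ball, but recognize that $\e^{-\mu\Upsilon_\Lambda}$ also sees the \emph{positions} of the differing sites, and sum geometrically site-by-site rather than ball-by-ball.

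So the corrected approach: write $\vla = \vxi$ modified on a finite set $S = \{u : \vla(u) \ne \vxi(u)\}$, organize the sum as a sum over finite sets $S \subset \Lambda$ and over the values $\{\vla(u)\}_{u \in S}$ (nonnegative integers with prescribed total), and use that $\Upsilon_\Lambda(x,\vla;y,\vxi) \ge \|x-u\|$ for every $u \in S$ with $\vla(u)>0$ (and symmetrically $\ge \|y-u\|$ for $u$ with $\vxi(u)>0$). Bounding $\e^{-\mu\Upsilon_\Lambda}$ by a geometric mean of factors $\e^{-c\|x-u\|}$ over $u \in S$ together with one factor $\e^{-(\mu-\epsilon)\|x-y\|}$, the sum over each $u \in S$ of $\e^{-c\|x-u\|}$ over $\Lambda \subset \bbZ^D$ is bounded by a $\Lambda$-independent constant $\sum_{u \in \bbZ^D}\e^{-c\|u\|} =: c_D < \infty$; since $|S| \le k$ is bounded only by $k$, one gets a factor $c_D^{|S|}$, and the sum over $|S|$ from $0$ to $k$ of $c_D^{|S|}$ times the number of ways to split $k$ excitations is again potentially large in $k$ — but \eqref{eq:expsumbound0} only claims a constant $C_\epsilon$ \emph{independent of $k$}, so I must be genuinely careful. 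The resolution is that the factor $\e^{-\mu R_{\vla|\vxi}(x)}$ is not the only thing available: one can peel one exponential factor per \emph{unit of excitation moved}, not merely per site, using that a site $u$ with $\vla(u) = j > 0$, $\vla(u) \ne \vxi(u)$ contributes and absorbing $\e^{-\mu\|x-u\|}$ can be amortized; after more thought, the cleanest route is to bound $\sum_{\vla \in \cE_\Lambda^{(k)}}\e^{-\mu\Upsilon_\Lambda(x,\vla;y,\vxi)} \le \e^{-(\mu-\epsilon)\|x-y\|}\sum_{\vla \in \cE_\Lambda}\e^{-\epsilon \Upsilon_\Lambda(x,\vla;y,\vxi)+ \epsilon\|x-y\|}$ and to show the remaining sum (now over \emph{all} $\vla$, not just excitation number $k$) is finite and $k$-independent — and indeed, I expect the intended statement is proved by reducing, via $\Upsilon_\Lambda(x,\vla;y,\vxi) \ge \|x-u\|$ at each differing positive site and $\ge \|x-y\|$, to a convergent multi-site geometric sum where the total number of differing sites is controlled by the decay itself (each additional differing site costs at least one unit of graph distance $\ge 1$, hence a factor $\e^{-\epsilon}<1$), giving a convergent geometric series in the number of excited sites with ratio $< 1$ \emph{uniformly in $k$}.

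For the second inequality \eqref{eq:expsumbound}, I would group the sum over $\vla \in \cE_\Lambda$ by total excitation number $j = N_\Lambda(\vla)$, apply \eqref{eq:expsumbound0} to the inner sum over $\cE_\Lambda^{(j)}$ (picking up the factor $\e^{-(\mu-\epsilon)\|x-y\|}$ and a constant), giving the bound
\[
\sum_{j=0}^\infty (j\vee 1)^\alpha\, \e^{-\mu|\sqrt{j} - \sqrt{N_\Lambda(\vxi)}|}\, C_\epsilon\, \e^{-(\mu - \epsilon)\|x-y\|}.
\]
The remaining one-dimensional sum $\sum_{j\ge 0}(j\vee 1)^\alpha \e^{-\mu|\sqrt j - \sqrt N|}$ with $N = N_\Lambda(\vxi)$ is estimated by comparison with the integral $\int_0^\infty t^\alpha \e^{-\mu|\sqrt t - \sqrt N|}\,dt$; substituting $t = s^2$ turns this into $\int_0^\infty 2 s^{2\alpha + 1} \e^{-\mu|s - \sqrt N|}\,ds$, which by standard Laplace-type estimates is $O((N\vee 1)^{\alpha + 1/2})$, exactly the claimed power. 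The main obstacle is entirely in the first part: getting the $k$-independence of $C_\epsilon$ in \eqref{eq:expsumbound0}, i.e.\ controlling the combinatorial explosion of oscillator configurations on a ball of radius $R$ by exploiting that every differing excited site, and indeed every unit of displaced excitation, is charged an exponential penalty by $\Upsilon_\Lambda$; once that geometric resummation is set up correctly the rest is routine.
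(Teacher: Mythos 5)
Your second step is exactly the paper's deduction of \eqref{eq:expsumbound} from \eqref{eq:expsumbound0}: group the sum by $j=N_\Lambda(\vla)$, apply the band-wise bound, and control $\sum_j (j\vee 1)^\alpha \e^{-\mu|\sqrt j-\sqrt{N_\Lambda(\vxi)}|}$ by $C(N_\Lambda(\vxi)\vee 1)^{\alpha+\frac12}$ (the paper quotes Lem.~\ref{lem:expsum} for this one-dimensional sum). That part is fine.

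The gap is that you never actually prove \eqref{eq:expsumbound0}. You abandon the decomposition over $R=R_{\vla|\vxi}(x)$ because you count configurations on $B_{R;\Lambda}(x)$ site-by-value, getting $(k+1)^{\sigma_D(R\vee 1)^D}$, which is super-exponential in $R$; the paper instead places the $k$ quanta one at a time, which gives at most $|B_{R;\Lambda}(x)|^{k}\le \sigma_D^{k}(R\vee 1)^{Dk}$ --- polynomial in $R$ for fixed $k$, hence summable against $\e^{-\epsilon R}$ --- and that is how the paper's proof concludes. The substitute you propose in its place does not work: $\Upsilon_\Lambda$ is a \emph{maximum} of $\|x-y\|$, $R_{\vla|\vxi}(x)$ and $R_{\vxi|\vla}(y)$, not a sum over differing sites, so a configuration with many differing sites all at distance $1$ from $x$ pays a single factor $\e^{-\epsilon}$, not one factor per site; there is no ``geometric series per excited site,'' and no uniformity in $k$ can be extracted that way (splitting $\mu$ among the sites as a geometric mean fares no better, since the per-site rate then degrades like $\mu/|S|$). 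I will add that your unease about the $k$-uniformity is substantive rather than a technicality: the count the paper's argument actually delivers is the $k$-dependent factor $\sigma_D^k(R\vee 1)^{Dk}$, and, moreover, configurations $\vla$ that coincide with $\vxi$ at occupied sites far from $x$ (which have small $R_{\vla|\vxi}(x)$ but can be very numerous when $\vxi$ has large support near $y$) are not captured by counting inside $B_{R;\Lambda}(x)$ at all. What this line of argument genuinely yields is \eqref{eq:expsumbound0} with a constant $C_{\epsilon,k}$ depending on $k$ (after also accounting for the sites where $\vla$ agrees with $\vxi$, which the constraint $N_\Lambda(\vla)=k$ limits to at most $k$ further choices); obtaining the constant uniformly in $k$ and $\vxi$ as printed would require a further idea, and your per-site heuristic is not it.
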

\begin{proof}  We have 
\begin{multline}
	\sum_{\vla \in \cM_\Lambda^{(k)}}\e^{- \mu \Upsilon_\Lambda(x,\vla;y,\vm)} \ = \ \sum_{R=0}^\infty \sum_{\substack{\vla \in \cM_\Lambda^{(k)} \\ R_{\vla|\vxi}(x) = R}} \e^{- \mu\Upsilon_\Lambda(x,\vla;y,\vm)} \\
	\le \ \sum_{R=0}^{\|x-y\|} \sum_{\substack{\vla \in \cM_\Lambda^{(k)} \\ R_{\vla|\vxi}(x) = R}} \e^{-\mu \|x-y\| } + \sum_{R > \|x-y\|}  \sum_{\substack{\vla \in \cM_\Lambda^{(k)} \\ R_{\vla|\vxi}(x) = R}} \e^{-\mu R}.
\end{multline}
For an oscillator configuration $\vla\in \cM_\Lambda^{(k)}$ with $R_{\vla|\vxi}(x)=R$, any point $u$ at which $\vla(u)\neq 0$ and $\vla(u)\neq \vxi(u)$ must satisfy $\|u-x\|\le R$.  Thus the number of such configurations is bounded by $|B_R(x)|^k \le \ \sigma_D^k (R \vee 1)^{Dk}$.  Therefore,
\begin{multline}
	\sum_{\vla \in \cM_\Lambda^{(k)}}\e^{- \mu \Upsilon_\Lambda(x,\vla;y,\vm)}
	\le \  \sigma_D^k \sum_{R=0}^{\|x-y\|} (R\vee 1)^{Dk} \e^{-\mu \|x-y\| } + \sigma_D^k \sum_{R > \|x-y\|}  (R\vee 1)^{Dk} \e^{-\mu R} \\
	\le C_\epsilon \e^{-(\mu-\epsilon) \|x-y\|}
\end{multline}
with $C_\epsilon <\infty$  independent of $\Lambda$, $k$, $\mu$,  $x$, $y$, and $\vxi$. Thus eq.\ \eqref{eq:expsumbound0} holds. 

Turning now to eq.\ \eqref{eq:expsumbound}, we have
 \begin{multline} \sum_{\vla \in \cE_\Lambda}  \left( N_\Lambda(\vla) \vee 1 \right ) ^\alpha \e^{-\mu \left  (\Upsilon_\Lambda(x,\vla;y,\vxi) + \abs{\sqrt{N_\Lambda(\vla)}-\sqrt{N_\Lambda}(\vxi)} \right )} \\ = \  \sum_{k=0}^\infty (k \vee 1)^\alpha \e^{-\mu\abs{\sqrt{k}-\sqrt{N_\Lambda(\vxi)}}} \sum_{\vla \in \cM_\Lambda^{(k)}}\e^{- \mu\Upsilon_\Lambda(x,\vla;y,\vxi)} \\
 \le \ C_\epsilon \sum_{k=0}^\infty (k \vee 1)^\alpha \e^{-\mu\abs{\sqrt{k}-\sqrt{N_\Omega(\vxi)}}}  \e^{-(\mu-\epsilon)\|x-y\|}
 \end{multline}
 by eq.\ \eqref{eq:expsumbound0}. The result follows since
\be \sum_{k=0}^\infty (k\vee 1)^\alpha \e^{-\mu \abs{\sqrt{k}-\sqrt{n}}} \ \le \ C_{\alpha,\mu} n^{\frac{1}{2}+\alpha},\ee
by Lem.\ \ref{lem:expsum} below. 
\end{proof}

The second, stronger pseudo-metric, $L_\Lambda$, appears in the Combes-Thomas bound stated below.  It is defined as follows.  For $\ket{x,\vm},\ket{y,\vxi}\in \cE_\Lambda$ let
\be  W_{\vm,\vxi} \ = \ \setb{w\in \Lambda}{\vm(w)\neq \vxi(w)} \ee
and let
\begin{multline}\label{eq:LLambda} L_\Lambda(x,\vm;y,\vxi) \ := \ \text{minimal length of a nearest neighbor walk from $x$ to $y$ in $\Lambda$,}\\ \text{visiting all sites in $W_{\vm,\vxi}$.}	
\end{multline}

\begin{proposition}\label{prop:LUps}
$L_\Lambda$ is a pseudo-metric and $L_\Lambda(x,\vm;y,\vxi)=0$ if and only if $x=y$ and $\vm(w)=\vxi(w)$ for $w\neq x$. Furthermore,
\be \label{eq:LUps}L_\Lambda(x,\vm;y,\vxi)\ge \Upsilon_\Lambda(x,\vm;y,\vxi).\ee
\end{proposition}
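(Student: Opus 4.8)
The plan is to verify the three pseudo-metric axioms for $L_\Lambda$, then identify the vanishing set, and finally establish the inequality $L_\Lambda \ge \Upsilon_\Lambda$. Non-negativity is immediate from the definition, and $L_\Lambda(x,\vm;x,\vm)=0$ since the empty walk at $x$ suffices when $W_{\vm,\vm}=\emptyset$. For symmetry, observe that $W_{\vm,\vxi}=W_{\vxi,\vm}$, and that reversing a nearest-neighbor walk from $x$ to $y$ visiting all of $W_{\vm,\vxi}$ produces a walk of the same length from $y$ to $x$ visiting the same set; hence $L_\Lambda(x,\vm;y,\vxi)=L_\Lambda(y,\vxi;x,\vm)$.

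For the triangle inequality, take $\ket{x,\vm}$, $\ket{w,\vla}$, $\ket{y,\vxi}$ in $\cE_\Lambda$. Concatenating an optimal walk from $x$ to $w$ visiting all of $W_{\vm,\vla}$ with an optimal walk from $w$ to $y$ visiting all of $W_{\vla,\vxi}$ yields a nearest-neighbor walk from $x$ to $y$ of length $L_\Lambda(x,\vm;w,\vla)+L_\Lambda(w,\vla;y,\vxi)$. This walk visits every site in $W_{\vm,\vla}\cup W_{\vla,\vxi}$; the key observation is $W_{\vm,\vxi}\subseteq W_{\vm,\vla}\cup W_{\vla,\vxi}$, since if $\vm(u)\neq\vxi(u)$ then $\vla(u)$ cannot equal both. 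Therefore the concatenated walk is admissible for the pair $(x,\vm;y,\vxi)$, giving $L_\Lambda(x,\vm;y,\vxi)\le L_\Lambda(x,\vm;w,\vla)+L_\Lambda(w,\vla;y,\vxi)$.

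Next, if $L_\Lambda(x,\vm;y,\vxi)=0$ then an admissible walk of length zero must start and end at the same vertex, so $x=y$; moreover it visits no site other than $x$, so $W_{\vm,\vxi}\subseteq\{x\}$, i.e. $\vm(w)=\vxi(w)$ for $w\neq x$. Conversely, if $x=y$ and $\vm(w)=\vxi(w)$ for $w\neq x$, then $W_{\vm,\vxi}\subseteq\{x\}$ and the trivial walk at $x$ is admissible, so $L_\Lambda(x,\vm;y,\vxi)=0$. For the final inequality $L_\Lambda(x,\vm;y,\vxi)\ge\Upsilon_\Lambda(x,\vm;y,\vxi)=\max\{\|x-y\|,R_{\vm|\vxi}(x),R_{\vxi|\vm}(y)\}$, fix an optimal admissible walk $\gamma$ of length $L:=L_\Lambda(x,\vm;y,\vxi)$ from $x$ to $y$. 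Since $\gamma$ is a path from $x$ to $y$, its length is at least the graph distance, so $L\ge\|x-y\|$. For any site $u$ with $\vm(u)>0$ and $\vm(u)\neq\vxi(u)$, we have $u\in W_{\vm,\vxi}$, so $\gamma$ visits $u$; hence $\gamma$ contains a sub-walk from $x$ to $u$, giving $L\ge\|x-u\|$, and taking the maximum over such $u$ yields $L\ge R_{\vm|\vxi}(x)$. Symmetrically, using that $\gamma$ also passes through every $u$ with $\vxi(u)>0$, $\vxi(u)\neq\vm(u)$ and that $\gamma$ reaches $y$, we get $L\ge R_{\vxi|\vm}(y)$. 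Combining the three bounds gives $L\ge\Upsilon_\Lambda(x,\vm;y,\vxi)$.

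The routine part is checking the axioms; the only point requiring a moment's thought is the set inclusion $W_{\vm,\vxi}\subseteq W_{\vm,\vla}\cup W_{\vla,\vxi}$ in the triangle inequality and the observation that an admissible walk, by definition, yields a sub-walk to each mandatory site — neither is a genuine obstacle.
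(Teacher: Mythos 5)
Your proof is correct and follows essentially the same route as the paper's: the concatenation argument with the inclusion $W_{\vm,\vxi}\subseteq W_{\vm,\vla}\cup W_{\vla,\vxi}$ is exactly the paper's proof of the triangle inequality, and the remaining points (symmetry, the vanishing set, and $L_\Lambda\ge\Upsilon_\Lambda$ via sub-walks to the mandatory sites) are the steps the paper declares clear from the definitions, which you have simply spelled out.
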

\begin{proof}
That $L_\Lambda\ge \Upsilon_\Lambda$ is clear from the definitions.  Symmetry and positivity of $L_\Lambda$ are also clear, as is the condition for vanishing of $L_\Lambda$.  To see the triangle inequality, note that if $W_1$ is a minimal length walk from $x$ to $y$ visiting all sites in $W_{\vm,\vxi}$ and $W_2$ is a minimal length walk from $y$ to $z$ visiting all sites in $W_{\vxi,\vla}$ then the concatenation $W$ of $W_1$ and $W_2$ is a walk from $x$ to $z$ which visits every site $u$ such that 
	 $\vm(u)\neq \vxi(u)$ or $\vxi(u) \neq \vla(u).$
	Since one of these two conditions must hold at a site with $\vm(u)\neq \vla(u)$, we see that $W$ visits every site of $W_{\vm,\vla}$ and thus 
	\begin{multline}
		L_{\Lambda}(x,\vm;z,\vla) \ \le \ \text{length of }W \ = \ \text{length of }W_1 + \text{length of }W_2 \\  = \ L_\Lambda(x,\vm;y,\vxi) + L_\Lambda(y,\vxi;z,\vla).\qedhere 
	\end{multline} 
\end{proof}

We define a metric on $\cE_\Lambda$ by
\be \label{eq:dLambda} d_\Lambda(x,\vm;y,\vxi) \ := \ L_\Lambda(x,\vm;y,\vxi) + r_\Lambda(\vm,\vxi).\ee 
Note that this is a metric, since if $L_\Lambda(x,\vm;y,\vxi)=0$ but $\vm\neq \vxi$, then $x=y$ and $\vm(x)\neq \vxi(x)$ so $r_\Lambda(\vm,\vxi)\neq 0$. 
Since $\Upsilon_\Lambda(x,\vla;y,\vxi) + \abs{\sqrt{N_\Lambda(\vla)}-\sqrt{N_\Lambda}(\vxi)} \le d_\Lambda(x,\vm;y,\vxi)$,  Lem.\ \ref{lem:expsumbound}  immediately implies the following
\begin{lemma}\label{lem:expsumbound2}
For any $\mu >0$, $\alpha \in \bbR$ and $0<\epsilon <\mu $ there is $C_{\epsilon,\mu,\alpha} < \infty$ such that
\begin{equation}\label{eq:L+rsumbound}
	\sum_{\vla \in \cE_\Lambda} \left( N_\Lambda(\vla) \vee 1 \right )^\alpha  \e^{-\mu d_\Lambda(x,\vla;y,\vxi) } \
  \le \ C_{\epsilon,\mu,\alpha} \left (N_\Lambda(\vxi)\vee 1 \right)^{\alpha + \frac{1}{2}}  \ \e^{-(\mu-\epsilon) \norm{x-y}}. 
\end{equation}
for any $\Lambda \subset \bbZ^d$, $x,y\in \Lambda$, and $\vxi \in \cM_\Lambda$.
\end{lemma}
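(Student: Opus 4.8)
The plan is to obtain this as an immediate consequence of Lemma~\ref{lem:expsumbound} by comparing the exponents term by term. The one point that needs checking is the pointwise inequality
\be
d_\Lambda(x,\vla;y,\vxi) \ \ge \ \Upsilon_\Lambda(x,\vla;y,\vxi) + \abs{\sqrt{N_\Lambda(\vla)}-\sqrt{N_\Lambda(\vxi)}},
\ee
valid for all $\ket{x,\vla},\ket{y,\vxi}\in\cE_\Lambda$; once this is in hand the rest is formal.

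To prove the pointwise bound I would split $d_\Lambda=L_\Lambda+r_\Lambda$ according to its definition \eqref{eq:dLambda}. Proposition~\ref{prop:LUps} already supplies $L_\Lambda(x,\vla;y,\vxi)\ge\Upsilon_\Lambda(x,\vla;y,\vxi)$, so it remains only to see that $r_\Lambda(\vla,\vxi)\ge\abs{\sqrt{N_\Lambda(\vla)}-\sqrt{N_\Lambda(\vxi)}}$. For this, subadditivity of $t\mapsto\sqrt{t}$ gives $r_\Lambda(\vla,\vxi)=\sum_{w}\sqrt{\abs{\vla(w)-\vxi(w)}}\ge\sqrt{\sum_{w}\abs{\vla(w)-\vxi(w)}}\ge\sqrt{\abs{N_\Lambda(\vla)-N_\Lambda(\vxi)}}$, and then the elementary estimate $\abs{\sqrt a-\sqrt b}\le\sqrt{\abs{a-b}}$ for $a,b\ge0$ (since $(\sqrt a-\sqrt b)^2=a+b-2\sqrt{ab}\le a+b-2(a\wedge b)=\abs{a-b}$) completes it. This is exactly the chain of inequalities already recorded in \eqref{eq:rlower}, so no new work is involved.

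Given the pointwise bound, multiplying by the nonnegative weight $(N_\Lambda(\vla)\vee1)^\alpha$ and summing over $\vla$ yields
\be
\sum_{\vla\in\cE_\Lambda}(N_\Lambda(\vla)\vee1)^\alpha\e^{-\mu d_\Lambda(x,\vla;y,\vxi)}\ \le\ \sum_{\vla\in\cE_\Lambda}(N_\Lambda(\vla)\vee1)^\alpha\e^{-\mu\left(\Upsilon_\Lambda(x,\vla;y,\vxi)+\abs{\sqrt{N_\Lambda(\vla)}-\sqrt{N_\Lambda(\vxi)}}\right)},
\ee
and the right-hand side is at most $C_{\mu,\epsilon,\alpha}(N_\Lambda(\vxi)\vee1)^{\alpha+\frac12}\e^{-(\mu-\epsilon)\norm{x-y}}$ by Lemma~\ref{lem:expsumbound} (applied with the same $\mu$, $\epsilon$, $\alpha$). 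Setting $C_{\epsilon,\mu,\alpha}:=C_{\mu,\epsilon,\alpha}$ gives \eqref{eq:L+rsumbound}. There is essentially no obstacle here: the substance of the lemma is entirely contained in Lemma~\ref{lem:expsumbound}, and the only genuine — and elementary — step is the square-root comparison that bounds $r_\Lambda$ from below by $\abs{\sqrt{N_\Lambda(\vla)}-\sqrt{N_\Lambda(\vxi)}}$.
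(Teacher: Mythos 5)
Your proposal is correct and follows the paper's own route: the paper likewise deduces the lemma directly from Lemma~\ref{lem:expsumbound} via the pointwise inequality $\Upsilon_\Lambda(x,\vla;y,\vxi) + \abs{\sqrt{N_\Lambda(\vla)}-\sqrt{N_\Lambda(\vxi)}} \le d_\Lambda(x,\vla;y,\vxi)$, which you verify in slightly more detail (using Prop.~\ref{prop:LUps} and the square-root comparison underlying eq.~\eqref{eq:rlower}) than the paper, which simply asserts it.
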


\subsection{Geometric Resolvent Identities}\label{sec:GRI}Given  $\cS \subset \cE_\Lambda$ let $P_{\cS}$ denote the projection onto the states in $\cS$, 
\be P_{\cS} \ := \ \sum_{\ket{x,\vm}\in \cS}  \ket{x,\vm}\bra{x,\vm}.\ee
Let
\be H_{\cS} \ = \ P_{\cS} H_\Lambda P_{\cS}, 
\ee
and
\be  \quad G_{\cS}(x,\vec{m};y,\vec{\xi};z) \ := \ \begin{cases} \dirac{x,\vm}{\left(H_{\cS} - z\right)^{-1}}{y,\vxi} & \text{if } \ket{x,\vm} \in \cS \text{ and } \ket{y,\vxi} \in \cS, \\
 0 & \text{otherwise.}	
 \end{cases}
\ee
Note that the Green's function $G_{\cS}$ is defined on all of $\cE_\Lambda^2$, but vanishes off of $\cS^2$. 

In general we shall be interested in two types of subsets $\cS$, those that come from restricting the particle position and those that come from restricting the number of oscillator excitations:

\subsubsection*{Restricting the particle position}  Given $\Gamma \subset \Lambda$ let $P_\Gamma$ denote the projection onto states with the particle in $\Gamma$, 
\be P_\Gamma \ := \ P_{\cE^\Gamma_\Lambda} \ = \  \sum_{\substack{x\in \Gamma \\ \vm \in \cM_\Lambda}}  \ket{x,\vm}\bra{x,\vm} ,\ee
where 
\be \cE^\Gamma_\Lambda \ = \ \setb{\ket{x,\vm}}{x\in \Gamma \quad \text{and} \quad \vm\in \cM_\Lambda}.\ee
Given disjoint sets $\Lambda_1, \Lambda_2 \subset \Lambda$, let
\be    H_{\Lambda_1\oplus \Lambda_2} \ := \ H_{\cE^{\Lambda_1}_\Lambda}+ H_{\cE^{\Lambda_2}_\Lambda} \ = \ 
                P_{\Lambda_1 } H_{\Lambda} P_{\Lambda_1} + P_{\Lambda_2 } H_{\Lambda} P_{\Lambda_2} ,
      \ee
and 
\begin{multline}
	G_{\Lambda_1 \oplus \Lambda_2}(x,\vm;y,\vxi;z) 
     \ := \  \dirac{ x,\vm}{ \left(H_{\Lambda_1\oplus \Lambda_2} - z\right)^{-1}}{y,\vxi}
     \\ = \ G_{\cE^{\Lambda_1}_\Lambda}(x,\vm;y,\vxi;z)  + G_{\cE^{\Lambda_2}_\Lambda}(x,\vm;y,\vxi;z).
\end{multline}  
Furthermore, we set
\be \Delta_{\Lambda_1 \oplus  \Lambda_2}  
                  \ := \ P_{\Lambda_1 }  \Delta_\Lambda P_{\Lambda_2} + P_{\Lambda_2 } \Delta_\Lambda P_{\Lambda_1}  
       \ = \ \Delta_{\Lambda_1} \oplus \Delta_{\Lambda_2},\ee
and define the remainder, 
\be T_{ \Lambda_1;\Lambda_2  } \ := \ \Delta_\Lambda-  \Delta_{ \Lambda_1 \oplus \Lambda_2 } . \ee

The second resolvent identity, applied to $H_{\Lambda_1 \oplus \Lambda_2}$ and $H_{\Lambda}$, yields the following geometric resolvent identities:
\begin{equation}
	\label{gre}
   G_{\Lambda}(x,\vm;y,\vxi;z) \ = \ G_{\Lambda_1 \oplus \Lambda_2}(x,\vm;y,\vxi;z) \ - \gamma \dirac{x,\vm}{\frac{1}{H_{\Lambda_1\oplus \Lambda_2 -z} }T_{\Lambda_1;\Lambda_2} \frac{1}{H_\Lambda -z}}{y,\vxi},
\end{equation}
\begin{equation}
	\label{gre2}
   G_{\Lambda}(x,\vm;y,\vxi;z) 
   =G_{\Lambda_1 \oplus \Lambda_2}(x,\vm;y,\vxi;z) \ - \gamma \dirac{x,\vm}{\frac{1}{H_\Lambda -z}T_{\Lambda_1;\Lambda_2}\frac{1}{H_{\Lambda_1\oplus \Lambda_2 -z} }}{y,\vxi}.  
\end{equation}  

\subsubsection*{Restricting oscillator excitations} Recall that $\cM_{\Lambda}^{(k)}$ denotes the set of all oscillator configurations $\vm$ with total excitation $N_\Lambda(\vm)= k$ \tem see eq.\eqref{eq:cMk}.
Let $\cE_\Lambda^{(k)}$ denote the basis states of $\cH_\Lambda$ with $N_\Lambda(\vm)= k$ ,
\be  \cE_{\Lambda}^{(k)} \ := \ \setb{ \ket{ x, \bm{m}} \in \cE_\Lambda }{  x\in 
	\Lambda  ;  \vm \in \cM_{\Lambda}^{(k)}}.
\ee
The associated projection operators will be denoted as, 
\be  P_\Lambda^{(k;\text{in})} \ := \ P_{\cE_{\Lambda}^{(k)}} \quad 
\text{
and } \quad 
P_\Lambda^{(k;\text{out})} \ := \ 1 - P_\Lambda^{(k)} \ = \ P_{\cE_\Lambda \setminus \cE_\Lambda^{(k)}}.\ee
The $k^{\mathrm{th}}$-band Hilbert space is the range of the $k^{th}$ in-band projection,
\be  \cH_\Lambda^{(k)} \ := \ \ran P_\Lambda^{(k;\text{in})}  \ = \  \overline{\operatorname{span} \, \cE_\Lambda^{(k)}}; \ee
similarly the perpendicular Hilbert space is the range of the complementary operator,
\be  \cH_\Lambda^{(k);\perp} \ := \ \ran P_\Lambda^{(k;\text{out})}  \ = \  \overline{\operatorname{span} \,(\cE_\Lambda \setminus \cE_\Lambda^{(k)})}. \ee

The $k^{\mathrm{th}}$-band restricted Hamiltonian is the Hamiltonian obtained by isolating the $k^{\mathrm{th}}$ band from the remainder of the system:
\be\label{HLambdan}     
H_{\Lambda}^{(k)}  
	\ := \   P_{\Lambda}^{(k;\text{in})} H_{\Lambda}   P_{\Lambda}^{(k;\text{in})}  
       + P_{\Lambda}^{(k;\text{out})} H_{\Lambda}  P_{\Lambda}^{(k;\text{out})} \ = \ H_{\cE_{\Lambda}^{(k)}} + H_{\cE_\Lambda \setminus \cE_\Lambda^{(k)}}.
\ee
The subspaces $\cH_\Lambda^{(k)}$ and $\cH_\Lambda^{(k);\perp}$ are invariant under $H_\Lambda^{(k)}$.  We denote the restrictions of $H_\Lambda^{(k)}$ to these subspaces by
\be  H_\Lambda^{(k;\text{in})} \ := \ \left . H_\Lambda^{(k)}\right |_{\cH_\Lambda^{(k)}} \ = \ H_{\cE_{\Lambda}^{(k)}} \quad \text{and} \quad H_\Lambda^{(k;\text{out})} \ := \ \left . H_\Lambda^{(k)}\right |_{\cH_\Lambda^{(k);\perp}} \ = \ H_{\cE_\Lambda \setminus \cE_\Lambda^{(k)}}.\ee
Note that
\be \sigma(H_\Lambda^{(k;\text{in})}) \subset \bI_k \quad \text{and} \quad \sigma(H_\Lambda^{(k;\text{out})}) \subset \bigcup_{j\neq k} \bI_j,\ee
where $\bI_k$ denotes the $k^{\text{th}}$ band as defined in eq.\ \eqref{eq:bands} (see Theorem \ref{thm:specCT} below).  

The Green's functions corresponding to the $k^{th}$-band restricted Hamiltonian is 
\begin{multline}
	G_{\Lambda}^{(k)}(x,\vm;y,\vxi;z) 
	\ := \  \dirac{ x,\vm }{ \left(H_{\Lambda}^{(k)} - z\right)^{-1}}{y,\vxi} \\ = \ G_{\Lambda}^{(k;\text{in})}(x,\vm;y,\vxi;z) + G_{\Lambda}^{(k;\text{out})}(x,\vm;y,\vxi;z) ,
\end{multline} where we have introduced the Greens functions of $H_\Lambda$ restricted to the $k^{th}$ band projection and its complement:
\be G_{\Lambda}^{(k;\text{in})}(x,\vm;y,\vxi;z)
	\ :=  \ \dirac{ x,\vm }{ \left(H_{\Lambda}^{(k;\text{in})} - z\right)^{-1}}{y,\vxi} \ = \ G_{\cE^{(k)}_\Lambda}(x,\vm;y,\vxi;z)
\ee
and 
\be G_{\Lambda}^{(k;\text{out})}(x,\vm;y,\vxi;z)
	\ :=  \ \dirac{ x,\vm }{ \left(H_{\Lambda}^{(k;\text{out})} - z\right)^{-1}}{y,\vxi} \ = \ G_{\cE_\Lambda \setminus \cE_\Lambda^{(k)}}(x,\vm;y,\vxi;z) .
\ee

Note that the $k^{th}$ band restricted Hamiltonian is of the form \be  H_{\Lambda}^{(k)} \ = \ \gamma  \Delta_{\Lambda}^{(k)}  + \omega H_{\mathrm{ph}} + V_\Lambda ,\ee
with the reduced hopping operator,
\be  \Delta_{\Lambda}^{(k)} 
	= P_{\Lambda}^{(k)}  \Delta_\Lambda  P_{\Lambda}^{(k)}                
	+   P_{\Lambda}^{(k);\perp} \Delta_\Lambda P_{\Lambda}^{(k);\perp}.  
\ee
The remainder of the Laplacian
\be  T_\Lambda^{(k)} \ := \ \Delta_\Lambda- \Delta_\Lambda^{(k)} \ = \ H_\Lambda - H_{\Lambda}^{(k)}\ee
is the hopping operator connecting the $k^{\mathrm{th}}$ band with the rest of the system. Analogous to eqs.\ (\ref{gre}, \ref{gre2}) we have,
\begin{equation}
\label{firstres}
G_{\Lambda}(x,\vm;y,\vxi;z) 
	\ = \ G_{\Lambda}^{(k)}(x,\vm;y,\vxi;z)\\ 
	 - \gamma
	 \dirac{x,\vm}{\frac{1}{H_\Lambda^{(k)}-z}T_{\Lambda}^{(k)} \frac{1}{H_\Lambda -z}}{y,\vxi} 
\end{equation}
and 
\begin{equation}
	\label{firstres2}
G_{\Lambda}(x,\vm;y,\vxi;z) 
	\ = \ G_{\Lambda}^{(k)}(x,\vm;y,\vxi;z)
	 - \gamma
	 \dirac{x,\vm}{ \frac{1}{H_\Lambda -z}T_{\Lambda}^{(k)}\frac{1}{H_\Lambda^{(k)}-z}}{y,\vxi}	
\end{equation}

A typical application of eq.\ \eqref{firstres} will involve states $\ket{x,\vm}\not \in \cE_\Lambda^{(k)}$ and $\ket{y,\vxi}  \in \cE_{\Lambda}^{(k)}$.  In this case, the first term on the right hand side vanishes (since $\cH_{\Lambda}^{(k)}$ is invariant under $H_\Lambda^{(k)}$) and eq.\ \eqref{firstres} reduces to
\begin{equation}
	\label{firstresout}
G_{\Lambda}(x,\vm;y,\vxi;z)  \  = \ - \gamma
	 \dirac{x,\vm}{\frac{1}{H_\Lambda^{(k;\text{out})}-z}T_{\Lambda}^{(k)} \frac{1}{H_\Lambda -z}}{y,\vxi}. 
\end{equation}

\subsection{Fractional moment bound} \label{sec:FM}
The proof of Theorem \ref{thm:main} relies on two  \emph{a priori} estimates on the Green's function.  The first of these is the \emph{fractional moment bound}, generalized to this context from the adaptation of the Aizenman-Molchanov approach to continuum Schr\"odinger operators \cite{Aizenman2006}. Although the Green's function $G_{\cS}(x,\vm;y,\vxi;z)$ is singular as $\Im z \rightarrow 0$, the disorder in the random potential mollifies averages of the Green's function raised to a fractional power:
\def\fmc{\kappa}
\begin{lemma}\label{lem:apriori}
There is $\kappa <\infty$ such that for any $s< 1$,
\be \label{eq:fmb} \bbE_{\{x,y\}}(\abs{G_{\cS}(x,\vm;y,\vxi;z)}^s) \ \le \ \frac{\fmc}{1-s}\ee
for any $\cS \subset \cE_\Lambda$.
\end{lemma}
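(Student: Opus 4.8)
The plan is to recognize $H_{\cS}$ as a random Schr\"odinger operator whose disorder enters through a direct sum of mutually orthogonal single--site couplings, and then to run the Aizenman--Molchanov fractional--moment \emph{a priori} argument in the form developed for such operators in \cite{Aizenman2006}. In the basis $\cE_\Lambda$ the potential acts as $V_\Lambda=\sum_{x\in\Lambda}v_x\,P_x$, where $P_x=\sum_{\vm}\ket{x,\vm}\bra{x,\vm}$ projects onto states with the particle at $x$; since $P_{\cS}$ and every $P_x$ are diagonal in $\cE_\Lambda$ they commute, so $P_x^{\cS}:=P_{\cS}P_x$ is again an orthogonal projection and
\[
H_{\cS}\ =\ \widehat H_x\ +\ v_x\,P_x^{\cS},\qquad \widehat H_x:=P_{\cS}\bigl(\gamma\Delta_\Lambda+\omega H_\Lambda^{(\mathrm{ph})}\bigr)P_{\cS}+\sum_{x'\neq x}v_{x'}P_{x'}^{\cS}\ \ \text{independent of }v_x .
\]
Hence each $v_x$ couples only its own block, which is precisely the setting in which averaging over a single potential mollifies fractional moments of the resolvent.

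Two standard ingredients do the work. First, Krein's (Feshbach--Schur) formula applied to the splitting $\ran P_{\cS}=\ran P_x^{\cS}\oplus\ran(P_{\cS}-P_x^{\cS})$ gives, on $\ran P_x^{\cS}$, the identity $P_x^{\cS}(H_{\cS}-z)^{-1}P_x^{\cS}=\bigl(v_x+B_x(z)\bigr)^{-1}$, where $B_x(z)$ is assembled from $\widehat H_x$ alone --- hence independent of $v_x$ --- and, because $\widehat H_x$ is self--adjoint, $\operatorname{Im}B_x(z)$ is strictly negative (resp.\ positive) on $\ran P_x^{\cS}$ when $\operatorname{Im}z>0$ (resp.\ $<0$), with $\|(v_x+B_x(z))^{-1}\|\le|\operatorname{Im}z|^{-1}$. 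Second, the spectral--averaging bound: for any operator $B$ with $\operatorname{Im}B$ strictly definite, any unit vectors $\phi,\psi$, and $0<s<1$,
\[
\int_{\bbR}\bigl|\langle\phi|(v+B)^{-1}|\psi\rangle\bigr|^{s}\,\rho(v)\,\di v\ \le\ \frac{\kappa_1}{1-s},
\]
with $\kappa_1$ depending only on $\|\rho\|_\infty$ and $V_+$. This follows from the case $\phi=\psi$ by polarization; for $\phi=\psi$ the function $v\mapsto\langle\phi|(v+B)^{-1}|\phi\rangle$ maps a half--plane into a half--plane, so its boundary values obey the Kolmogorov weak--$L^1$ bound $\bigl|\{v:\bigl|\langle\phi|(v+B)^{-1}|\phi\rangle\bigr|>t\}\bigr|\le C/t$ on compact sets, which together with the elementary inequality $\int_{|v|\le L}f^s\,\di v\le 2L^{1-s}/(1-s)$ (valid for any weak--$L^1$ function $f$) yields the claim.

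When $x=y$ the result is immediate: $G_{\cS}(x,\vm;x,\vxi;z)=\langle x,\vm|(v_x+B_x(z))^{-1}|x,\vxi\rangle$, so integrating out $v_x$ and applying the spectral--averaging bound with $\phi=\ket{x,\vm}$, $\psi=\ket{x,\vxi}$, $B=B_x(z)$ gives $\bbE_{\{x,y\}}(|G_{\cS}|^s)\le\kappa_1/(1-s)$. When $x\neq y$ one decouples both blocks: with $\ket{x,\vm}\in\ran P_x^{\cS}$ and $\ket{y,\vxi}\in\ran(P_{\cS}-P_x^{\cS})$, the second resolvent identity for $H_{\cS}=\widehat H_x+v_xP_x^{\cS}$ writes $G_{\cS}(x,\vm;y,\vxi;z)$ as the pairing of $\ket{x,\vm}$ with $(v_x+B_x(z))^{-1}$ applied to a vector of $\ran P_x^{\cS}$ built from the hopping $\gamma P_x^{\cS}\Delta_\Lambda$ and the compression of $(H_{\cS}-z)^{-1}$ to states whose particle avoids $x$; a second Krein decoupling at $y$ exhibits the residual dependence on $v_y$ through a dissipative factor $(v_y+\widetilde B_y(z))^{-1}$, and two spectral--averaging bounds --- in $v_x$ and in $v_y$ --- control $\bbE_{\{x,y\}}(|G_{\cS}(x,\vm;y,\vxi;z)|^s)$. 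The delicate point, which I expect to be the main obstacle, is to carry this out so that the bound is \emph{uniform as $\operatorname{Im}z\to0$}: estimating the intermediate vector by Cauchy--Schwarz instead of by a genuine joint $(v_x,v_y)$--average costs a factor $(\operatorname{Im}z)^{-s}$, so one must retain the full rational (Herglotz--type) dependence of $G_{\cS}$ on $v_x$ and $v_y$ simultaneously and --- in dimension $D\ge2$, where removing a site does not disconnect $\Lambda$ --- organize the resulting two--site (``self--energy'') expansion with care. This uniformity is exactly what the a priori fractional--moment bound of \cite{Aizenman2006} (see also \cite{aizenman2015random}) supplies; its proof goes through once the structure $H_{\cS}=\widehat H_x+v_xP_x^{\cS}$ is in hand, and it produces a bound of the stated form $\kappa/(1-s)$ with $\kappa$ independent of $z$, of $\cS$, of $\Lambda$, and of the basis labels.
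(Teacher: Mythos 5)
You correctly identify the general mechanism (the \emph{a priori} fractional-moment/spectral-averaging machinery of \cite{Aizenman2006}); your diagonal case $x=y$ is fine, and your conversion of a weak-$L^1$ tail estimate into $\bbE(\abs{G}^s)\le C^s/(1-s)$ is exactly the layer-cake step the paper uses. The genuine gap is the off-diagonal case $x\neq y$, and it sits precisely where you flag it. The route you sketch --- Krein decoupling at $x$, a resolvent identity, a second Krein decoupling at $y$, then ``two spectral-averaging bounds, in $v_x$ and in $v_y$'' --- does not close: after averaging a fractional power over $v_x$, the remaining quantity is no longer of Herglotz type in $v_y$, so the one-variable averaging lemma cannot be applied a second time, and your fallback (Cauchy--Schwarz on the intermediate vector) costs $(\Im z)^{-s}$, as you yourself note. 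Appealing to \cite{Aizenman2006} with the remark that its proof ``goes through once the structure $H_{\cS}=\widehat H_x+v_xP_x^{\cS}$ is in hand'' does not repair this, because that decomposition exhibits only one random coupling; uniformity in $\Im z$ for two sites has to come from a genuinely \emph{joint} bound in $(v_x,v_y)$, which you never state or verify.

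The paper supplies exactly that ingredient, with no expansion at all. One observes the operator-norm identity $\abs{G_{\cS}(x,\vm;y,\vxi;z)}=\norm{B_{x,\vm}P_{\cS_x}(H_{\cS}-z)^{-1}P_{\cS_y}B_{y,\vxi}}$, where $B_{x,\vm}=\ket{x,\vm}\bra{x,\vm}$ and $B_{y,\vxi}=\ket{y,\vxi}\bra{y,\vxi}$ are rank one (hence Hilbert--Schmidt) and $P_{\cS_x}$, $P_{\cS_y}$ are the nonnegative projections multiplying $v_x$ and $v_y$ in the Hamiltonian. Then the two-parameter weak-$L^1$ bound of \cite[Proposition 3.2]{Aizenman2006} (Lem.\ \ref{dissipative schrodinger}), applied with $U_1=P_{\cS_x}$, $U_2=P_{\cS_y}$ and these rank-one test operators, yields $\Pr_{\{x,y\}}\left[\abs{G_{\cS}}>t\right]\le C/t$ with $C$ depending only on $\norm{\rho}_\infty$ and $V_+$, uniformly in $z$, $\cS$, $\Lambda$ and the basis labels; the layer-cake computation then gives the lemma (the case $x=y$ uses the one-parameter version, Lem.\ \ref{dissipative}). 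If you replace your sketched two-site ``self-energy expansion'' by this identification and invoke the joint $(v_1,v_2)$ averaging lemma explicitly, your argument becomes the paper's proof.
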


We prove this bound in Appendix \ref{sec:fmt} below. One key, if simple, corollary of this result is the following ``all-for-one lemma'' which shows that a bound on one fractional moment bounds all other fractional moments: 
\begin{lemma}\label{lem:allforone} If  for some $0<s_0<1$ we have the bound
\be \Ev{\abs{G_{\mc{S}}(x,\vm;y,\vxi,z)}^{s_0}} \ \le \ A \e^{- \Phi(x,\vm;y,\vxi)}\ee
with $A<\infty$ and $\Phi$ a real valued function on $\mc{S}\times \mc{S}$, 
then for all $0<s <1$ we have
\be \Ev{\abs{G_{\mc{S}}(x,\vm;y,\vxi,z)}^{s}} \ \le \ A_s \e^{-\alpha_s\Phi(x,\vm;y,\vxi)}\ee
with $A_s <\infty $ and $\alpha_s>0$ continuous functions of $s$ on $(0,1)$.
\end{lemma}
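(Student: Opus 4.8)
The plan is to treat the two regimes $0<s\le s_0$ and $s_0<s<1$ separately: the first by concavity (Jensen's inequality), the second by interpolating, via H\"older's inequality, between the assumed decaying bound at exponent $s_0$ and the \emph{a priori} uniform bound of Lemma~\ref{lem:apriori}.

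For $0<s\le s_0$ I would simply use that $t\mapsto t^{s/s_0}$ is concave on $[0,\infty)$, so that by Jensen's inequality
\[
\Ev{\abs{G_{\mc{S}}(x,\vm;y,\vxi;z)}^s}\ \le\ \left(\Ev{\abs{G_{\mc{S}}(x,\vm;y,\vxi;z)}^{s_0}}\right)^{s/s_0}\ \le\ A^{s/s_0}\,\e^{-\frac{s}{s_0}\Phi(x,\vm;y,\vxi)},
\]
which gives the claim on this range with $\alpha_s=s/s_0$ and $A_s=A^{s/s_0}$.

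For $s_0<s<1$ I would fix the auxiliary exponent $s_1=\tfrac12(1+s)$, so that $s_0<s<s_1<1$, and conjugate exponents $p,q$ determined by $\tfrac1p=\tfrac{s_1-s}{s_1-s_0}$ and $\tfrac1q=\tfrac{s-s_0}{s_1-s_0}$; a one-line check confirms $\tfrac1p+\tfrac1q=1$ and $\tfrac{s_0}{p}+\tfrac{s_1}{q}=s$. Writing $\abs{G_{\mc{S}}}^{s}=\abs{G_{\mc{S}}}^{s_0/p}\abs{G_{\mc{S}}}^{s_1/q}$ and applying H\"older's inequality with respect to the disorder average,
\[
\Ev{\abs{G_{\mc{S}}(x,\vm;y,\vxi;z)}^s}\ \le\ \left(\Ev{\abs{G_{\mc{S}}(x,\vm;y,\vxi;z)}^{s_0}}\right)^{1/p}\left(\Ev{\abs{G_{\mc{S}}(x,\vm;y,\vxi;z)}^{s_1}}\right)^{1/q}.
\]
The first factor is at most $(A\,\e^{-\Phi})^{1/p}$ by hypothesis. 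For the second, the tower property of conditional expectation together with Lemma~\ref{lem:apriori} yields
\[
\Ev{\abs{G_{\mc{S}}(x,\vm;y,\vxi;z)}^{s_1}}\ =\ \Ev{\bbE_{\{x,y\}}\!\left(\abs{G_{\mc{S}}(x,\vm;y,\vxi;z)}^{s_1}\right)}\ \le\ \frac{\kappa}{1-s_1}\ =\ \frac{2\kappa}{1-s}
\]
(with $\{x,y\}$ replaced by $\{x\}$ when $x=y$). Combining the two factors gives the claim on this range with $\alpha_s=\tfrac1p=\tfrac{1-s}{1+s-2s_0}\in(0,1)$ and $A_s=A^{1/p}\,(2\kappa/(1-s))^{1/q}<\infty$.

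Finally I would check that the two prescriptions match continuously at $s=s_0$ — both give $\alpha_{s_0}=1$ and $A_{s_0}=A$ — and that on all of $(0,1)$ the resulting $\alpha_s\in(0,1]$ and $A_s\in(0,\infty)$ depend continuously on $s$; the only point needing a moment's thought is that $s_1-s_0=\tfrac12(1+s-2s_0)\ge\tfrac12(1-s_0)>0$ for every $s\in(s_0,1)$, so that $p$ and $q$ are well defined. I do not expect a genuine obstacle here: the entire content is the standard fractional-moment interpolation, and the one indispensable ingredient beyond bookkeeping is that passing to exponents \emph{larger} than $s_0$ cannot be done by concavity alone and must instead be supplied by the independent finiteness estimate of Lemma~\ref{lem:apriori}.
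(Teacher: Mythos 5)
Your proposal is correct and follows essentially the same route as the paper: the regime $s\le s_0$ is handled by the Jensen/H\"older step $\Ev{|G|^s}\le \Ev{|G|^{s_0}}^{s/s_0}$, and the regime $s_0<s<1$ by H\"older interpolation between the exponent $s_0$ and an auxiliary exponent (your $s_1=\tfrac12(1+s)$ is exactly the paper's choice $r=s+\tfrac{1-s}{2}$), with the larger moment controlled by the \emph{a priori} bound of Lem.\ \ref{lem:apriori}. The only cosmetic difference is that you spell out the tower property and the continuity check at $s=s_0$, which the paper leaves implicit.
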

\begin{proof}
This follows  from H\"older's inequality.  There are two cases:
\begin{enumerate}
	\item If $s\le s_0$, then $\Ev{\abs{G_{\mc{S}}(x,\vm;y,\vxi,z)}^{s}} \ \le \ \Ev{\abs{G_{\mc{S}}(x,\vm;y,\vxi,z)}^{s_0}}^{\frac{s}{s_0}} \ \le \ A^{\frac{s}{s_0}} \e^{-\frac{s}{s_0} \Phi(x,\vec{m};y,\vec{\xi})}.$
	\item If $s_0 <s <1$, then
	\begin{multline} \Ev{\abs{G_{\mc{S}}(x,\vm;y,\vxi,z)}^{s}} \ \le \ \Ev{\abs{G_{\mc{S}}(x,\vm;y,\vxi,z)}^{s_0}}^{\frac{r-s}{r-s_0}} \Ev{\abs{G_{\mc{S}}(x,\vm;y,\vxi,z)}^{r}}^{\frac{s-s_0}{r-s_0}}
	\\ \le \ 	\left [ \frac{\kappa}{1-r} \right ]^{\frac{s-s_0}{r-s_0}} A^{\frac{r-s}{r-s_0}} \e^{-\frac{r-s}{r-s_0} \Phi(x,\vm;y,\vxi)},
	\end{multline}
	for any $r\in (s,1)$.  To obtain an explicit bound we may take $r=s + \frac{1-s}{2}$. \qedhere
\end{enumerate}
\end{proof}

\subsection{Combes-Thomas bound}\label{sec:CT}
The second \emph{a prior} estimate is a Combes-Thomas type bound on the Green's function depending on the distance from $z$ to the spectrum of the operator.,The general estimate, Thm.\ \ref{thm:CT}, is stated and proved in App.\ \ref{CT section} below.  Here we derive as a corollary of that result the bounds used in the proof of Thm.\ \ref{thm:main}.  

Recall the definition \eqref{eq:dLambda} of the metric $d_\Lambda$ on $\cE_\Lambda$.
For $\cS_1,\cS_2 \subset \cE_\Lambda$, we take \be  d_\Lambda(\cS_1,\cS_2) \ = \inf_{\substack{\ket{x,\vm}\in \cS_1\\ \ket{y,\vxi}\in \cS_2}} d_\Lambda(x,\vm;y,\vxi).\ee
Then we have the following
\begin{theorem}
	\label{thm:specCT} Suppose $\delta=\omega- V_+- 4D\gamma >0$ and  that $\cS\subset \cE_\Lambda$ is such that $\cS\cap \cE_\Lambda^{(k)} = \emptyset$.  Then $\sigma(H_{\cS}) \cap \bI_k = \emptyset$ and  there is $\nu_{k;\gamma} >0$ such that for $\cS_1,\cS_2 \subset \cS$ and $z = E+\im \epsilon$ with $E\in \bI_k$ we have \be  \norm{P_{\cS_1} \frac{1}{H_{\cS} -z} P_{\cS_2}} \ \le \  \frac{2}{\delta} e^{-\nu_{k;\gamma} d_\Lambda(\cS_1,\cS_2)}. \ee
Furthermore, the constant of exponential decay $\nu_{k;\gamma}$ may be chosen so that for any $\mu >0$ there is a constant $C_\mu  <\infty$ such that
\be \label{eq:nubound} \nu_{k;\gamma}  \ \ge  \ \min \left ( \mu , \ C_\mu \frac{\delta }{\gamma} \frac{1}{k+1} \right ).\ee
\end{theorem}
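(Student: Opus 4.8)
The plan is to deduce Theorem~\ref{thm:specCT} from the general Combes--Thomas estimate, Theorem~\ref{thm:CT}, using two structural facts: a spectral gap separating $\bI_k$ from $\sigma(H_{\cS})$, and the decay of the hopping matrix elements supplied by Proposition~\ref{disp0}. \emph{The gap.} Write $H_{\cS}=\gamma\Delta_{\cS}+A$ with $\Delta_{\cS}=P_{\cS}\Delta_\Lambda P_{\cS}$ and $A=\omega P_{\cS}H_\Lambda^{(\mathrm{ph})}P_{\cS}+P_{\cS}V_\Lambda P_{\cS}$; the operator $A$ is diagonal in the basis $\cE_\Lambda$ with eigenvalue $\omega N_\Lambda(\vm)+v_x$ on $\ket{x,\vm}$, so since $\cS$ omits every state with $N_\Lambda(\vm)=k$ we have $\sigma(A)\subset\bigcup_{j\ne k}[\omega j,\omega j+V_+]$. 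Because $0\le\Delta_\Lambda\le 4D$, the term $\gamma\Delta_{\cS}$ is a positive perturbation of norm at most $4D\gamma$, i.e.\ $A\le H_{\cS}\le A+4D\gamma$; by monotonicity of eigenvalues every point of $\sigma(H_{\cS})$ lies within $[0,4D\gamma]$ above a point of $\sigma(A)$, hence $\sigma(H_{\cS})\subset\bigcup_{j\ne k}\bI_j$, which is at distance at least $\delta=\omega-V_+-4D\gamma$ from $\bI_k$ (with equality for the neighbouring bands $\bI_{k\pm1}$). Thus $\sigma(H_{\cS})\cap\bI_k=\emptyset$, and for $z=E+\im\epsilon$ with $E\in\bI_k$ one has $\dist(z,\sigma(H_{\cS}))\ge\delta$, so $\norm{(H_{\cS}-z)^{-1}}\le 1/\delta\le 2/\delta$; this is the asserted bound when $d_\Lambda(\cS_1,\cS_2)=0$.

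\emph{The Combes--Thomas conjugation.} For the decay I would specialize the argument of Theorem~\ref{thm:CT} to $H_{\cS}$ and the metric $d_\Lambda=L_\Lambda+r_\Lambda$: conjugate by $e^{\eta\phi}$, where $\phi=d_\Lambda(\,\cdot\,,\cS_2)\wedge d_\Lambda(\cS_1,\cS_2)$ is bounded, vanishes on $\cS_2$, equals $d_\Lambda(\cS_1,\cS_2)$ on $\cS_1$, and is $1$-Lipschitz for $d_\Lambda$. Since $A$ is diagonal it commutes with $e^{\eta\phi}$, so the only new term is $W(\eta)=\gamma\,(e^{\eta\phi}\Delta_{\cS}e^{-\eta\phi}-\Delta_{\cS})$, with entries $\gamma K_{x,\vm}^{y,\vxi}(e^{\eta(\phi(x,\vm)-\phi(y,\vxi))}-1)$. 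By eq.~\eqref{Kdef} the Laplacian couples $\ket{x,\vm}$ only to $\ket{y,\vxi}$ with $y\sim x$ and $\vm,\vxi$ agreeing off $\{x,y\}$; for such a pair $|\phi(x,\vm)-\phi(y,\vxi)|\le 1+\sqrt{|\vm(x)-\vxi(x)|}+\sqrt{|\vm(y)-\vxi(y)|}$ and $|K_{x,\vm}^{y,\vxi}|=|\langle\vm(x)|D(-\beta)|\vxi(x)\rangle|\,|\langle\vm(y)|D(\beta)|\vxi(y)\rangle|$. Using $|e^t-1|\le |t|e^{|t|}$ and then Proposition~\ref{disp0} with $p=1$ at the two sites $x,y$ — after the elementary bound $ab\le\frac{a^2}{4\mu}+\mu b^2$ to trade the $e^{\eta\sqrt{|n-m|}}$ weights for the $e^{\mu|\sqrt n-\sqrt m|}$ weights that appear there — together with the spectral gap of the previous paragraph (which suppresses intermediate states of large total occupation, those being far from $z$), one finds $W(\eta)$ is small; once $\norm{W(\eta)}\le\delta/2$ the operator $e^{\eta\phi}H_{\cS}e^{-\eta\phi}-z=(H_{\cS}-z)+W(\eta)$ is invertible with norm $\le 2/\delta$. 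Undoing the conjugation and using $e^{\eta\phi}P_{\cS_2}=P_{\cS_2}$ and $\norm{P_{\cS_1}e^{-\eta\phi}}\le e^{-\eta d_\Lambda(\cS_1,\cS_2)}$ gives the asserted bound with $\nu_{k;\gamma}=\eta$.

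\emph{The rate.} To obtain eq.~\eqref{eq:nubound} I would track how large $\eta$ may be taken. The constraint is $\gamma\,\eta\,\Theta_k(\eta)\le\delta/2$, where $\Theta_k(\eta)$ is the bound on the conjugated hopping produced above; for $\eta\le\mu$ the relevant displacement sums are controlled uniformly in $k$ by Proposition~\ref{disp0}, while the residual dependence on the occupations that can occur at band $k$ contributes a factor of order $k+1$, i.e.\ $\Theta_k(\mu)\le C_\mu(k+1)$. Hence one may take $\eta=\min(\mu,\delta/(2C_\mu\gamma(k+1)))$, and declaring $\nu_{k;\gamma}$ to be this value yields eq.~\eqref{eq:nubound} after renaming the constant; since $\mu>0$ was arbitrary, $\nu_{k;\gamma}$ can be chosen so that the estimate holds for every $\mu$.

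\emph{The main obstacle.} The genuinely delicate point — which lives inside Theorem~\ref{thm:CT} — is the control of $\Theta_k$: the hopping, though norm bounded by $4D\gamma$, has infinite range in Fock space, since one hop can change the oscillator occupations at two adjacent sites by arbitrary amounts, so a crude Schur estimate of $e^{\eta\phi}\Delta_{\cS}e^{-\eta\phi}$ diverges because the weighted displacement sums grow with the site occupation. The remedy is to balance the displacement decay of Proposition~\ref{disp0} against the band gap $\delta$, so that states of large total occupation, being far from $z$, are suppressed in the resolvent; it is precisely this trade-off that forces the decay rate to degrade like $(k+1)^{-1}$, as recorded in eq.~\eqref{eq:nubound}.
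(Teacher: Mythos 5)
Your spectral-gap paragraph is correct, and is in fact more explicit than the paper, whose entire proof of this theorem is to invoke the remark following Thm.~\ref{thm:CT} with $\dist(E,\sigma(H_{\cS}))\ge\delta$ and $|E|<(k+1)\omega$; just note that ``monotonicity of eigenvalues'' is literally a min--max argument, immediate for finite $\Lambda$ (compact resolvent) and requiring a short additional argument or a limiting procedure for infinite $\Lambda$, though the conclusion $\dist(\bI_k,\sigma(H_{\cS}))\ge\delta$ is right and is exactly what is needed for the prefactor $2/\delta$.

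The genuine gap is in your second and third paragraphs, where the smallness of the conjugation error is formulated as an operator-norm bound: you require $\norm{W(\eta)}\le\delta/2$ and assert a bound $\Theta_k(\mu)\le C_\mu(k+1)$ for the conjugated hopping. Neither can hold. The set $\cS$ excludes only band $k$, so it still contains states of arbitrarily large occupation, and the entrywise estimates for $W(\eta)$ grow with those occupations: Prop.~\ref{disp0} with $p=1$ already yields factors $(\vm(x)\vee1)^{1/4}(\vm(y)\vee1)^{1/4}$, and your AM--GM trade $\eta\sqrt{|n-m|}\le\mu\,|\sqrt n-\sqrt m|+\tfrac{\eta^2}{4\mu}(\sqrt n+\sqrt m)$ introduces an additional factor $\e^{c(\sqrt{\vm(x)}+\sqrt{\vm(y)})}$; consequently no Schur-type bound on $\norm{W(\eta)}$ uniform in the occupation content of $\cS$ (hence uniform in $\Lambda,\cS_1,\cS_2$) of the form $\gamma\eta\,C_\mu(k+1)$ is available, and the ``suppression by the gap'' you invoke cannot act on $\norm{W(\eta)}$ at all --- it can only act on $\norm{W(\eta)(H_{\cS}-z)^{-1}}$. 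That is precisely how the paper's Thm.~\ref{thm:CT} proceeds: the conjugation error is shown to be only \emph{relatively} bounded, $\norm{\delta H_\nu\psi}\le C\nu\left(a\norm{H_{\cS}\psi}+b\norm{\psi}\right)$ with $a\sim 1/\omega$, because the occupation factors are dominated by the number operator and hence by $H_{\cS}/\omega$; the Neumann series is then run for $\delta H_\nu(H_{\cS}-z)^{-1}$, and the admissible decay rate is governed by $\norm{H_{\cS}(H_{\cS}-z)^{-1}}\le 1+|E|/\delta\lesssim (k+1)\omega/\delta$. This energy factor --- not ``the occupations that can occur at band $k$,'' which do not occur in $\cS$ at all --- is the source of the $(k+1)^{-1}$ in eq.~\eqref{eq:nubound}. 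Your closing paragraph correctly identifies this obstacle, but the proposal never carries out the relative-boundedness step, and the quantitative claims substituted for it would fail as stated.
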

\begin{remark}\label{rem:specCT}
	A key consequence of eq.\ \eqref{eq:nubound} is that, given $\mu>0$ there is a constant $\wt{C}_\mu$ such that
	\be  \norm{P_{\cS_1} \frac{1}{H_{\cS} -z} P_{\cS_2}} \ \le \  \frac{2}{\delta} e^{-\mu d_\Lambda(\cS_1,\cS_2)} \ee 
	provided \be  \gamma \le \wt{C}_\mu \frac{\omega-V_+}{k+1}.\ee
	Indeed, we may take $\wt{C}_\mu = \frac{C_\mu}{4D C_\mu +\mu}.$
\end{remark}
\begin{proof}
This follows from the remark following Thm.\ \ref{thm:CT}, since $E < (k+1)\omega$ for $E\in \bI_k$.
 \end{proof}

\begin{corollary}\label{cor:specCT}Suppose  $\delta=\omega- V_+- 4D\gamma >0$ and that $\cS\subset \cE_\Lambda$  is such that $\cS\cap \cE_\Lambda^{(k)} = \emptyset$ and let $\nu=\nu_{\gamma,k}$ be as in Thm.\ \ref{thm:specCT}. Then there is $A_\nu <\infty$ such that for $z=E + \im \epsilon$ with $E\in \bI_k$ and $\cS_1,\cS_2 \subset \cS$ we have
\be \label{eq:specCT1} \abs{\dirac{x,\vm}{\Delta_\Lambda P_{\cS_1} \frac{1}{H_{\cS}-z} } {y,\vxi}} \ \le \ \frac{A_\nu}{\delta} Q(x,\vm) \e^{-\nu d_\Lambda(x,\vm;y,\vxi)} \ee
and
\be \label{eq:specCT2} \abs{\dirac{x,\vm}{\Delta_\Lambda P_{\cS_1} \frac{1}{H_{\cS}-z} P_{\cS_2}\Delta_\Lambda}{y,\vxi}} \ \le \ \frac{A_\nu}{\delta} Q(x,\vm)Q(y,\vxi) \e^{-\nu d_\Lambda(x,\vm;y,\vxi)} \ , \ee
where 
\be  Q(x,\vm) \ := \ 1 + \left ( \vm(x) \vee 1 \right )^{\frac{1}{4}} \sum_{u\sim x} \left (\vm(u) \vee 1\right )^{\frac{1}{4}}.\ee
\end{corollary}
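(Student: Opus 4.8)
The plan is to write the left-hand side of \eqref{eq:specCT1} as a sum over intermediate basis states and then insert the operator norm bound from Theorem \ref{thm:specCT}. Concretely, since $\Delta_\Lambda P_{\cS_1}$ has matrix elements $K^{u,\vla}_{x,\vm}$ supported on pairs $(x,\vm)$, $(u,\vla)$ with $u=x$ or $u\sim x$, I expand
\be
\dirac{x,\vm}{\Delta_\Lambda P_{\cS_1}\frac{1}{H_{\cS}-z}}{y,\vxi}
 \ = \ \sum_{\ket{u,\vla}\in \cS_1} K^{u,\vla}_{x,\vm} \dirac{u,\vla}{\frac{1}{H_{\cS}-z}}{y,\vxi}.
\ee
For each term, $|\langle u,\vla|(H_{\cS}-z)^{-1}|y,\vxi\rangle| \le \|P_{\{u,\vla\}} (H_{\cS}-z)^{-1} P_{\{y,\vxi\}}\| \le \tfrac{2}{\delta}e^{-\nu d_\Lambda(u,\vla;y,\vxi)}$ by Theorem \ref{thm:specCT}. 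The geometric content is then two-fold: first, $d_\Lambda(u,\vla;y,\vxi)\ge d_\Lambda(x,\vm;y,\vxi) - d_\Lambda(x,\vm;u,\vla)$ by the triangle inequality for $d_\Lambda$, and for the relevant $(u,\vla)$ (namely $u=x$ or $u\sim x$, with $\vla$ differing from $\vm$ only at sites $x$ and $u$) one has $d_\Lambda(x,\vm;u,\vla) = L_\Lambda(x,\vm;u,\vla)+r_\Lambda(\vm,\vla)$, which is bounded by a fixed constant (at most $2$) plus $\sqrt{|\vm(x)-\vla(x)|}+\sqrt{|\vm(u)-\vla(u)|}$; so $e^{-\nu d_\Lambda(u,\vla;y,\vxi)} \le e^{O(\nu)}\,e^{\nu(\sqrt{|\vm(x)-\vla(x)|}+\sqrt{|\vm(u)-\vla(u)|})} e^{-\nu d_\Lambda(x,\vm;y,\vxi)}$. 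Second, I must sum $|K^{u,\vla}_{x,\vm}|$ against these exponential weights over $\vla$.

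This sum over $\vla$ is where Proposition \ref{disp0} does the work. For fixed $u$ (either $u=x$, contributing only the diagonal term $K=2D$, or $u\sim x$), the matrix element factorizes as $K^{u,\vla}_{x,\vm}=\langle \vm(x)|D(-\beta)|\vla(x)\rangle\langle\vm(u)|D(\beta)|\vla(u)\rangle$ with $\vla$ agreeing with $\vm$ off $\{x,u\}$, so
\be
\sum_{\vla} |K^{u,\vla}_{x,\vm}| \, e^{\nu(\sqrt{|\vm(x)-\vla(x)|}+\sqrt{|\vm(u)-\vla(u)|})}
 \ \le \ \Big(\sum_{a\ge 0} |\langle \vm(x)|D(-\beta)|a\rangle| e^{\nu\sqrt{|\vm(x)-a|}}\Big)\Big(\sum_{b\ge 0}|\langle \vm(u)|D(\beta)|b\rangle| e^{\nu\sqrt{|\vm(u)-b|}}\Big).
\ee
Applying \eqref{dispsum} of Proposition \ref{disp0} with $p=1$ bounds each factor by $C_{\nu,\beta}(\vm(x)\vee 1)^{1/4}$ and $C_{\nu,\beta}(\vm(u)\vee 1)^{1/4}$ respectively (using $|\sqrt{n}-\sqrt{m}|\ge$ something is not needed; rather $\mu|\sqrt n-\sqrt m|\ge \mu|\,|n|^{1/2}-|m|^{1/2}|$ — actually one uses $\sqrt{|n-m|}$ versus $|\sqrt n-\sqrt m|$: note $|\sqrt n-\sqrt m| \le \sqrt{|n-m|}$ fails in general, so I will instead absorb via the elementary inequality and possibly shrink $\nu$; this needs a small check). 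Summing over the finitely many $u\in\{x\}\cup\{u: u\sim x\}$ then produces exactly the quantity $Q(x,\vm)=1+(\vm(x)\vee1)^{1/4}\sum_{u\sim x}(\vm(u)\vee1)^{1/4}$, up to the constant $A_\nu/\delta$. For \eqref{eq:specCT2} I repeat the same expansion on the right, inserting $P_{\cS_2}\Delta_\Lambda$ and expanding $\langle v,\vec\eta|\Delta_\Lambda|y,\vxi\rangle$ over $(v,\vec\eta)$ near $y$, yielding the second factor $Q(y,\vxi)$.

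The main obstacle I anticipate is the interplay between the two "Fock-space distances": Theorem \ref{thm:specCT} gives decay in $d_\Lambda=L_\Lambda+r_\Lambda$, whose $r_\Lambda$-part uses $\sqrt{|\vm(x)-\vla(x)|}$, while Proposition \ref{disp0} is naturally stated with $|\sqrt n-\sqrt m|$ in the exponent. I need the elementary comparison showing that, after passing to a slightly smaller rate $\nu'<\nu$, the weight $e^{\nu r_\Lambda}$ generated by the triangle-inequality step can be controlled by $\prod e^{\nu'|\sqrt{\vm(x)}-\sqrt{a}|}$ times a bounded prefactor — or, more cleanly, observe that since $\vla$ differs from $\vm$ at only one or two sites the relevant excitation changes are already summed by \eqref{dispsum} directly with the $\sqrt{\cdot}$-metric, because $r_\Lambda$ restricted to such one-site changes is $\sqrt{|\vm(x)-a|}$ and the decay in Proposition \ref{disp0} controls exactly $\sum_a e^{\mu|\sqrt{\vm(x)}-\sqrt a|}|\langle\vm(x)|D|a\rangle|$; one then uses that $|\sqrt{\vm(x)}-\sqrt a|$ and the shift $\sqrt{|\vm(x)-a|}$ appearing via the triangle inequality differ by a bounded amount only when $\vm(x)$ is bounded, so a clean argument routes the triangle-inequality loss entirely through $L_\Lambda$ (which changes by $\le 2$) and keeps the $r_\Lambda$-weight inside the single sum handled by \eqref{dispsum}. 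Getting this bookkeeping exactly right, so that the power $(\vm(x)\vee1)^{1/4}$ emerges with no leftover $\vm$-dependence, is the delicate point; everything else is routine.
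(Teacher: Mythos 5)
Your proposal follows the paper's proof step for step: expand $\Delta_\Lambda P_{\cS_1}$ into the diagonal $2D$ term plus the nearest-neighbor matrix elements $K_{x,\vm}^{u,\vla}$, bound each resolvent matrix element by Thm.\ \ref{thm:specCT}, pull out $\e^{-\nu d_\Lambda(x,\vm;y,\vxi)}$ by the triangle inequality, and sum the residual weight over the $\vla$ agreeing with $\vm$ off $\{x,u\}$ via Prop.\ \ref{disp0} with $p=1$, which produces $Q(x,\vm)$; a second application of Prop.\ \ref{disp0} on the right produces $Q(y,\vxi)$ for \eqref{eq:specCT2}. The one substantive comment concerns the ``delicate point'' you flag, which is real but which your two suggested fixes do not quite settle: the fallback that the two Fock-space distances ``differ by a bounded amount when $\vm(x)$ is bounded'' cannot be made uniform in $\vm$, since with the weight $\e^{\nu\sqrt{\abs{\vm(x)-a}}}$ the sum $\sum_a\e^{\nu\sqrt{\abs{\vm(x)-a}}}\abs{\dirac{\vm(x)}{D(\beta)}{a}}$ is not polynomially bounded in $\vm(x)$ (the displaced number state spreads over $\sim\sqrt{\vm(x)}$ values of $a$, so the weight reaches size $\e^{c\,\vm(x)^{1/4}}$), and your ``cleaner'' option as written still leaves the $\sqrt{\abs{\cdot}}$ weight facing the $\abs{\sqrt{\cdot}-\sqrt{\cdot}}$ estimate of \eqref{dispsum}. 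The resolution, which is how the paper's proof actually proceeds, is to carry the oscillator part of the distance in the $\abs{\sqrt{n}-\sqrt{m}}$ form throughout: the Combes--Thomas decay is only needed (and, via the verification of the kernel condition $M_\alpha<\infty$ in Appendix \ref{CT section}, only actually available) for the metric whose Fock-space part is $\sum_x\abs{\sqrt{\vm(x)}-\sqrt{\vxi(x)}}$; with that reading the triangle-inequality loss for $u\sim x$ is at most $\e^{\nu}\,\e^{\nu\abs{\sqrt{\vm(x)}-\sqrt{\lambda_1}}}\,\e^{\nu\abs{\sqrt{\vm(u)}-\sqrt{\lambda_2}}}$, and \eqref{dispsum} with $p=1$ yields exactly the factors $(\vm(x)\vee 1)^{1/4}(\vm(u)\vee 1)^{1/4}$, hence $Q(x,\vm)$, with no leftover $\vm$-dependence.
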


\begin{proof}
The idea is to combine Prop.\ \ref{disp0} with Thm.\ \ref{thm:specCT}. We will prove only  eq.\ \eqref{eq:specCT1}; the proof of eq.\ \eqref{eq:specCT2} involves an additional  application of Prop.\ \ref{disp0} to bound the sum over matrix elements of $\Delta_\Lambda$ on the right.

By eq.\ \eqref{Kdef},
\begin{multline}
	\abs{\dirac{x,\vm}{\Delta_\Lambda P_{\cS_1} \frac{1}{H_{\cS}-z}}{y,\vxi}} \ \le \ \sum_{u,\vla \in \cE_\Lambda } \abs{K_{x,\vm}^{u,\vla}} \abs{G_{\cS}(u,\vla;y,\vxi)} \\
	\le \ 2D  \abs{G(x,\vm;y,\vxi)} 
	+ \sum_{u \sim x} \sum_{\vla} \abs{K_{x,\vm}^{u,\vla}}\abs{G_{\cS}(u,\vla;y,\vxi)}  .
\end{multline}
Inserting the Combes-Thomas bound for the Green's functions on the right hand side yields,
\begin{multline}
\abs{\dirac{x,\vm}{\Delta_\Lambda P_{\cS_1} \frac{1}{H_{\cS}-z}}{y,\vxi}} \\
\le \  \frac{4D}{\delta} \e^{-\nu d_\Lambda(x,\vm;y,\vxi)} \left ( 1 + \frac{1}{2D} \sum_{u \sim x} \sum_{\vla} \abs{K_{x,\vm}^{u,\vla}} \e^{-\nu d_\Lambda(x,\vm;u,\vla)} \right ).
\end{multline}
The hopping term $K_{x,\vm}^{u,\vla}$ enforces equality between $\vm$ and $\vla$ everywhere expect at $x$ and $u$.  Thus by Prop.\ \ref{disp0}
\begin{multline}
\sum_{\vla} \abs{K_{x,\vm}^{u,\vla}}\e^{\nu d_\Lambda(x,\vm;u,\vla)} \\
\le \ \e^{\nu} \sum_{\lambda_1} \e^{\nu\abs{\sqrt{\vm(x)}-\sqrt{\lambda_1}}}\abs{\dirac{\vm(x)}{D(\beta)}{\lambda_1}}\sum_{\lambda_2} \e^{\nu\abs{\sqrt{\vm(u)}-\sqrt{\lambda_2}}} \abs{\dirac{\vm(u)}{D(\beta)}{\lambda_2}} \\
	 \le \ \e^\nu C_{\nu,\beta}^2 \left ( \vm(x)\vee 1 \right )^{\frac{1}{4}}  \left ( \vm(u)\vee 1 \right )^{\frac{1}{4}} ,
\end{multline}
with $C_{\nu,\beta} <\infty.$ Eq.\ \eqref{eq:specCT1} follows.
\end{proof}

\section{The lowest band}
In this section we prove the $k=0$ case of Thm.\ \ref{thm:main}:
\begin{lt0}
  For any $\mu > 0$ and $0 <s < 1 $ there is $\gamma_{0,\mu,s}$  so that if $\gamma < \gamma_{0,\mu,s}$, then there is $A_{0,\mu,s} <\infty$ such that 
   \be  \Ev{\abs{G_\Lambda(x,\vm;y,\vxi,z)}^s} \ \le \ A_{0,\mu,s} \e^{-\mu \Upsilon_\Lambda (z,\vm;y,\vxi) -\mu r_\Lambda(\vm,\vxi)} \ee 
   for any  $\Lambda \subset \Z^d$,  $|x,\vm\rangle, |y,\vxi\rangle \in  \cE_{\Lambda}$,  and   $z\in \bS_0$.
\end{lt0}
\noindent This will be accomplished in two steps. First, we consider ``in-band correlations'' between states without oscillator excitations:
\begin{lemma}[In-Band] \label{lem:k=0}For any $\mu>0$ and $s<1$ there is $\gamma_{0,\mu,s} >0$ such that if $\gamma <\gamma_{0,\mu,s}$, then 
     \be\label{eq:k=0propbound}
    \Ev{\abs{G_\Lambda(x,\vec{0};y,\vec{0};z)}^s} \ \le \ \frac{\kappa}{1-s} \e^{-\mu \|x-y\|}
   \ee 
for any $\Lambda \subset \Z^d$,   $x,y\in \Lambda$ and  $z\in\bS_0$.
\end{lemma}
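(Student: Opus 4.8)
The plan is to prove \eqref{eq:k=0propbound} by a fractional-moment iteration in the spirit of the large-disorder analysis of the Anderson model in \cite{Schenker2014a}, with the small hopping $\gamma$ playing the role of strong disorder; by the ``all-for-one'' Lemma \ref{lem:allforone} it is enough to treat one conveniently small exponent $s\in(0,1)$, and we may assume $x\neq y$, the case $x=y$ being contained in Lemma \ref{lem:apriori}. The key step is the geometric resolvent identity \eqref{gre} with $\Lambda_1=\{x\}$, $\Lambda_2=\Lambda\setminus\{x\}$: its leading term vanishes since $x$ and $y$ lie in different blocks, and since $H_{\{x\}}$ is diagonal in $\{\ket{x,\vec{m}}\}_{\vec{m}}$ with eigenvalue $2D\gamma+\omega N_\Lambda(\vec{m})+v_x$, one obtains the exact identity
\[
G_\Lambda(x,\vec{0};y,\vec{0};z)\ =\ -\,\frac{\gamma}{2D\gamma+v_x-z}\sum_{u\sim x}\sum_{\vec{\lambda}}K_{x,\vec{0}}^{u,\vec{\lambda}}\,G_\Lambda(u,\vec{\lambda};y,\vec{0};z),
\]
the inner sum running over configurations $\vec{\lambda}$ supported in $\{x,u\}$.

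The term $\vec{\lambda}=\vec{0}$ reproduces an in-band Green's function $G_\Lambda(u,\vec{0};y,\vec{0};z)$ with $|K_{x,\vec{0}}^{u,\vec{0}}|=\e^{-|\beta|^2}$, exactly as for the scalar Anderson model. The terms with $\vec{\lambda}\neq\vec{0}$ are the new feature, and the point is that they can be eliminated in favour of in-band Green's functions: $\ket{u,\vec{\lambda}}$ lies outside the zeroth band, so by \eqref{firstresout},
\[
G_\Lambda(u,\vec{\lambda};y,\vec{0};z)\ =\ -\gamma\sum_{\ket{u',\vec{\lambda}'}\notin\cE_\Lambda^{(0)}}\Big\langle u,\vec{\lambda}\Big|\frac{1}{H_\Lambda^{(0;\text{out})}-z}\Big|u',\vec{\lambda}'\Big\rangle\sum_{w\sim u'}K_{u',\vec{\lambda}'}^{w,\vec{0}}\,G_\Lambda(w,\vec{0};y,\vec{0};z),
\]
where $\tfrac{1}{H_\Lambda^{(0;\text{out})}-z}$ is nonsingular and exponentially decaying by Theorem \ref{thm:specCT} (this is where $V_+<\omega$ enters: $\bS_0$ lies at distance $\ge\delta=\omega-V_+-4D\gamma>0$ from $\sigma(H_\Lambda^{(0;\text{out})})$, and $\delta^{-1}$ can in fact be made exponential with any rate $\mu'$ for $\gamma$ small, by Remark \ref{rem:specCT}), while the sums over the oscillator labels $\vec{\lambda},\vec{\lambda}'$ converge by Proposition \ref{disp0} and Corollary \ref{cor:specCT}. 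Substituting back and taking $s$-th powers (subadditivity of $t\mapsto t^s$) turns the display into a pointwise inequality
\[
|G_\Lambda(x,\vec{0};y,\vec{0};z)|^{s}\ \le\ \frac{\gamma^{s}}{|2D\gamma+v_x-z|^{s}}\sum_{w\in\Lambda}\varphi_\Lambda(x,w)\,|G_\Lambda(w,\vec{0};y,\vec{0};z)|^{s},
\]
with $\sum_w\varphi_\Lambda(x,w)\le C=C(s,\beta,D,\omega,V_+)$ and $\varphi_\Lambda(x,w)\le C\,\e^{-\mu'\|x-w\|}$; the diagonal term $w=x$ (a return to $x$) carries an extra power of $\gamma$.

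Taking expectations, the remaining difficulty is the one inherited from the Anderson model: the singular prefactor $|2D\gamma+v_x-z|^{-s}$ must be decoupled from the resolvents $G_\Lambda(w,\vec{0};y,\vec{0};z)$, which still depend on $v_x$. I would proceed as in \cite{Schenker2014a}: use \eqref{gre2} to peel the site $x$ off each such resolvent, writing $G_\Lambda(w,\vec{0};y,\vec{0};z)=G_{\Lambda\setminus\{x\}}(w,\vec{0};y,\vec{0};z)+(\text{a return excursion through }x)$; the first summand is independent of $v_x$, so integrating the prefactor against $\rho(v_x)\,\di v_x$ produces only the finite constant $\sup_{\zeta\in\bbC}\int|v-\zeta|^{-s}\rho(v)\,\di v<\infty$ (finite since $\rho$ is bounded and $s<1$), while the return excursions carry additional powers of $\gamma$ and are absorbed into a geometric series. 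The outcome is the $k=0$ case of \eqref{eq:basicstep},
\[
\Ev{|G_\Lambda(x,\vec{0};y,\vec{0};z)|^{s}}\ \le\ \gamma^{s}\,C\sum_{w\ne x}\e^{-\mu'\|x-w\|}\,\Ev{|G_{\Lambda\setminus\{x\}}(w,\vec{0};y,\vec{0};z)|^{s}}.
\]

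One then iterates, for instance by induction on $|\Lambda|$: the base case $|\Lambda|=1$ is $\Ev{|(2D\gamma+v_x-z)^{-1}|^{s}}\le\kappa/(1-s)$ from Lemma \ref{lem:apriori}, and inserting the inductive hypothesis $\Ev{|G_{\Lambda\setminus\{x\}}(w,\vec{0};y,\vec{0};z)|^{s}}\le\tfrac{\kappa}{1-s}\e^{-\mu\|w-y\|}$ into the last display, the convolution $\sum_w\e^{-\mu'\|x-w\|}\e^{-\mu\|w-y\|}$ is bounded by $C_\mu\,\e^{-\mu\|x-y\|}$ once $\mu'>\mu$, so the right-hand side is at most $\gamma^{s}C\,C_\mu\,\tfrac{\kappa}{1-s}\e^{-\mu\|x-y\|}$; choosing $\gamma_{0,\mu,s}$ so that $\gamma^{s}C\,C_\mu\le 1$ for $\gamma<\gamma_{0,\mu,s}$ closes the induction and gives \eqref{eq:k=0propbound}. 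I expect the decoupling step, together with the book-keeping of the return excursions, to be the main obstacle; the only genuinely new ingredient relative to the Anderson case is the control of the sums over oscillator occupation numbers generated at every step of the expansion, which is exactly what Proposition \ref{disp0} — and, through the weights $Q$, Corollary \ref{cor:specCT} — is designed to provide.
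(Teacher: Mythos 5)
Your skeleton matches the paper's in several places (single-site geometric resolvent identity at $\{x\}$, splitting into in-band and out-of-band contributions with Prop.\ \ref{disp0} and Cor.\ \ref{cor:specCT}, and closing the estimate by an iteration in the volume), but the way you orient the resolvent identity creates a decoupling problem that your sketch does not actually solve. Because you apply \eqref{gre} with the \emph{full} resolvent on the far side, your exact identity leaves you with the prefactor $(2D\gamma+v_x-z)^{-1}$ multiplying full-volume Green's functions $G_\Lambda(u,\vla;y,\vec{0};z)$ that still depend on $v_x$. Your proposed remedy --- peel $x$ off each such resolvent via \eqref{gre2} and ``absorb the return excursions into a geometric series'' --- has a genuine gap: each excursion back to $x$ in the expansion reproduces another single-site factor $(2D\gamma+\omega N(\vla')+v_x-z)^{-1}$, and the dominant (in-band, $\vla'=\vec{0}$) returns give powers $|2D\gamma+v_x-z|^{-ks}$ whose conditional expectation $\bbE_{\{x\}}$ diverges once $ks\ge 1$; the a priori bound, Lem.\ \ref{lem:apriori}, controls only a \emph{single} resolvent factor, and fractional moments of products of several resolvents cannot be bounded term by term. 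Moreover the $v_x$-independent in-band factors $G_{\Lambda\setminus\{x\}}(w,\vec{0};u,\vla)$ interleaving the excursions carry no smallness and are not bounded uniformly as $\Im z\to 0$, so the excursion series cannot be dominated and summed as claimed. (A one-step peel followed by H\"older/Cauchy--Schwarz in $\bbE_{\{x\}}$ also does not close, since it leaves a product of two Green's functions that is not of the form being iterated.)

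The fix is exactly the arrangement used in the paper (and, contrary to your attribution, in \cite{Schenker2014a}): apply \eqref{gre2} with the full resolvent adjacent to $\ket{x,\vec{0}}$, so that
\begin{equation*}
\abs{G_\Lambda(x,\vec{0};y,\vec{0};z)} \ \le \ \gamma \sum_{\vm}\abs{\dirac{x,\vec{0}}{\tfrac{1}{H_\Lambda-z}}{x,\vm}}\,\abs{\dirac{x,\vm}{T_{\{x\};\Lambda\setminus\{x\}}\tfrac{1}{H_{\Lambda\setminus\{x\}}-z}}{y,\vec{0}}},
\end{equation*}
where the \emph{only} $v_x$-dependent factor is the matrix element of the full resolvent with both indices at $x$; taking $s$-th powers and averaging over $v_x$ then disposes of it immediately via Lem.\ \ref{lem:apriori}, while the second factor involves $H_{\Lambda\setminus\{x\}}$ and is manifestly independent of $v_x$. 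Your in-band/out-of-band analysis (Prop.\ \ref{disp0} for the in-band hop, \eqref{firstresout} plus Cor.\ \ref{cor:specCT} for the out-of-band excursion) should be applied to \emph{that} factor; the resulting kernel $K_{\gamma,s}(x,x')\le C_s\gamma^s I[x\sim x']+C_{s,\mu}\gamma^{2s}\e^{-2\mu\|x-x'\|}$ is then iterated exactly as you propose (your induction on $|\Lambda|$ is essentially the paper's maximum over subdomains $W\subset\Lambda$), with the convolution bound requiring the Combes--Thomas rate to exceed $\mu$, and the case of infinite $\Lambda$ recovered by strong resolvent convergence --- a step you should not omit, since the lemma is stated for arbitrary $\Lambda\subset\bbZ^d$. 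Two cosmetic points: invoking Lem.\ \ref{lem:allforone} would only give the bound with a modified constant and rate, not the stated $\kappa/(1-s)$, so it is cleaner to run the argument for every $s<1$ directly; and the constant $\e^{-|\beta|^2}$ for the in-band hop, while correct, plays no special role.
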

\noindent Second, we use the Combes-Thomas bound Cor.\ \ref{cor:specCT} to extend the in-band estimate to ``out-of-band correlations'' between states with arbitrary oscillator excitations and thereby prove Thm.\ L($0$).
 
\subsection{In band correlations}  In this section we prove Lem.\ \ref{lem:k=0},  following the large disorder argument for random Schr\"odinger operators described in \cite{Schenker2014a} (see also \cite[Chapter 6]{aizenman2015random}). Throughout the proof, the energy $z$ will be fixed, and thus we drop this argument from the Green's functions to lighten the notation.

Note that, eq.\ \eqref{eq:k=0propbound} with $x=y$  follows from the \emph{a priori} bound of Lem.\ \ref{lem:apriori}. 
For $x\neq y$, we  apply the geometric resolvent identity \eqref{gre2} with $\Lambda_1 =\{x\}$ to conclude 
\begin{multline}\abs{G_\Lambda(x,\vec{0};y,\vec{0})} \ \le \ \gamma 
\abs{ \dirac{x,\vec{0}} {\frac{1}{H_\Lambda -z} T_{\{x\};\Lambda\setminus \{x\}} \frac{1}{H_{\Lambda \setminus \{x\}} -z }}{y,\vec{0}}} \\
\le \ \gamma \sum_{\vm \in \cM_\Lambda} 
\abs{ \dirac{x,\vec{0}} {\frac{1}{H_\Lambda -z}}{x,\vm}} \abs{\dirac{x,\vm}{T_{\{x\};\Lambda\setminus \{x\}}  \frac{1}{H_{\Lambda \setminus \{x\}} -z }}{y,\vec{0}}}. 
\end{multline}
Raising both sides to the power $s<1$, using the inequality $(a+b)^s \le a^s + b^s$, averaging with respect to the potential $v_x$, and applying the \emph{a priori} bound, Lem.\ \ref{lem:apriori}, yields
\begin{equation}
	\bbE_x\left(\abs{G_\Lambda(x,\vec{0};y,\vec{0})}^s \right ) \ \le \ 
	\gamma^s \frac{\fmc}{1-s} \sum_{\vm \in \cM_\Lambda} 
 \abs{\dirac{x,\vm}{T_{\{x\};\Lambda\setminus \{x\}}  \frac{1}{H_{\Lambda \setminus \{x\}} -z }}{y,\vec{0}}}^s. \label{eq:k=0bound1}
\end{equation}

The terms in the sum on the right hand side can be bounded as follows
\begin{multline}\label{eq:k=0bound2}
	\abs{\dirac{x,\vm}{T_{\{x\};\Lambda\setminus \{x\}}  \frac{1}{H_{\Lambda \setminus \{x\}} -z }}{y,\vec{0}}}^s \ \le \ \abs{\dirac{x,\vm}{T_{\{x\};\Lambda\setminus \{x\}} P^{(0;\text{in})} \frac{1}{H_{\Lambda \setminus \{x\}} -z }}{y,\vec{0}}}^s \\+\abs{\dirac{x,\vm}{T_{\{x\};\Lambda\setminus \{x\}} P^{(0;\text{out})} \frac{1}{H_{\Lambda \setminus \{x\}} -z }}{y,\vec{0}}}^s, 
\end{multline}
where $P^{(0;\text{in})}$ is the ``in-band'' projection onto states with zero oscillator excitation and $P^{(0;\text{out})}$ is the complementary projection.  The ``in-band'' term is bounded by
\be \abs{\dirac{x,\vm}{T_{\{x\};\Lambda\setminus \{x\}} P^{(0;\text{in})} \frac{1}{H_{\Lambda \setminus \{x\}} -z }}{y,\vec{0}}}^s \ \le \ \sum_{x'\sim x} 
\abs{K_{x,\vm}^{x',\vec{0}}}^s \abs{G_{\Lambda\setminus \{x\}}(x',\vec{0};y,\vec{0})}^s,\ee
since $T_{\{x\};\Lambda\setminus\{x\}}$ involves hopping only to the neighbors of $x$. The hopping term $K_{x,\vm}^{x',\vec{0}}$ forces $\vm$ to vanish except at $x$ and $x'$.  Thus
\begin{multline}\label{eq:inbound} \sum_{\vm}\abs{\dirac{x,\vm}{T_{\{x\};\Lambda\setminus \{x\}} P^{(0;\text{in})} \frac{1}{H_{\Lambda \setminus \{x\}} -z }}{y,\vec{0}}}^s \\ \le \ \sum_{x'\sim x}  \sum_{m_1=0}^\infty \sum_{m_2=0}^\infty \abs{\dirac{m_1}{D(\beta)}{0}}^s \abs{\dirac{m_2}{D(\beta)}{0}}^s\abs{G_{\Lambda\setminus \{x\}}(x',\vec{0};y,\vec{0})}^s \\ 
\le \ C_{s} \sum_{x'\sim x}	\abs{G_{\Lambda\setminus \{x\}}(x',\vec{0};y,\vec{0})}^s,
\end{multline}
by Prop.\ \ref{disp0}.

To bound the out-of-band term in eq.\ \eqref{eq:k=0bound2}, we use eq.\ \eqref{firstresout} to return to the $k=0$ band:
\begin{multline}
	\abs{ \dirac{x,\vm} { T_{\{x\};\Lambda\setminus \{x\}} P^{(0;\text{out})} \frac{1}{H_{\Lambda \setminus \{x\}} -z }}{y,\vec{0}}}^s \\ \le \ \gamma^s\abs{ \dirac{x,\vm} { T_{\{x\};\Lambda\setminus \{x\}} P^{(0;\text{out})} \frac{1}{H_{\Lambda \setminus \{x\}}^{(0;\text{out})} -z } T_\Lambda^{(0)}P^{(0;\text{in})} \frac{1}{H_{\Lambda \setminus \{x\}}} }{y,\vec{0}}}^s \\
	\le \ \gamma^s \sum_{x'\in \Lambda \setminus \{x\}} 
	\abs{ \dirac{x,\vm} { T_{\{x\};\Lambda\setminus \{x\}} P^{(0;\text{out})} \frac{1}{H_{\Lambda \setminus \{x\}}^{(0;\text{out})} -z } T_\Lambda^{(0)}}{x',\vec{0}}}^s \abs{G_{\Lambda\setminus \{x\}}(x',\vec{0};y,\vec{0})}^s.
\end{multline}
By Cor.\ \ref{cor:specCT} there is $\wt{\gamma}_0$ so that for $\gamma < \wt{\gamma}_0$, we have
\begin{multline}\label{eq:outbound}
	\abs{ \dirac{x,\vm} { T_{\{x\};\Lambda\setminus \{x\}} P^{(0;\text{out})} \frac{1}{H_{\Lambda \setminus \{x\}} -z }}{y,\vec{0}}}^s \\ \le \ C_{s,\mu}  \gamma^s Q(x,\vm)^s\sum_{x'\in \Lambda \setminus \{x\}} \e^{-3\mu d_\Lambda(x,\vm;x',\vec{0})} \abs{G_{\Lambda\setminus \{x\}}(x',\vec{0};y,\vec{0})}^s,\end{multline}
where
$Q(x,\vec{m}) \ = \ 1 + \sum_{u\sim x} \left ((\vec{m}(x)\vee 1)(\vec{m}(u)\vee 1)\right )^{\frac{1}{4}}.$ 

Since $Q(x,\vec{m}) \ \le \ C (N_\Lambda(\vm)\vee 1)^{\frac{1}{2}}$, Lem.\ \ref{lem:expsumbound2} implies
\begin{equation}
	\sum_{\vec{m}} Q(x,\vec{m})^s\e^{-3\mu d_\Lambda(x,\vec{m};\vec{x'},\vec{0})}
	\ \le \ C_{\mu,s} \e^{-2\mu \|x-x'\|}.
\end{equation}
Thus 
\begin{multline}\sum_{\vec{m}} \abs{\dirac{x,\vm}{T_{\{x\};\Lambda\setminus \{x\}} P^{(0;\text{out})} \frac{1}{H_{\Lambda \setminus \{x\}} -z }}{y,\vec{0}}}^s \\ \le \ C_{s,\mu}\gamma^s\sum_{x'\in \Lambda\setminus \{x\}} \e^{-2\mu\|x-x'\|} \abs{G_{\Lambda\setminus \{x\}}(x',\vec{0};y,\vec{0})}^s ,
\end{multline}
provided $\gamma <\wt{\gamma_0}$. Together with eq.\ \eqref{eq:inbound} this implies, using eqs.\ \eqref{eq:k=0bound2} and \eqref{eq:k=0bound1}, that
\begin{equation}\label{eq:k=0iterbound}\bbE_x\left(\abs{G_\Lambda(x,\vec{0};y,\vec{0})}^s \right ) \ \le \ 
	\sum_{x'\in \Lambda\setminus \{x\}} K_{\gamma,s} (x,x') \abs{G_{\Lambda\setminus\{x\}}(x',\vec{0};y,\vec{0})}^s\end{equation}
where
\be K_{\gamma,s}(x,x') \ = \ C_s \gamma^s  I[x\sim x'] + C_{s,\mu}\gamma^{2s} \e^{-2\mu\|x-x'\|}.\ee

Eq.\ \eqref{eq:k=0iterbound} may be iterated to prove Lem.\ \ref{lem:k=0}.  An efficient way to proceed is to fix a finite set $\Lambda \subset \bbZ^d$ 
\be\label{eq:iterator} F_\Lambda \ = \ \max_{W\subset \Lambda} \max_{x,y\in W} \e^{\mu \|x-y\|} \Ev{\abs{G_\Lambda(x,\vec{0};y,\vec{0})}^s }.\ee
For $x\neq y$, 
\begin{multline}\label{eq:notmax}
	\e^{\mu \|x-y\|} \Ev{\abs{G_W(x,\vec{0};y,\vec{0})}^s } \ \le \  \sum_{x'\neq x} K_{\gamma,s}(x,x')\e^{\mu\|x-x'\|} \e^{\mu\|x'-y\|} \Ev{\abs{G_{W \setminus\{x\}} (x',\vec{0};y,\vec{0})}^s }\\
	\le \ F_\Lambda \sum_{x'\neq x} K_{\gamma,s}(x,x')\e^{\mu\|x-x'\|} .
\end{multline}
If 
$\sup_x \sum_{x'}K_{\gamma,s}(x,x') \e^{\mu\|x-x'\|} < 1$, since $F_\Lambda <\infty$ because $\Lambda$ is finite, eq.\ \eqref{eq:notmax} implies that the maximum in eq.\ \eqref{eq:iterator} cannot occur for $x\neq y$ and thus that 
\be  \e^{\mu \|x-y\|} \Ev{\abs{G_\Lambda(x,\vec{0};y,\vec{0})}^s} \ \le \ \max_{W\subset \Lambda} \max_{x\in W} \Ev{\abs{G_\Lambda(x,\vec{0};x,\vec{0})}^s } \ \le \ \frac{\kappa}{1-s},\ee
by the \emph{a priori} bound, Lem.\ \ref{lem:apriori}. Since 
\be \sup_x \sum_{x'}K(x,x')\e^{\mu \|x-x'\|}  \ \le  \ C_s \gamma^s + C_{s,\mu} \gamma^{2s} ,\ee 
we see that
\be \Ev{\abs{G_\Lambda(x,\vec{0};y,\vec{0})}^s} \ \le \ \frac{\kappa}{1-s} \e^{-\mu \|x-x'\|} \ee 
provided  $\gamma <\wt{\gamma}_0$ and
\be  C_s \gamma^s + C_{s,\mu} \gamma^{2s}  < 1.\ee

As the bounds obtained are independent of the finite set $\Lambda$, they extend \emph{a posteriori} to infinite $\Lambda$ by strong resolvent convergence.
This completes the proof of Lem.\ \ref{lem:k=0}.

\subsection{Out of band correlations: the proof of Thm.\ L($0$)} In this section we prove Thm.\ L($0$). The result follows directly from Lem.\ \ref{lem:k=0} for  $\ket{x,\vm}$, $\ket{y,\vxi} \in  \cE_{\Lambda}^{(0)}$.
  If one or both of the terms $\ket{x,\vm}$, $\ket{y,\vxi}$ is not in the band $\cE_{\Lambda}^{(0)}$, we will apply a geometric resolvent identity and the Combes-Thomas bound, Cor.\ \ref{cor:specCT}, to obtain the required estimate.
    
  Let us first assume $\ket{x,\vm} = \ket{x,\vo}\in  \cE_{\Lambda}^{(0)}$
 and $\ket{y,\vxi} \not \in \cE_{\Lambda}^{(0)}$.
 One application of (\ref{firstres}) yields,
 \be \label{FM0e1}
 |G_\Lambda( x,\vo; y,\vxi ) |^s 
    \leq \gamma^s \sum_{w} 
    | G_\Lambda(  x,\vo ; w,\vo )|^s
    \abs{\dirac{w,\vo}{T_\Lambda^{(0)} \frac{1}{H_\Lambda^{(0)} - z}}{y,\vxi}}^s
  \ee
  By the Combes-Thomas bound, Cor.\ \ref{cor:specCT}, there is $\alpha_\mu>0$ so that if $\gamma <\alpha_\mu$, then 
  \be |G_\Lambda( x,\vo; y,\vxi ) |^s 
    \leq  C_{\mu,s}  \sum_{w} 
    | G_\Lambda(  x,\vo ; w,\vo )|^s \e^{- 2 \mu  d_\Lambda(w,\vo;y,\vxi)}. \ee
    By the in-band lemma, Lem.\ \ref{lem:k=0},
    \be \Ev{|G_\Lambda( x,\vo; y,\vxi ) |^s} 
    \leq  C_{\mu,s} \frac{\kappa^s}{1-s}  \sum_{w} 
    \e^{-\mu \|x-w\|}   \e^{- 2\mu d_\Lambda(w,\vo;y,\vxi) }.\ee
    Recall that $d_\Lambda(w,\vo;y,\vxi)  =  L_\Lambda(w,\vo;y,\vxi) + r_\Lambda(\vo,\vxi) $,
    where $L_\Lambda\ge \Upsilon_\Lambda$ \tem see Prop.\ \ref{prop:LUps}. Thus
     \begin{multline}
     	\Ev{|G_\Lambda( x,\vo; y,\vxi ) |^s} 
    \leq \  C_{\mu,s}   \sum_{w} 
    \e^{-\mu \|x-w\|}   \e^{- 2\mu \Upsilon_\Lambda(w,\vo;y,\vxi)} \e^{-2\mu  r_\Lambda(\vo,\vxi) } \\ 
    \leq 
    C_{\mu,s} \e^{-\mu \Upsilon_\Lambda(x,\vo;y,\vxi)- \mu r_\Lambda(\vo,\vxi) }  \sum_{w} \e^{- \mu \|w-y\| } \\ \le \ C_{\mu,s} \e^{- \mu  \Upsilon_\Lambda(x,\vo;y,\vxi) -\mu  r_\Lambda(\vo,\vxi) }, 
     \end{multline} 
    where we have applied the triangle inequality for $\Upsilon_\Lambda$ and the bound $\Upsilon_\Lambda(w,\vo;y,\vxi)\ge \|w-y\|$. 
    
    If $\ket{y,\vxi}=\ket{y,\vo}$ and $\ket{x,\vm}\not \in \cE_\Lambda^{(0)}$, the argument is similar, but involves the geometric resolvent identity eq.\ \eqref{firstres2}.
    
    Finally consider both $\vm, \vxi \notin \cM_\Lambda^{(0)}$.
    Now an application of eq.\ (\ref{firstres}) and eq.\ \eqref{firstres2} yields,
    \begin{multline}
    	\abs{G_\Lambda ( x,\vm; y,\vxi )} ^s \ \le \  \abs{G^{(0)}_\Lambda( x,\vm; y,\vxi ) }^s 
    	 \\ + \ \gamma^{2s} 
       \sum_{w_1,w_2}\abs{\dirac{x,\vm}{ \frac{1}{H_\Lambda^{(0)} - z}T_\Lambda^{(0)} }{w_1,\vo}}^s	    	\abs{G_\Lambda ( w_1,\vo; w_2,\vo )} ^s 
       \abs{\dirac{w_2,\vo}{T_\Lambda^{(0)} \frac{1}{H_\Lambda^{(0)} - z}}{y,\vxi}}^s .
    \end{multline}
    By Thm.\ \ref{thm:specCT} and Cor.\ \ref{cor:specCT}, if $\gamma <\alpha_\mu$ then 
    \begin{multline}
    \abs{G_\Lambda(x,\vm;y,\vxi)}^s \ \le \ C_{\mu,s} \e^{-2\mu  d_\Lambda(x,\vm;y,\vxi)} \\ + C_{\mu,s}\sum_{w_1,w_2} \e^{-2 \mu d_\Lambda(x,\vm;w_1,\vo)}\abs{G_\Lambda ( w_1,\vo; w_2,\vo )} ^s \e^{-2\mu d_\Lambda(w_2,\vo;y,\vxi)}
    \end{multline}
    Arguing as above, we find that 
    \begin{multline}
    \Ev{\abs{G_\Lambda(x,\vm;y,\vxi)}^s} \\ \le \ C_{\nu,s} \e^{-\mu \Upsilon_\Lambda(x,\vm;y,\vxi)-\mu r_\Lambda(\vm,\vo)-\mu r_\Lambda(\vo,\vxi)}  \left [ 1 + \sum_{w_1,w_2} \e^{-\mu \|x-w_1\|} \e^{-\mu\|y-w_2\|}  \right ].
    \end{multline}
    Using the triangle inequality for $r_\Lambda$, this implies 
    \be \Ev{\abs{G_\Lambda(x,\vm;y,\vxi)}^s} \\ \le \ C_{\mu,s} \e^{-\mu \Upsilon_\Lambda(x,\vm;y,\vxi)-\mu  r_\Lambda(\vm,\vxi)}. \ee
    This completes the proof of Thm.\ L($0$).

\section{Higher bands}
In this section we prove Thm.\ \ref{thm:main}.  That is we prove for each $k=0,1,2,\ldots$,
\begin{lt} 
  For any $\mu > 0$ and $0 <s < 1 $ there is $\gamma_{k,\mu,s}$  so that if $\gamma < \gamma_{k,\mu,s}$, then there is $A_{k,\mu,s} <\infty$ such that 
   \be  \Ev{\abs{G_\Lambda(x,\vm;y,\vxi,z)}^s} \ \le \ A_{k,\mu,s} \e^{-\mu \Upsilon_\Lambda (z,\vm;y,\vxi) -\mu \cR^{(k)}_\Lambda(\vm,\vxi)} \ee 
   for any  $\Lambda \subset \Z^d$,  $|x,\vm\rangle, |y,\vxi\rangle \in  \cE_{\Lambda}$,  and   $z\in \bS_k$.
\end{lt}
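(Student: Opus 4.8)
The plan is to prove Theorem~L($k$) by induction on the band index, assuming Theorems L($0$),\dots,L($k-1$) and imitating the two-part structure of the proof of Theorem~L($0$). First I would establish an \emph{in-band} estimate --- the case $\ket{x,\vm},\ket{y,\vxi}\in\cE_\Lambda^{(k)}$, where $\cR^{(k)}_\Lambda(\vm,\vxi)=0$ --- and then bootstrap to arbitrary $\vm,\vxi$ exactly as Theorem~L($0$) was obtained from Lemma~\ref{lem:k=0}: the band resolvent identities \eqref{firstres}--\eqref{firstres2} reduce a general matrix element of $(H_\Lambda-z)^{-1}$ to in-band ones, the off-shell contributions being controlled by the Combes--Thomas bound Cor.~\ref{cor:specCT} using that $z\in\bS_k$ lies a fixed distance from $\sigma(H_\Lambda^{(k;\mathrm{out})})$. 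By Lemma~\ref{lem:allforone} it suffices to do this for a single $s\in(0,1)$; $z\in\bS_k$ is fixed and suppressed throughout. Since the Combes--Thomas rate degrades like $1/(k+1)$ (Rem.~\ref{rem:specCT}), the admissible $\gamma_{k,\mu,s}$ must shrink with $k$, which does not affect the induction.

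For the in-band estimate I would iterate in the style of Lemma~\ref{lem:k=0}. For finite $\Lambda$ put
\[ F \ := \ \max_{W\subseteq\Lambda}\ \max_{\substack{x,y\in W,\ \vm,\vxi\in\cM_W^{(k)}}} e^{\mu\,\Upsilon_W(x,\vm;y,\vxi)}\ \Ev{\,|G_W(x,\vm;y,\vxi)|^s}, \]
which is finite, and show $F\le\kappa/(1-s)$ for $\gamma$ small. If the maximum occurs with $\Upsilon_W=0$ this is Lemma~\ref{lem:apriori}; otherwise fix a maximizer and, using the $\ket{x,\vm}\leftrightarrow\ket{y,\vxi}$ symmetry, assume $R:=R_{\vm|\vxi}(x)\ge R_{\vxi|\vm}(y)$, so $\Upsilon_W=\max\{\|x-y\|,R\}$. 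One then derives a one-step inequality of the form \eqref{eq:basicstep} in two regimes. When $\|x-y\|>R$, split off the single site $\{x\}$ via \eqref{gre2}, average $v_x$ with Lemma~\ref{lem:apriori}, split the surviving matrix element into its $k$-in-band and $k$-out-of-band parts (\eqref{firstresout}), bound the latter with Cor.~\ref{cor:specCT}, and sum over the intermediate oscillator occupation at $x$ with Prop.~\ref{disp0}; this is nearly verbatim the proof of Lemma~\ref{lem:k=0}, except that whenever an excitation is decoupled (at $x$ or on the hopping bond) the resulting Green's function on $W\setminus\{x\}$ is fed to a lower-band hypothesis L($j$), $j<k$, rather than re-iterated. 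When $\|x-y\|\le R$, there is a site $w$ with $\|w-x\|=R$, $\vm(w)>0$, $\vm(w)\ne\vxi(w)$; restricting the particle to $\Gamma=B_{R-1;W}(x)$ makes the oscillators on $\Gamma^c$ decouple in $H_\Gamma$, so the diagonal term $G_{\Gamma\oplus\Gamma^c}(x,\vm;y,\vxi)$ vanishes (because $\vm(w)\ne\vxi(w)$), and in the resolvent expansion each inside-factor $G_\Gamma(x,\vm;w',\vla)$ forces $\vla$ to agree with $\vm$ on $\Gamma^c$, hence lives in band $k-N_{\Gamma^c}(\vm)\le k-1$, so L($k-N_{\Gamma^c}(\vm)$) applies; since such $w'\in\partial\Gamma$ are at distance $R-1$ from $x$, the induction bound already supplies the decay $e^{-\mu(\Upsilon_W-1)}$, while its $\cR^{(\cdot)}_\Gamma$ factor plus Prop.~\ref{disp0} make the sum over the (generally off-shell) intermediate configurations converge. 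In both regimes one arrives at an inequality $\Ev{|G_W(x,\vm;y,\vxi)|^s}\le K_{\gamma,s}\,e^{-\mu\Upsilon_W(x,\vm;y,\vxi)}\,F$ with $K_{\gamma,s}\to0$ as $\gamma\to0$; multiplying by $e^{\mu\Upsilon_W}$ and using maximality forces $F\le K_{\gamma,s}F$, hence (for $\gamma$ small) that the maximum is attained at $\Upsilon_W=0$, whence $F\le\kappa/(1-s)$.

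The bootstrap to general $\vm,\vxi$ is then a transcription of the argument following Lemma~\ref{lem:k=0}: a single use of \eqref{firstres2} when exactly one of $\vm,\vxi$ lies in $\cM_\Lambda^{(k)}$, and a two-sided use when neither does, writes $G_\Lambda(x,\vm;y,\vxi)$ as a sum of in-band Green's functions $G_\Lambda(x,\vm';y,\vxi')$, $\vm',\vxi'\in\cM_\Lambda^{(k)}$, times Combes--Thomas factors $e^{-\mu d_\Lambda}$ from $(H_\Lambda^{(k;\mathrm{out})}-z)^{-1}$ (Cor.~\ref{cor:specCT}); plugging in the in-band bound and recombining exponents with the triangle inequality for $\Upsilon_\Lambda$, the comparisons $d_\Lambda\ge\Upsilon_\Lambda$ and $\cR_\Lambda^{(k)}(\vm,\vxi)\le r_\Lambda(\vxi,\cM_\Lambda^{(k)})$ for $\vm$ on-shell, and the summability estimate Lemma~\ref{lem:expsumbound2}, yields the claim. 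All bounds are $\Lambda$-independent, so they pass to infinite $\Lambda$ by strong resolvent convergence.

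The step I expect to be the main obstacle is making the oscillator-dominated regime $\|x-y\|\le R$ rigorous. One must organize the resolvent expansion around $\partial\Gamma$ so that \emph{(i)} every inside-factor is a genuinely lower-band Green's function (so that L($j$), $j<k$, applies) and reaches distance $\sim R$ from $x$; \emph{(ii)} the complementary ``outside'' factor is of a form for which the potentials entering it are independent of those controlling the inside-factor --- since Lemma~\ref{lem:apriori} only absorbs boundedly many potentials at a time --- so that the expectation factorizes and the iteration closes; and \emph{(iii)} the sums over boundary positions on $\partial\Gamma$ and over the off-shell intermediate configurations $\vla$ produce no polynomial-in-$R$ loss: the first is tamed by the decay of the outside factor in its particle separation, the second by Prop.~\ref{disp0} together with the $\cR^{(j)}_\Gamma$-decay built into L($j$) and Lemmas~\ref{lem:expsumbound}--\ref{lem:expsumbound2}. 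This is also the place where the $k$-dependence of the threshold $\gamma_{k,\mu,s}$ is forced, through the Combes--Thomas rate $\nu_{k;\gamma}$.
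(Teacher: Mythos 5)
Your overall architecture is the same as the paper's: induction on $k$; an in-band estimate proved by maximizing a finite-volume quantity $F$ over subvolumes and splitting into the particle-dominated regime $\|x-y\|>R_{\vm|\vxi}(x)$ and the oscillator-dominated regime $\|x-y\|\le R_{\vm|\vxi}(x)$, with Condition C($k$) applied to decoupled regions carrying fewer than $k$ excitations at the shifted energy $z-\omega\bar k$, Combes--Thomas (Cor.~\ref{cor:specCT}) for off-shell intermediaries, and Prop.~\ref{disp0} for the configuration sums; then the out-of-band bootstrap via \eqref{firstres}--\eqref{firstres2}, exactly as in \S\ref{sec:outofband}. Your one structural deviation in the first regime (decoupling the single site $\{x\}$ for all $R$, as in the paper's $R=0$ step, Lem.~\ref{int step}, rather than the two-sided decoupling of the ball $\Gamma_{\vm|\vxi}(x)$ via Prop.~\ref{prop:filter}, Lem.~\ref{lem:ftermCT} and Lem.~\ref{FM fterm}) looks workable in outline, because there the averaged potential $v_x$ does not enter the surviving factor and the iteration closes as in Lem.~\ref{lem:k=0}.

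The genuine gap is exactly at the point you flag as obstacle (ii), and your proposed remedy there is the wrong one. In the regime $\|x-y\|\le R$ a single geometric resolvent identity \eqref{gre} leaves you with (inside factor)$\times$(hopping)$\times$(matrix element of the \emph{full} resolvent $(H_\Lambda-z)^{-1}$); the latter depends on every potential, in particular on all the $v_u$, $u\in\Gamma$, that govern the inside factor, so no choice of expansion around $\partial\Gamma$ makes the two factors statistically independent and the expectation does not factorize. The device that closes this step is not independence but H\"older/Cauchy--Schwarz in the probability space at exponent $2s<1$: one bounds $\bbE(|AB|^s)\le\bbE(|A|^{2s})^{1/2}\bbE(|B|^{2s})^{1/2}$, applies the \emph{a priori} bound Lem.~\ref{lem:apriori} to the full-resolvent factor $B$ (after conditioning on the two relevant potentials), and lets the decoupled inside factor $A$ --- a band-$(k-\bar k)$ Green's function controlled by Condition C($k$) --- supply \emph{all} of the decay, at a boosted rate (e.g.\ $3\mu$) so that the sums over $\partial\Gamma^\circ$ and over intermediate configurations, which cost only a polynomial in $R$ (cf.\ \eqref{eq:largeR}), are absorbed. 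This is precisely why the in-band lemma, Lem.~\ref{lem:k}, is proved only for $s<\nicefrac{1}{2}$, with Lem.~\ref{lem:allforone} recovering all $s\in(0,1)$ afterwards; you invoke the all-for-one lemma at the outset but never connect it to this step. Relatedly, your point (iii) cannot lean on ``the decay of the outside factor in its particle separation'': while Theorem L($k$) is still being proved no decay whatsoever is available for that factor, only its fractional-moment boundedness, so the exponential bookkeeping must come entirely from the inside factor and the Combes--Thomas weights.
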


We have already established Theorem L($0$) in the previous section.  The remaining cases will be established by induction on $k$.  Specifically, for each $k$, let Condition C($k$) refer to 
\begin{ct}
	Theorem L($j$) holds for $j=0,\ldots,k-1$.
\end{ct}  
\noindent We prove Thm.\ \ref{thm:main} by deriving the implication  ``Condition C($k$) $\Rightarrow$ Theorem L($k$)'' for each $k=1,2,\ldots$.

As in the $k=0$ case, there are two steps to the proof.  First we will prove an ``in band'' lemma: 
\begin{lemma}[In-Band]\label{lem:k}Let $k\ge 1$ and assume that Condition C($k$) holds. Fix $\mu>0$ and $s<\nicefrac{1}{2}$. Then there is $\gamma_{k,\mu,s}>0$ so that if $\gamma<\gamma_{k,\mu,s}$ then there is $A_{k,\mu,s} <\infty$ such that
\begin{equation}\label{eq:inband}
	\Ev{\abs{G_\Lambda(x,\vm;y,\vxi;z)}^s} \ \le \ A_{k,\mu,s} \e^{-\mu \Upsilon_\Lambda(x,\vm;y,\vxi)}
\end{equation}
for any $\Lambda \subset \bbZ^d$,  $\ket{x,\vm},\ket{y,\vxi}\in \cE_\Lambda^{(k)}$, and  $z\in \bS_k$.
\end{lemma}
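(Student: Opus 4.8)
The plan is to prove Lemma~\ref{lem:k} by the two--regime dichotomy announced in the sketch: an iteration in the particle separation when it dominates, and a single band--lowering step when the oscillator spread dominates. Fix $\mu>0$ and $s<\tfrac12$, assume Condition C($k$), and choose an auxiliary rate $\mu'\gg\mu$ (say $\mu'=3\mu$) and a length threshold $L=L(\mu,s,k,D,\beta)$. The finitely many smallness conditions on $\gamma$ that will appear --- the thresholds $\gamma_{\ell,\mu',2s}$ from Theorems L($\ell$), $\ell\le k-1$, the Combes--Thomas thresholds of Remark~\ref{rem:specCT} forcing decay rate $\ge3\mu$, and a kernel bound $C_s\gamma^s+C_{s,\mu}\gamma^{2s}<1$ --- will all be absorbed into a single $\gamma_{k,\mu,s}$. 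If $\Upsilon_\Lambda(x,\vm;y,\vxi)\le L$ then $\Ev{|G_\Lambda(x,\vm;y,\vxi;z)|^s}\le\tfrac{\kappa}{1-s}\le\tfrac{\kappa}{1-s}\e^{\mu L}\e^{-\mu\Upsilon_\Lambda(x,\vm;y,\vxi)}$ by the \emph{a priori} bound, Lemma~\ref{lem:apriori}. So assume $\Upsilon_\Lambda>L$; by the symmetry $|G_\Lambda(x,\vm;y,\vxi)|=|G_\Lambda(y,\vxi;x,\vm)|$ we may assume the maximum defining $\Upsilon_\Lambda$ is attained by $\|x-y\|$ or by $R:=R_{\vm|\vxi}(x)$.

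\emph{Oscillator--spread regime} ($R=\Upsilon_\Lambda>\|x-y\|$). Let $u^\ast$ realize $R$: $\|u^\ast-x\|=R$, $\vm(u^\ast)>0$, $\vm(u^\ast)\ne\vxi(u^\ast)$. Set $\Gamma=B_{R-1;\Lambda}(x)$, so $x,y\in\Gamma$ and $u^\ast\notin\Gamma$, and apply \eqref{gre} with $\Lambda_1=\Gamma$. The decoupled term $G_{\Gamma\oplus(\Lambda\setminus\Gamma)}(x,\vm;y,\vxi)$ vanishes, because confining the particle to $\Gamma$ freezes the oscillators outside $\Gamma$ while $\vm$ and $\vxi$ disagree at $u^\ast\notin\Gamma$; thus $G_\Lambda(x,\vm;y,\vxi)$ is only the correction, a sum over boundary hops $v\in\partial\Gamma$, $w\sim v$, $w\notin\Gamma$ of $\dirac{x,\vm}{(H_{\Gamma\oplus(\Lambda\setminus\Gamma)}-z)^{-1}}{v,\vla}\,K_{v,\vla}^{w,\vla'}\,G_\Lambda(w,\vla';y,\vxi;z)$. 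On the frozen sector $\vla_{\Gamma^c}=\vm_{\Gamma^c}$ the left factor is the Green's function on $\Gamma$ of a disordered Holstein Hamiltonian carrying $k-j$ oscillator excitations, $j:=\sum_{u\notin\Gamma}\vm(u)\ge\vm(u^\ast)\ge1$, at the shifted parameter $z-\omega j\in\bS_{k-j}$ with $k-j\le k-1$ (the $O(\gamma)$ boundary term produced by $P_\Gamma\Delta_\Lambda P_\Gamma$ is a bounded deterministic potential, harmless for the estimates), so Theorem L($k-j$) of Condition C($k$) applies. Raising to the power $s<\tfrac12$, taking expectations, and splitting the product of the two Green's functions by H\"older with $p=q=2$,
\begin{equation*}
\Ev{|G_\Lambda(x,\vm;y,\vxi)|^s}\le\gamma^s\sum_{v,w,\vla,\vla'}|K_{v,\vla}^{w,\vla'}|^s\,\Ev{|G_\Gamma(x,\vm_\Gamma;v,\vla_\Gamma;z-\omega j)|^{2s}}^{1/2}\Ev{|G_\Lambda(w,\vla';y,\vxi;z)|^{2s}}^{1/2}.
\end{equation*}
Since $2s<1$ the second factor is $\le(\kappa/(1-2s))^{1/2}$ by Lemma~\ref{lem:apriori}, while the first is $\le A_{k-j,\mu',2s}^{1/2}\,\e^{-\frac{\mu'}{2}(\Upsilon_\Gamma(x,\vm_\Gamma;v,\vla_\Gamma)+\cR^{(k-j)}_\Gamma(\vm_\Gamma,\vla_\Gamma))}$ by Theorem L($k-j$), with $\Upsilon_\Gamma(x,\vm_\Gamma;v,\vla_\Gamma)\ge\|x-v\|=R-1$. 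Summing over $v$ (polynomially many in $R$), $w$ ($\le2D$), and $\vla,\vla'$ --- the $|K|^s$ factors controlled by Proposition~\ref{disp0}, residual $(\vm(v)\vee1)^{1/2}$--type growth bounded by the $k$--dependent constant $(k\vee1)^{1/2}$, and the $\cR^{(k-j)}_\Gamma$ factors summed via Lemma~\ref{lem:expsumbound}--type estimates --- and using $\mu'\gg\mu$ to swallow the surface and growth factors into $\e^{-\mu'(R-1)/2}$, we get $\Ev{|G_\Lambda(x,\vm;y,\vxi)|^s}\le A\,\e^{-\mu R}=A\,\e^{-\mu\Upsilon_\Lambda(x,\vm;y,\vxi)}$ with $A$ independent of $\Lambda$. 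Crucially this bound does not refer to $G_\Lambda$ on its right, so this regime is not recursive.

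\emph{Position regime} ($\|x-y\|=\Upsilon_\Lambda\ge R$) \emph{and assembly.} Here we mimic the $k=0$ in--band proof: since $\|x-y\|>L\ge1$, apply \eqref{gre2} with $\Lambda_1=\{x\}$; the decoupled term vanishes and $T_{\{x\};\Lambda\setminus\{x\}}$ hops only between $x$ and its neighbours, so $G_\Lambda(x,\vm;y,\vxi)=-\gamma\sum_{x'\sim x,\vla,\vla'}\dirac{x,\vm}{(H_\Lambda-z)^{-1}}{x,\vla}K_{x,\vla}^{x',\vla'}G_{\Lambda\setminus\{x\}}(x',\vla';y,\vxi;z)$. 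The first factor is diagonal in particle position, so $\bbE_x$ of its $s$--th power is $\le\kappa/(1-s)$ by Lemma~\ref{lem:apriori}, the remaining factors being $v_x$--independent. If $\ket{x',\vla'}$ has $N_\Lambda(\vla')=k$ we keep $G_{\Lambda\setminus\{x\}}(x',\vla';y,\vxi)$; otherwise we return to the band via the identity underlying \eqref{firstresout}, writing $P^{(k;\text{out})}(H_{\Lambda\setminus\{x\}}-z)^{-1}\ket{y,\vxi}$ as $-\gamma(H_{\Lambda\setminus\{x\}}^{(k;\text{out})}-z)^{-1}T^{(k)}_{\Lambda\setminus\{x\}}(H_{\Lambda\setminus\{x\}}-z)^{-1}\ket{y,\vxi}$ and bounding the $(H^{(k;\text{out})}-z)^{-1}$ factor by Corollary~\ref{cor:specCT} (decay rate $\ge3\mu$ for $\gamma$ small, Remark~\ref{rem:specCT}), so that an in--band Green's function $G_{\Lambda\setminus\{x\}}(x'',\vla'';y,\vxi)$, $N_\Lambda(\vla'')=k$, reappears. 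Summing the $|K|^s$, $Q$-- and Combes--Thomas factors over oscillator configurations with Proposition~\ref{disp0} and Lemmas~\ref{lem:expsumbound}--\ref{lem:expsumbound2} (with $Q\le(k\vee1)^{1/2}$) produces an iterable inequality $\Ev{|G_\Lambda(x,\vm;y,\vxi)|^s}\le\sum_{\ket{x'',\vla''}:\,x''\in\Lambda\setminus\{x\},\,N_\Lambda(\vla'')=k}\mathcal K(x,\vm;x'',\vla'')\Ev{|G_{\Lambda\setminus\{x\}}(x'',\vla'';y,\vxi)|^s}$ with $\sup\sum\mathcal K(x,\vm;x'',\vla'')\e^{\mu\Upsilon_\Lambda(x,\vm;x'',\vla'')}\le C_s\gamma^s+C_{s,\mu}\gamma^{2s}<1$. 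To close, for finite $\Lambda$ put $F_\Lambda=\sup_{W\subseteq\Lambda}\sup_{\ket{a,\vec p},\ket{b,\vec q}\in\cE_W^{(k)}}\e^{\mu\Upsilon_W(a,\vec p;b,\vec q)}\Ev{|G_W(a,\vec p;b,\vec q)|^s}<\infty$; the three cases above bound each such pair by $\e^{\mu L}\kappa/(1-s)$, by $A$, or by $(C_s\gamma^s+C_{s,\mu}\gamma^{2s})F_\Lambda$ respectively, using the triangle inequality for $\Upsilon$ (Proposition~\ref{metric}) together with $\Upsilon_{W\setminus\{a\}}\ge\Upsilon_W$. Hence $F_\Lambda\le\max\{\e^{\mu L}\kappa/(1-s),A\}+(C_s\gamma^s+C_{s,\mu}\gamma^{2s})F_\Lambda$, so $F_\Lambda$ is bounded uniformly in $\Lambda$, and strong resolvent convergence extends the estimate to infinite $\Lambda$, giving Lemma~\ref{lem:k}.

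\emph{Main obstacle.} The heart of the matter is the oscillator--spread regime: one must choose the sub--ball so that the decoupled term genuinely vanishes --- which is exactly what pins a frozen oscillator at distance $R$ from $x$ --- then recognize the shifted spectral parameter as lying in $\bS_{k-j}$ so that a \emph{strictly lower} band's hypothesis is available, and extract the entire decay $\e^{-\mu R}$ from this single application rather than from an iteration. The hypothesis $s<\tfrac12$ is used precisely here, to split by H\"older with $p=q=2$ a product of two Green's functions built from overlapping random potentials. The remaining labour is the (routine but delicate) bookkeeping of the oscillator--configuration sums, where the $(N\vee1)^{1/2-s/4}$ growth in Proposition~\ref{disp0} and the $(R\vee1)^{D-1}$ surface factors must be shown to be swallowed by the exponential gains $\e^{-\mu'(R-1)/2}$ and $\e^{-\mu'\cR^{(k-j)}_\Gamma/2}$.
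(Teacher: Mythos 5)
Your oscillator--spread regime is sound and is essentially the paper's own argument (Lem.\ \ref{lem:filter_large_R}): decouple the interior of the ball of radius $R_{\vm|\vxi}(x)$ so the decoupled term vanishes, Cauchy--Schwarz with exponent $2s<1$ (this is indeed where $s<\nicefrac{1}{2}$ enters), the \emph{a priori} bound on the full-resolvent factor, and Condition C($k$) at the shifted parameter $z-\omega j\in \bS_{k-j}$. The genuine gap is in your position regime and in the assembly. After removing the single site $\{x\}$, the factor you control by the \emph{a priori} bound, $\bbE_x\bigl(\abs{\dirac{x,\vm}{(H_\Lambda-z)^{-1}}{x,\vla}}^s\bigr)\le \kappa/(1-s)$, is \emph{uniform in} $\vla$; the hop kernel $K_{x,\vla}^{x',\vla'}$ only links $\vla$ to $\vla'$ at the two sites $x,x'$, and the Combes--Thomas factor decays only between the intermediate states, never back toward $\vm$ or $\vxi$. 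Consequently every in-band configuration $\vla''\in\cM_\Lambda^{(k)}$ (with $\vla''(x)=\vxi(x)$) contributes a term of size $\gtrsim c\,\gamma^s$ to your kernel $\mathcal K(x,\vm;x'',\vla'')$ with no decay in $\vla''$. Even dropping the weight $\e^{\mu\Upsilon(x,\vm;x'',\vla'')}$ and using only the decay $\e^{-\mu R_{\vla''|\vxi}(x'')}$ recovered from the target factor, the configurations differing from $\vxi$ inside the ball of radius $\approx\|x-y\|$ about $x''$ receive no decay at all, and there are of order $\|x-y\|^{Dk}$ of them. So the claimed bound $\sup\sum_{x'',\vla''}\mathcal K\,\e^{\mu\Upsilon}\le C_s\gamma^s+C_{s,\mu}\gamma^{2s}<1$ is false; a one-step bound of the form $\gamma^s\,\|x-y\|^{Dk}F_\Lambda$ is the best your scheme yields, and the sup-based $F_\Lambda$ iteration does not close. (The $k=0$ proof you are mimicking escapes this only because there is a single in-band configuration, $\vec 0$.)

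This resonant-configuration entropy is exactly what the paper's machinery is built to absorb, and it is the part your proposal does not reproduce. First, the iterated quantity $F_\Lambda(k,\mu,s)$ of eq.\ \eqref{eq:Fdef} is not a supremum over pairs but contains a \emph{sum over the left oscillator configuration} $\vm\in\cM_W^{(k)}$ (with the non-symmetric weight $\ol{\Upsilon}_W$), so the uncontrolled sum over intermediate in-band configurations produced by one step is carried along inside the iterated quantity rather than multiplying it at each step. Second, when $\|x-y\|>R_{\vm|\vxi}(x)$ the paper does not remove the single site $x$ but the whole ball $\Gamma_{\vm|\vxi}(x)=B_R(x)$, via the two-step geometric decoupling of Prop.\ \ref{prop:filter} together with Lem.\ \ref{FM fterm} (induction hypothesis inside $\Gamma^\circ$, giving $\e^{-\mu R}$) and Lem.\ \ref{lem:ftermCT} (Combes--Thomas control of off-shell boundary states); the resulting positional decay $\e^{-\mu\|x-w\|}$ with $\|x-w\|>R$ is what beats the $\sigma_D^k R^{Dk}$ count of left configurations with $R_{\vm|\vxi}(x)\le R$ in eq.\ \eqref{eq:smallR}. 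Without these two ingredients (summed correlator and ball removal), your assembly step does not give a $\Lambda$-uniform bound. Two smaller points: the inequality ``$\Upsilon_{W\setminus\{a\}}\ge\Upsilon_W$'' you invoke goes the wrong way in general (it is harmless only because the frozen oscillator at the removed site forces agreement there), and the parenthetical dismissal of the boundary term of $P_\Gamma\Delta_\Lambda P_\Gamma$ should be replaced by the observation that, with the paper's convention \eqref{Kdef} for the kernel of $\Delta_\Lambda$, the restriction to the frozen sector is exactly $H_\Gamma+\omega N_V(\vm)$, so no extra perturbation arises.
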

\noindent Second, we will prove the general results by applying the Combes-Thomas bound.

The proof of the in-band lemma relies on local decay inequalities, proved by a geometric decoupling argument, to  bound in-band correlations by restricting to regions with excitation number below $k$.  In \S\ref{sec:FMsteps} we present  two  local decay lemmas and show how they imply Lem.\ \ref{lem:k}.  In \S\ref{sec:filter} we describe the geometric decoupling argument and auxiliary bounds.  The local decay lemmas are proved in \S\ref{sec:localdecay}. Finally in \S\ref{sec:outofband} we complete the proof of Thm.\ \ref{thm:main} by proving the implication ``Condition C($k$) $\Rightarrow$ Theorem L($k$).''

\subsection{Local decay and in-band correlations}\label{sec:FMsteps}  Given oscillator states $\vm,\vxi\in\cM_\Lambda$, we define \be\label{eq:Gammamxix} \Gamma_{\vm|\vxi}(x)  \ := \  B_{R;\Lambda}(x), \quad R=R_{\vm|\vxi}(x).\ee 
When $x,y$ are distant enough so that  $\|x - y\|  > R_{\vm|\vxi}(x)=R$, the state $\vm$ has fewer than $k$ excitations in the interior of $\Gamma_{\vm|\vxi}(x)$, allowing us to prove the following
 \bl \label{lem:filtersmallR}  Let $k\ge 1$ and assume that Condition \emph{C($k$)} holds. Fix $\mu>0$ and $0<s<1$. Then there is a $\gamma_{k,\mu,s}^\sharp >0$ and $A^\sharp_{k,\mu,s} <\infty $ so that for $ \gamma <\gamma_{k,\mu,s}^\sharp$ we have 
 \be \label{fsR1}
          \bbE_{\Gamma}  \left ( |G_\Lambda(x,\vm; y,\vxi;z)|^s \right )
            \ \leq  \     A^\sharp_{k,\mu,s} \gamma^{s}
            \sum_{\ket{w,\vla} \in \cE_{ \Lambda \setminus \Gamma}^{(k)}}
             \e^{-\mu \norm{x-w}}     
            \abs{G_{\Lambda\setminus \Gamma}( w,\vla;y,\vxi_{\Lambda\setminus \Gamma}; z) }^s,
            \ee
 where $\Gamma = \Gamma_{\vm|\vxi}(x)$ and the bound holds
 for any $\Lambda \subset \bbZ^d$,  $\ket{ x,\vm } \in \cE_\Lambda^{(k)}$,  $\ket{y ,\vxi}  \in \cE_\Lambda$  with $y$  an exterior point $y\in  \Lambda \setminus \Gamma $, and $z\in \bS_k$.               \el
 
On the other hand, for $x,y$ sufficiently close, $\|x - y\| \le R_{\vm|\vxi}(x)$ and  the sites $x$ and $y$ are both contained in $\Gamma=\Gamma_{\vm|\vxi}(x)$.  Since $\vm$ and $\vxi$ differ outside of $\Gamma^\circ$, the states $\ket{x,\vm}$ and $\ket{y,\vxi}$ lie in orthogonal invariant subspaces for the Hamiltonian $H_{\Gamma^\circ \oplus \Lambda \setminus \Gamma^\circ}$,  leading to the following
\bl \label{lem:filter_large_R} 
Let $k\ge 1$ and assume that Condition \emph{C($k$)} holds. Fix $\mu>0$ and $0<s<\nicefrac{1}{2}$. Then there is $\wt{\gamma}_{k,\mu,s}>0$ and $\wt{A}_{k,\mu,s}<\infty$ so that for $\gamma <\wt{\gamma}_{k,\mu,s}$ we have 
\be\label{eq:filterlR} \bbE \left ( |G_\Lambda(x,\vm; y,\vxi; z)|^s \right ) \ \le \ \wt{A}_{k,\mu,s} \e^{-\mu R_{\vm|\vxi}(x)},
\ee
for any $\Lambda \subset \bbZ^d$,  $\ket{ x,\vm } \in \cE_\Lambda^{(k)}$,  $\ket{y ,\vxi}  \in \cE_\Lambda$, and $z\in \bS_k$.
  \el 

With these results we now give the
\begin{proof}[Proof of the in-band Lemma, Lem.\ \ref{lem:k}]
As a preliminary observation, note that it suffices to prove the result under the additional assumption that $\Lambda$ is finite, provided the bounds obtained are uniform in the size of $\Lambda$.  This is so because the Hamiltonians $H_{\Lambda_n}$ with $\Lambda_n=[-n,n]^D\cap \Lambda$ converge to $H_\Lambda$ in the strong resolvent sense. Second, by symmetry of the Green's function it suffices to prove eq.\ \eqref{eq:inband} with $\Upsilon_\Lambda$ replaced by the non-symmetric function
\be \ol{\Upsilon}_\Lambda(x,\vm;y,\vxi) \ = \ \max \left ( \|x-y\|, R_{\vm|\vxi}(x) \right ).\ee

Given a finite set $\Lambda \subset \bbZ^d$, we define 
 \be\label{eq:Fdef} F_\Lambda (k,\mu,s) \ = \ \max_{W\subset \Lambda} \sup_{z\in \bS_k}\max_{\ket{y,\vxi} \in \cE_W^{(k)} } \max_{x\in W}  \ \sum_{\vm \in\cM_W^{(k)} } \e^{\mu \ol{\Upsilon}_W(x,\vm;y,\vxi)}  \bbE_W(\abs{G_W(x,\vm;y,\vxi;z)}^s). \ee
 Since $\ol{\Upsilon}_\Lambda(x,\vm;y,\vxi) \le \diam \Lambda$ for any $\ket{x,\vm},\ket{y,\vxi} \in \cE_\Lambda$, we have the \emph{a priori} bound
 \be F_\Lambda(k,\mu,s) \ \le \ \frac{\kappa}{1-s} \e^{\mu \diam \Lambda} |\cM_\Lambda^{(k)}| \ < \ \infty,\ee
 by Lem.\ \ref{lem:apriori}.  Note that $\abs{\cM_\Lambda^{(k)}} = \abs{\Lambda}^k < \infty$.  
 
 To complete the proof, we must show that there is $\gamma_{k,\mu,s} <\infty$ such that for $\gamma <\gamma_{k,\mu,s}$ we have 
 \be F_\Lambda(k,\mu,s) \ \le \ C_{k,\mu,s}, \ee
 with a constant \emph{independent of $\Lambda$}.  We will obtain this bound by dividing the sum on the right hand side of eq.\ \eqref{eq:Fdef} into two parts, depending on whether $\ol{\Upsilon}_W=\|x-y\|$ or $R_{\vm|\vxi}(x)$.
 
Let $W\subset \Lambda$, $z\in \bS_k$, $\ket{y,\vxi}\in \cE_W^{(k)}$, and $x\in W$.  Suppose that  $\ol{\Upsilon}_W(x,\vm;y,\vxi) = \|x-y\| >  R_{\vm|\vxi}(x)= R$.  Let $\Gamma=\Gamma_{\vm|\vxi}(x)$. By Lem.\ \ref{lem:filtersmallR}, we have for $\gamma <\gamma^\sharp_{k,2\mu,s}$
\begin{multline}
	\e^{\mu\ol{\Upsilon}_W(x,\vm,y,\vxi)} \Ev{\abs{G_W(x,\vm;y,\vxi;z)}^s} \\
	\le \ A^\sharp_{k,2\mu,s} \gamma^s \sum_{\ket{w,\vla} \in \cE_{W\setminus \Gamma}^{(k)}} \e^{\mu\norm{x-y}-2\mu \norm{x-w}} \Ev{\abs{G_{W\setminus \Gamma}(w,\vla;y,\vxi;z)}^s}  \\
	\le \ A^\sharp_{k,2\mu,s} \gamma^s \sum_{\ket{w,\vla} \in \cE_{W\setminus \Gamma}^{(k)}} \e^{-\mu \norm{x-w}} \e^{\mu \norm{w-y}} \Ev{\abs{G_{W\setminus \Gamma}(w,\vla;y,\vxi;z)}^s} \\
	\le \ A^\sharp_{k,2\mu,s} \gamma^s \left ( \sum_{\twoline{w \in W }{ \|w-x\| >R}} \e^{-\mu \norm{w-x}} \right ) F_\Lambda(k,\mu,s) .
\end{multline}
Since the number of oscillator configurations $\vm\in \cM_\Lambda^{(k)}$ with $R_{\vm|\vxi}(x) \le  R$ is bounded by $\sigma_D^k (R\vee 1)^{Dk}$, we see that
\be 
\sum_{\twoline{ w\in W, \ \vm \in \cM_W^{(k)}}{R_{\vm|\vxi}(x) < \|x-y\|\wedge \|w -x\|} }\e^{-\mu \|w-x\|}   \ \le \ \sigma_D^k \sum_{w\in \bbZ^d} (\|w\|\vee 1)^{Dk} \e^{-\mu\|w\|} \ <\ \infty.
\ee
Thus 
\begin{equation}\label{eq:smallR}
	\sum_{\twoline{ \vm \in \cM_W^{(k)}}{R_{\vm|\vxi}(x) < \|x-y\|} }\e^{\mu\ol{\Upsilon}_W(x,\vm,y,\vxi)} \Ev{\abs{G_W(x,\vm;y,\vxi;z)}^s} \ \le \ B_{k,\mu,s} \gamma^s F_\Lambda(k,\mu,s),
\end{equation}
with $B_{k,\mu,s} < \infty$ independent of $\Lambda$.

On the other hand, if $\ol{\Upsilon}_W(x,\vm;y,\vxi) = R_{\vm|\vxi}(x) \ge \|x-y\|$, then an application of Lem.\ \ref{lem:filter_large_R} shows that
\be \e^{\mu \ol{\Upsilon}_W(x,\vm;y,\vxi)} \bbE \left ( \abs{G_W(x,\vm;y,\vxi;E+\im \epsilon)}^s \right ) \ \le \ \wt{A}_{k,2\mu,s} \e^{-\mu R_{\vm|\vxi}(x)} , \ee 
	provided $\gamma <\wt{\gamma}_{k,2\mu,s}$.  Thus
\begin{multline}\label{eq:largeR}
	\sum_{\twoline{ \vm \in \cM_W^{(k)}}{R_{\vm|\vxi}(x) \ge  \|x-y\|} }\e^{\mu\ol{\Upsilon}_W(x,\vm,y,\vxi)} \Ev{\abs{G_W(x,\vm;y,\vxi;z)}^s} \\
	\le \ \wt{A}_{k,2\mu,s} \sigma_D^k\sum_{R=\|x-y\|}^\infty (R\vee 1)^{Dk} \e^{-\mu R} \\
	\le \ \wt{A}_{k,2\mu,s} \sigma_D^k\sum_{R=0}^\infty (R\vee 1)^{Dk} \e^{-\mu R} \ =: \ b_{k,\mu,s} \ < \ \infty,
\end{multline}
where we have used again the fact that the number of oscillator configurations $\vm$ with $R_{\vm|\vxi}(x)\le R$ is bounded by $\sigma_D^k(R\vee 1)^{Dk} $.

Summing eqs.\ \eqref{eq:smallR} and \eqref{eq:largeR} and  maximizing over $W$, $\ket{y,\vxi}$, $x$ and $z$, we find that
\be \label{eq:closedform} F_\Lambda(k,\mu,s) \ \le \ b_{k,\mu,s} + B_{k,\mu,s}\gamma^s F_\Lambda(k,\mu,s),\ee
provided $\gamma < \gamma_{k,2\mu,s}^\sharp \wedge \wt{\gamma}_{k,2\mu,s}$. Thus, if
\be \gamma \ < \ \gamma_{k,\mu,s} \ := \ \min \left ( \gamma^\sharp_{k,\mu,s} , \ \wt{\gamma}_{k,\mu,s}, \ B_{k,\mu,s}^{-\frac{1}{s}} \right ) , \ee
we have \be F_\Lambda(k,\mu,s) \ \le \ \frac{b_{k,\mu,s}}{1- B_{k,\mu,s}\gamma^s} , \ee
independent of $\Lambda$.
\end{proof}

\subsection{Geometric Decoupling}\label{sec:filter} To prove the local decay lemmas, Lems.\ \ref{lem:filtersmallR} and \ref{lem:filter_large_R}, we will use a two step geometric resolvent identity  to decouple a ball from its complement.  A similar geometric decoupling argument was introduced in the context of random Schr\"odinger operators in \cite{Aizenman2001}.  In this section we adapt the argument of \cite{Aizenman2001} to the present context and prove auxiliary bounds.

For $\Gamma \subset \Lambda$ and $x \in \Gamma$, we  consider the following
     sum of the Green's function over boundary sites  of the subset $\Gamma$
    \be\label{Fterm}
            F_\Lambda^s (x,\vm; \Gamma;z) = 
        \sum_{w \in \partial \Gamma}
        \sum_{ 
                \twoline{\vla \in \cM_\Lambda}
                {  \vla_V = \vm_V   } } \abs{G_{\Gamma\oplus V}(x,\vm;w,\vla;z)}^s W_s(w,\vla) ,
      \ee
with $V=\Lambda\setminus \Gamma$ and
\be W_s(w,\vla) =  (\vla(w)\vee 1)^{\frac{1}{2}-\frac{s}{4}} \sum_{w'\sim w } (\vla(w')\vee 1)^{\frac{1}{2}-\frac{s}{4}} .\ee

A double application of the  geometric resolvent identities eqs.\ \eqref{gre}, \eqref{gre2} and  integration over the random variables corresponding to boundary points, yields the following `factorization' bound of the fractional moment. 
   \bp \label{prop:filter} For each $0< s <1$ there is $C_s <\infty$ such that if $\Gamma \subset \Lambda$, $V=\Lambda \setminus \Gamma$, $x\in \Gamma^\circ$ is an interior point of $\Gamma$, and $y\in V$, then \be 
    \bbE_{\partial \Gamma} \left ( |G_\Lambda (x,\vm; y,\vxi; z)|^s \right ) \leq   C_s \gamma^{2s} F_\Lambda^{s}(x,\vm;  \Gamma^\circ;z) F^{s}_\Lambda(y,\vxi;  V;z  ) .
               \ee
   \ep

\bpf By \eqref{gre},
\be G_\Lambda(x,\vm; y,\vxi;z)  = -\gamma \dirac{x,\vm}{ ( H_{\Gamma^\circ \oplus V^+ } - z)^{-1} 
                T_{\Gamma^\circ;V^+} 
      (H_{ \Lambda } - z)^{-1} }{y,\vxi },
               \ee
where $V^+=G \cup \partial \Gamma=\Lambda \setminus \Gamma^\circ$.
Applying \eqref{gre2} to the right hand side yields, 
\be   G_\Lambda(x,\vm; y,\vxi; z)  =  \gamma^2 \dirac{x, \vm}{( H_{\Gamma^\circ \oplus V^+ } - z)^{-1} 
                T_{\Gamma^\circ;V^+}       
        (H_{\Lambda } - z)^{-1}  
    T_{V;\Gamma} 
    (H_{\Gamma\oplus V } -z)^{-1}}{y,\vxi}.\ee
Introducing a partition of unity with respect to the basis $\cE_\Lambda$ on either side of the middle factor $T  (H - z)^{-1}T$ we obtain
\begin{multline} \label{eq:filterstep} G_\Lambda(x,\vm; y,\vxi; z) \\ =    
\gamma^2 \sum_{\twoline{x_1\in \partial \Gamma^\circ}{\vm_1 \in \cM_\Lambda}} \sum_{\twoline{ y_1\in \partial V}{\vxi_1\in \cM_\Lambda}} G_{\Gamma^\circ \oplus V^+} (x,\vm;x_1,\vm_1,z) \dirac{x_1,\vm_1}{T_{\Gamma^\circ;V^+}        
        (H_{\Lambda } - z)^{-1}  
    T_{V;\Gamma}}{y_1,\vxi_1} \\ \times G_{\Gamma\oplus V}(y_1,\vxi_1;y,\vxi,z).
               \end{multline}
Note that $\vm_1(u) =  \vm(u)$ for $u \notin \Gamma^\circ$ and $\vxi_1(u) = \vxi(u) $ for $u \not \in V$. 

The middle factor $\dirac{x_1,\vm_1}{T_{\Gamma^\circ;V^+}        
        (H_{\Lambda } - z)^{-1}  
    T_{V;\Gamma}}{y_1,\vxi_1}$ in each term of the sum on the right hand side of eq.\ \eqref{eq:filterstep} is the only factor that depends on the random variables $v_u$ for $u\in \partial \Gamma$.  Thus,
\begin{multline}    
    \bbE_{\partial \Gamma} \left ( |G_\Lambda (x,\vm; y,\vxi; z)|^s \right ) \\ \leq \ \gamma^{2s} 
    \sum_{\twoline{x_1\in \partial \Gamma^\circ}{\vm_1 \in \cM_\Lambda}} \sum_{\twoline{ y_1\in \partial V}{\vxi_1\in \cM_\Lambda}} \abs{G_{\Gamma^\circ\oplus V^+}(x,\vm;x_1,\vm_1,z)}^s \abs{G_{\Gamma\oplus V}(y_1,\vxi_1;y,\vxi,z)}^s \\
    \times \bbE_{\partial \Gamma} \left ( \abs{\dirac{x_1,\vm_1}{T_{\Gamma^\circ;V^+}        
        (H_{\Lambda } - z)^{-1}  
    T_{V;\Gamma}}{y_1,\vxi_1}}^s \right ) 
    \end{multline}  
Furthermore, by Lem.\ \ref{lem:apriori} and Prop.\ \ref{disp0},
\begin{multline} 
\bbE_{\partial \Gamma} \left ( \abs{\dirac{x_1,\vm_1}{T_{\Gamma^\circ;V^+}        
        (H_{\Lambda } - z)^{-1}  
    T_{V;\Gamma}}{y_1,\vxi_1}}^s \right ) \\ \le \
    \sum_{\twoline{x_2 \sim x_1}{\vm_2 \in \cM_\Lambda}} \sum_{\twoline{ y_2\sim y_1}{\vxi_1\in \cM_\Lambda}} \abs{K_{x_2;\vm_2}^{x_1,\vm_1}}^s \abs{K^{y_2;\vxi_2}_{y_1,\vxi_1}}^s \bb{E}_{\partial \Gamma} \left ( \abs{G_\Lambda(x_2,\vm_2;y_2,\vxi_2;z)}^s \right ) \\
    \le \ C_{s,\beta} \frac{\kappa^s}{1-s} W_s(x_1,\vm_1) W_s(y_1,\vxi_1),    \end{multline}
which completes the proof.        \epf
   
Note that the sum on the right hand side of eq.\ \eqref{Fterm} involves all oscillator states $\vla$ that agree with $\vm$ outside of $\Gamma$.  In particular, there are contributions from states with total excitation number ``off-shell'' for the band nearest to $z$.  However, we expect the bulk of the contribution to be due to states within the energy band.  In the following lemma, we  explicitly control the contribution of states outside the band $\bI_k$ to $F^s_\Lambda$. 
  This lemma does not rely on the induction hypothesis, Condition C($k$);  it only requires the Combes-Thomas bound Cor.\ \ref{cor:specCT}.
\bl \label{lem:ftermCT} Given $\nu >0$, $0<s<1$, and $k=0,1,\ldots$, there is $\wt{\gamma}_{k,\nu,s} >0$ such that if $\gamma <\wt{\gamma}_{k,\nu,s}$ then
\be \label{eq:ftermCT} F^{s}_\Lambda (x,\vm; \Gamma;z)    \ \le \ C_{\nu,k,s} \sum_{u\in \Gamma}\sum_{\twoline{\vla \in \cM_\Lambda^{(k)}}
                {  \vla_V = \vm_V } } \abs{G_{\Gamma\oplus V}(x,\vm;u,\vla;z)}^s 
        \sum_{w \in \partial \Gamma}
            \e^{-\nu  \|u - w\|} ,
           \ee
           where $V=\Lambda \setminus \Gamma$ and $z\in \bS_k$.
      \el
      
\bpf Note that $F_\Lambda^s (x,\vm; \Gamma;z)  \ = \ F_{\text{in}} + F_{\text{out}}$
where
\begin{multline}\label{eq:Finbound} F_{\text{in}} = \sum_{\twoline{w \in \partial \Gamma}{\vla \in \cM_\Lambda^{(k)}} } \abs{G_{\Gamma\oplus V}(x,\vm;w,\vla;z)}^s W_s(w,\vla)
                \\ \le \ 2D (k\vee 1)^{1-\frac{s}{2}} \sum_{\twoline{w \in \partial \Gamma}{\vla \in \cM_\Lambda^{(k)}} }\abs{G_{\Gamma\oplus V}(x,\vm;w,\vla;z)}^s,
\end{multline}
and
\be 
 F_{\text{out}} = \sum_{\twoline{u \in \partial \Gamma}{\vla' \not \in \cM_\Lambda^{(k)}} } \abs{G_{\Gamma\oplus V}(x,\vm;w,\vla';z)}^s W_s(w,\vla').
                 \ee

To bound $F_{\text{out}}$, we use the geometric resolvent identity eq.\ \eqref{firstres2}, which implies
\be
\left|  G_{\Gamma\oplus V} (x,\vm;u,\vla';z) \right| \ \le \ \gamma \abs{\dirac{x,\vm}{(H_{\Gamma \oplus V}- z)^{-1} T_\Gamma^{(k)}(H^{(k;\text{out})}_{\Gamma\oplus V} -z )^{-1}}{w,\vla'}} \ee 
for $\vla' \not \in \cM_\Lambda^{(k)}$. Introducing a partition of unity and applying Cor.\ \ref{cor:specCT} we obtain
\begin{multline}
	\left|  G_{\Gamma\oplus V} (x,\vm;w,\vla;z) \right| \\ \le \ C_{\nu,s}\gamma \sum_{\twoline{u\in \Gamma}{\vla\in \cM_\Lambda^{(k)}, \ \vla_V = \vla'_V}} \abs{G_{\Gamma\oplus V}(x,\vm;u,\vla;z)} (k\vee 1)^{\frac{1}{2}} \e^{-\frac{\nu}{s} d_\Gamma(u,\vla;w,\vla')},
\end{multline}
provided $\gamma \le \wt{\gamma}_{k,\nu,s}$. Here we have used the bound $Q(\vla,u)\le (2D+1) \sqrt{k\vee 1}$ for $\vla\in \cM_\Lambda^{(k)}$.  Thus
\begin{multline}\label{eq:Foutbound}
	F_{\text{out}} \ \le \ C_{k,\nu,s}  \sum_{\twoline{w \in \partial \Gamma}{\vla' \not \in \cM_\Lambda^{(k)}} } (N_\Lambda(\vla')\vee 1)^{1-\frac{s}{2}}\sum_{\twoline{u\in \Gamma}{\vla\in \cM_\Lambda^{(k)}}} \abs{G_{\Gamma\oplus V}(x,\vm;u,\vla;z)}^s \e^{-\nu d_\Gamma(u,\vla;w,\vla')} \\
	\le \ C_{k,\nu,s}  \sum_{\twoline{u\in \Gamma}{\vla\in \cM_\Lambda^{(k)}}} \abs{G_{\Gamma\oplus V}(x,\vm;u,\vla;z)}^s \sum_{w\in \partial \Lambda} \e^{-\nu \|u-w\|}
\end{multline}
by  Lem.\ \ref{lem:expsumbound2}.  Combining eq.\ \eqref{eq:Foutbound}  and eq.\ \eqref{eq:Finbound} completes the proof. \epf

The final lemma of this section provides a bound, depending on Condition C($k$), on the expectation of $F_\Lambda^s(x,\vm;\Gamma;z)$.
\bl \label{FM fterm} Assume Condition C($k$), let $0<s < 1$ and let $\mu >0$. Then there is $\gamma_{k,s,\mu} >0$ so that for $\gamma <\gamma_{k,s,\mu}$  and $\Lambda \subset \bbZ^d$ if for some $\ket{x,\vm} \in \cE_\Lambda^{(k)}$ and $R > 0$ it holds that
       \be \label{eq:localexsmall} \sum_{ u \in \Gamma} \vm(u) \ < \ k,\ee
       where $\Gamma = B_{R;\Lambda}(x)$,
       then
\be \label{f large R bound} 
         \bbE_{\Gamma} \left ( F^{s}_{\Lambda}
       (x,\vm; \Gamma;z)  \right )
       \leq C_{k,\mu,s} \e^{-\mu R} ,
                      \ee 
       for any $z \in \bS_k$.
       \el

       \bpf Fix $\nu >\mu$ and suppose that $\gamma<\wt{\gamma}_{k,\nu}$, with $\wt{\gamma}_{k,\nu}$ as in Lem.\ \ref{lem:ftermCT}.  Let $\ket{x,\vm}\in \cE_\Lambda^{(k)}$ satisfy eq.\ \eqref{eq:localexsmall}.  Applying the bound eq.\ \eqref{eq:ftermCT} to $F^s_\Lambda$ and then averaging with respect to $\bbE_{\Gamma}$ yields
       \be \bbE_{\Gamma} \left ( F_\Lambda^s(x,\vm;\Gamma;z) \right ) \ \le \ C_{\nu,k,s} \sum_{u \in \Gamma} \sum_{\vla \in \cM_\Lambda^{(k)} } \bbE_\Gamma \left ( \abs{G_{\Gamma\oplus V}(x,\vm;u,\vla;z)}^s \right ) \sum_{w\in \partial \Gamma} \e^{-\nu \norm{u-w}}. \ee
       In the invariant subspace containing $\ket{x,\vm}$ and $\ket{u,\vla}$,  $H_{\Gamma \oplus V}$ contains no hopping terms between states with differing oscillator excitations in $V$.  Thus $\vm_V = \vla_V$ and
       \begin{multline} G_{\Gamma\oplus V}(x,\vm;u,\vla;z) \ = \ \dirac{x,\vm_\Gamma}{(H_\Gamma + \omega N_V(\vm) - z)^{-1}}{u,\vla_\Gamma} \\ = \ G_\Gamma(x,\vm_\Gamma;u,\vla_\Gamma;z-\omega j),\end{multline}
       where $j=N_V(\vm)\ge 1$ by assumption. 
       
       Applying condition C($k$), we see that there is $\gamma^{\sharp}_{k,\nu,s} \le \wt{\gamma}_{k,\nu}$ so that for $\gamma <\gamma^{\sharp}_{k,\nu,s}$ we have 
       \be \bbE_\Gamma \left ( F_\Lambda^s(x,\vm;\Gamma;z) \right ) \ \le \ C_{\nu,k,s}   \sum_{u \in \Gamma}  \sum_{\vla \in \cM_\Gamma^{(k-j)} }  \sum_{w\in \partial \Gamma} \e^{-2\nu \Upsilon_\Gamma(x,\vm_\Gamma;u,\vla)-\nu \norm{u-w}}.\ee     
       By Lem.\ \ref{lem:expsumbound} we see that
       \begin{multline}  \bbE_\Gamma \left ( F_\Lambda^s(x,\vm;\Gamma;z) \right ) \ \le \ C_{\nu,k,s} \sum_{u\in \Gamma}\sum_{w\in \partial \Gamma} \e^{-\nu \|x-u\| -\nu \|u-w\|} \\ \le \ C_{\nu,k,s} R^{2D -1 } \e^{-\nu R} \ \le \ C_{\mu,\nu,k,s} \e^{-\mu R} .\qedhere\end{multline}
              \epf 

\subsection{Proof of the local decay lemmas}\label{sec:localdecay}
\subsubsection{Proof of Lem.\ \ref{lem:filtersmallR}}
 We consider the $R = 0$  and the $R > 0$ cases separately. The $R = 0$ case does not depend on the inductive hypothesis Condition C($k$), but only requires the Combes Thomas bound via Lemma \ref{lem:ftermCT}. It is covered by the following
  \bl\label{int step} Given $0<s<1$, $\nu >0$, and $k=0,1,\ldots$, there is $\wt{\gamma}_{k,\nu,s}$ such that if $\gamma <\wt{\gamma}_{k,\nu,s}$ then
  \be \bbE_{\{x\}} \left ( \abs{G_\Lambda(x,\vm;y,\vxi;z)}^s \right ) \ \le \ C_{k,\nu,s} \gamma^s \sum_{w\in \Lambda \setminus \{x\}} \sum_{\vla \in \cM_{\Lambda}^{(k)}} \e^{-\nu \norm{x-w}} \abs{G_{\{x\} \oplus\Lambda\setminus \{x\}}(w,\vla;y,\vxi;z)}^s \ee
 for any $\Lambda \subset \bbZ^d$,  $\ket{ x,\vm } \in \cE_\Lambda^{(k)}$,  $\ket{y ,\vxi}  \in \cE_\Lambda$  with $y\neq x$, and $z\in \bS_k$. 
  \el 
  
  \bpf The argument runs parallel the beginning of the proof of the $k=0$ in-band lemma, Lem.\ \ref{lem:k=0}.  By eq.\ \eqref{gre2} we have  
\be  G_\Lambda(x,\vm;y,\vxi ; z) \ = \  
                - \gamma \dirac{ x,\vm}{
                    (H_{\Lambda} - z)^{-1}
              T_{\{x\};\Lambda\setminus \{x\}}
              (H_{\{x\}\oplus \Lambda\bks\{x\} } - z)^{-1 }}{y,\vxi}.\ee
Introducing a partition of unity to the left of $T_{\{x\};\Lambda\setminus\{x\}}$, we obtain
 \begin{multline}    G_\Lambda(x,\vm;y,\vxi ; z) \\ = \ - \gamma
               \sum_{\vm_1\in \cM_\Lambda }   \dirac{ x,\vm}{
                    (H_{\Lambda} - z)^{-1} }{x, \vm_1}
                    \dirac{ x,\vm_1 }{T_{\{x\} ; \Lambda\setminus \{x\}}
     (H_{\{x\}; \Lambda\bks\{x\} } - z)^{-1 }} {y,\vxi}.  \end{multline}
Since the second factor in the sum is independent of $v_x$, it follows from Lem.\ \ref{lem:apriori} that
\begin{multline} \bbE_{\{x\}} \left ( |G_\Lambda(x,\vm;y,\vxi ; z)|^s \right ) \ \le \  \frac{\kappa}{1-s} \gamma^s \sum_{\vm_1\in \cM_\Lambda }   
                    \abs{\dirac{ x,\vm_1 }{T_{\{x\} ; \Lambda\setminus \{x\}}
     (H_{\{x\}; \Lambda\bks\{x\} } - z)^{-1 }} {y,\vxi}}^s \\
    \le \ C_{k,s} \gamma^s F^s_\Lambda(y,\vxi; \Lambda \setminus \{x\}).
    \end{multline}
    An application of Lemma \ref{lem:ftermCT} completes the proof.
 \epf
 
 When $R>0$, we apply  Prop.\ \ref{prop:filter} followed by averaging with $\bbE_{\Gamma^\circ}$ to conclude that
  \be \bbE_{\Gamma}  \left ( |G(x,\vm; y,\vxi; z)|^s  \right ) \ \le \
  C_s \gamma^{2s} \bbE_\Gamma \left ( F_\Lambda^{s}(x,\vm;  \Gamma^\circ ;z) \right ) F^{s}_\Lambda(y,\vxi;  V;z  ),
               \ee
  where $V=\Lambda \setminus \Gamma$.
  We apply Lem.\ \ref{FM fterm} to the first factor and  Lem.\ \ref{lem:ftermCT} to the second factor, taking $\nu >\mu$.  It follows, for $\gamma$ sufficiently small, that
  \begin{multline} \bbE_{\Gamma} \left ( |G(x,\vm; y,\vxi; z)|^s  \right ) \\ \le \ C_{\mu,\nu,k,s} \gamma^s \e^{-\mu R_{\vm|\vxi}(x)} \sum_{u\in V}\sum_{w \in \partial V}
            \e^{-\nu  \|u - w\|}\sum_{\vla \in \cM_\Lambda^{(k)} } \abs{G_{\Gamma\oplus V}(u,\vla,;y,\vxi;z)}^s  \\
        \le \ C_{\mu,\nu,k,s} \gamma^s \sum_{u\in V}   \e^{-\mu \|x-u\|}\sum_{\vla \in \cM_\Lambda^{(k)} } \abs{G_{\Gamma\oplus V}(u,\vla,;y,\vxi;z)}^s. \qedhere
         \end{multline}
  This completes the proof of Lem.\ \ref{lem:filtersmallR}.

\subsubsection{Proof of Lem.\ \ref{lem:filter_large_R}.} If $R=R_{\vm|\vxi}(x)=0$, then the result follows from the \emph{a-priori} bound, Lem.\ \ref{lem:apriori}. Suppose that $R>0$ and let $\Gamma = \Gamma_{\vm|\vxi}(x)$. Then, we have that $\sum_{u \in  \Gamma^\circ} |\vm(u)|= k-\ol{k}$ with $\ol{k} \ge 1$. Furthermore, $G_{\Gamma^\circ\oplus \Lambda \setminus\Gamma^\circ}(x,\vm;y,\vxi;z) =0$,
since $x\in \Gamma^\circ$ and $\vm$ and $\vxi$ differ on $\Lambda \setminus \Gamma^\circ$. Thus by eq.\ \eqref{gre},
\be G_{\Lambda}(x,\vm; y,\vxi; z) \ = \
- \gamma  \dirac{x,\vm}{ ( H_{\Gamma^\circ \oplus \Lambda \setminus\Gamma^\circ  } - z)^{-1} 
T_{\Gamma^\circ \oplus \Lambda \setminus\Gamma^\circ }      
             (H_{\Lambda} - z)^{-1}}{y,\vxi}
                   \ee
As $s<\nicefrac{1}{2}$, we have 
\begin{multline}
\bbE (|G_\Lambda (x,\vm; y,\vxi; z)|^s ) \\ \leq \ \gamma^s
             \sum_{ y_1,  \vxi_1} \left[   \bbE \left ( 
             	\abs{\dirac{x,\vm}{( H_{\Gamma^\circ\oplus \Lambda \setminus\Gamma^\circ } - z)^{-1}
               T_{ \Gamma^\circ \oplus \Lambda \setminus \Gamma^\circ }}{y_1,  \vxi_1}}^{2s}
               \right)  \right]^{\frac{ 1}{2} }  \\ \times 
              \left[  \bbE \left ( \abs{\dirac{y_1, \vxi_1}{  (H_{\Lambda} - z)^{-1}}{y,\vxi}}^{2s}\right ) \right]^{ \frac{1}{2}}  
\end{multline}
by the Cauchy-Shwarz inequality.
We apply the \emph{a-priori bound}, Lem.\ \ref{lem:apriori}, to the second factor and then expand to obtain
\begin{multline}
	\bbE (|G_\Lambda (x,\vm; y,\vxi; z)|^s ) \\ \leq \ \left [ \frac{\kappa}{1-2s} \right ]^{\frac{1}{2}} \gamma^s
             \sum_{ y_1,  \vxi_1} \left[   \bbE \left ( 
             	\abs{\dirac{x,\vm}{( H_{\Gamma^\circ\oplus \Lambda \setminus\Gamma^\circ } - z)^{-1}
               T_{ \Gamma^\circ \oplus \Lambda \setminus \Gamma^\circ }}{y_1,  \vxi_1}}^{2s}
               \right)  \right]^{\frac{ 1}{2} } \\
               \le  \ C_{s} \gamma^s \sum_{\twoline{u\in \partial \Gamma^\circ}{y_1\sim u}} \sum_{\vla,\vxi_1}\abs{K_{u,\vla}^{y_1;\vxi_1}}^s  \left[   \bbE\left ( 
             	\abs{\dirac{x,\vm}{( H_{\Gamma^\circ\oplus \Lambda \setminus\Gamma^\circ } - z)^{-1}}{u,  \vla}}^{2s}
               \right)  \right]^{\frac{ 1}{2} } \\
               \le C_{s} (k\vee 1)^{1-\frac{s}{2}} \gamma^s \sum_{u\in \partial \Gamma^\circ} \sum_{\vla}  \left[   \bbE\left ( 
             	\abs{\dirac{x,\vm}{( H_{\Gamma^\circ\oplus \Lambda \setminus\Gamma^\circ } - z)^{-1}}{u,  \vla}}^{2s}
               \right)  \right]^{\frac{ 1}{2} },
\end{multline}
where we have used Prop.\ \ref{disp0} in the final step. Since $x\in \Gamma^\circ$, the terms in the sum on the right hand side vanish unless $\vla_{\Lambda\setminus\Gamma^\circ} = \vm_{\Lambda\setminus \Gamma^\circ}$.  Thus
\begin{multline}
\sum_{u\in \partial \Gamma^\circ} \sum_{\vla}  \left[   \bbE\left ( 
             	\abs{\dirac{x,\vm}{( H_{\Gamma^\circ\oplus \Lambda \setminus\Gamma^\circ } - z)^{-1}}{u,  \vla}}^{2s}
               \right)  \right]^{\frac{ 1}{2} } \\ = \  \sum_{u\in \partial \Gamma^\circ} \sum_{\vla \in \cM_{\Gamma_0}}  \left[   \bbE\left ( 
             	\abs{\dirac{x,\vm_{\Gamma^\circ}}{( H_{\Gamma} +\omega \ol{k}-z)^{-1}}{u,  \vla}}^{2s}
               \right)  \right]^{\frac{ 1}{2} },
\end{multline} 
where the term $\omega \overline{k}$ accounts for the energy of $\vm$ in $\Lambda \setminus \Gamma^\circ$. By the induction hypothesis, Condition C($k$), we see that there is $\wt{\gamma}_{k,\mu,s}$ so that for $\gamma<\wt{\gamma}_{k,\mu,s}$ we have
\begin{multline}
	\bbE (|G_\Lambda (x,\vm; y,\vxi; z)|^s )  \ \leq \ C_{k,s} \sum_{u\in \partial \Gamma^\circ}\sum_{\lambda \in \cM_{\Gamma^\circ}} \e^{-3\mu \Upsilon_{\Gamma^\circ}(x,\vm_{\Gamma^\circ};u,\vla) -3\mu\cR^{(k-\ol{k})}_{\Gamma^\circ}(\vm_{\Gamma^\circ},\vla)} \\
	\le \ C_{k,s} \sum_{u\in \partial \Gamma^\circ}\sum_{\lambda \in \cM_{\Gamma^\circ}} \e^{-3\mu \Upsilon_{\Gamma^\circ}(x,\vm_{\Gamma^\circ};u,\vla) -3\mu \abs{\sqrt{N_{\Gamma^\circ}(\lambda)} -\sqrt{k-\ol{k}}}} \\
	\le \ C_{\mu,k,s} \sum_{u\in \partial \Gamma^\circ} \e^{-2\mu \norm{x-u}} \ \le \ C_{\mu,k,s} \e^{-\mu R},
	\end{multline}
	where we have used Lem.\ \ref{lem:expsumbound}.
	This completes the proof of Lem.\ \ref{lem:filter_large_R}.

 \subsection{Out of band correlations: the proof of Thm.\ L($k$)}\label{sec:outofband} First note that it suffices to prove ``Condition C($k$) $\Rightarrow$ Theorem L($k$)'' for $s<\nicefrac{1}{2}$.  Indeed, the $\nicefrac{1}{2}\le s<1$ case of Theorem L($k$)  is a consequence of the result for $s <\nicefrac{1}{2}$ by the all-for-one lemma, Lem.\ \ref{lem:allforone}.   
 
 If $\ket{x,\vm}$ and  $\ket{y,\vxi}$ are in $\cE_\Lambda^{(k)}$ the result follows directly from the in-band lemma, Lem.\ \ref{lem:k}. If $\ket{x,\vm} \in \cE_\Lambda^{(k)}$ and $\ket{y,\vxi} \not \in \cE_\Lambda^{(k)}$, we see that
     \begin{equation}
     	\abs{G_\Lambda(x,\vm;y,\vxi;z)}^s  \ \le \ \gamma^s \sum_{\ket{w,\vla} \in \cE^{(k)}_\Lambda}  \abs{G_\Lambda(x,\vm;w,\vla;z)}^s
     \abs{\dirac{w,\vla}{T_{\Lambda}^{(k)} \frac{1}{H_\Lambda^{(k)} -z }}{y,\vxi}}^s,
     \end{equation} 
     by eq.\ (\ref{firstres}).
     By the Combes-Thomas bound, Cor.\ \ref{cor:specCT}, there is $\alpha_{k,\mu,s}$ such that if $\gamma <\alpha_{k,\mu,s}$ then 
     \begin{multline}
     	\abs{G_\Lambda(x,\vm;y,\vxi;z)}^s  \ \le \ C_{k,\mu,s}  \sum_{\ket{w,\vla} \in \cE^{(k)}_\Lambda} \abs{G_\Lambda(x,\vm;w,\vla;z)}^s \e^{-2\mu d_\Lambda(w,\vla;y,\vxi)} \\
     	\le \ C_{k,\mu,s}  \sum_{\ket{w,\vla} \in \cE^{(k)}_\Lambda} \abs{G_\Lambda(x,\vm;w,\vla;z)}^s \e^{-2\mu (\Upsilon_\Lambda(w,\vla;y,\vxi) + \cR_\Lambda^{(k)}(\vla;\vxi) )}.
     \end{multline}
          Averaging and applying the in-band lemma, Lem.\ \ref{lem:k}, we obtain
     \begin{multline}
     \Ev{\abs{G_\Lambda(x,\vm;y,\vxi;z)}^s} \ \le \ C_{k,\mu,s} \sum_{\ket{w,\vla} \in \cE^{(k)}_\Lambda} \e^{-\mu \Upsilon_\Lambda(x,\vm;w,\vla;z)} \e^{-2\mu (\Upsilon_\Lambda(w,\vla;y,\vxi) + \cR_\Lambda^{(k)}(\vla;\vxi) )} \\
     \le \ C_{k,\mu,s} \e^{-\mu ( \Upsilon_\Lambda(x,\vm;y,\vxi;z)+\cR_\Lambda^{(k)}(\vm;\vxi))} \sum_{\ket{w,\vla} \in \cE^{(k)}_\Lambda}  \e^{-\mu (\Upsilon_\Lambda(w,\vla;y,\vxi) + \cR_\Lambda^{(k)}(\vla;\vxi) )} \\
     \le \ C_{k,\mu,s} \e^{-\mu ( \Upsilon_\Lambda(x,\vm;y,\vxi;z)+\cR_\Lambda^{(k)}(\vm;\vxi))} ,
     \end{multline}
     where we have applied Lem.\ \ref{lem:expsumbound} in the last step. 
     If $\ket{y,\vxi}\in \cE_\Lambda^{(k)}$ and $\ket{x,\vm}\not \in \cE_\Lambda^{(k)}$, the argument is similar, but involves the reversed geometric resolvent identity eq.\ \eqref{firstres2}.    If both $\ket{x,\vm}$ and $\ket{y,\vxi}$ are in $\cE_\Lambda \setminus \cE_{\Lambda}^{(k)}$ a similar argument invoking both eqs.\ \eqref{firstres} and \eqref{firstres2} implies the result.  This completes the proof of Thm.\ L($k$).  
\appendix

  \section{Fractional moment tools}\label{sec:fmt}  In this section we prove the \emph{a priori} bound Lem.\ \ref{lem:apriori}.  A preliminary observation is that the fractional moment bound eq.\ \eqref{eq:fmb} is a consequence of a  weak-$L^1$ type estimate.  Given $x,y\in \Lambda$, let $\Pr_{\{x,y\}}$ denote the conditional probability, conditioned on $\Sigma_{\{x,y\}^c}$ \tem i.e., given the random variables  on $\Omega_{\Lambda\setminus \{x,y\}}$.
  \bl\label{lem:wkL1} There is a $C <\infty$ such that for all $z\in \bbC$
  \be \label{eq:wkL1} \Pr_{\{x,y\} }\left [ \abs{G_\Lambda(x,\vm;y,\vxi;z)} > t \right ] \ \le \ \frac{C}{t} \ee
  \el 
  
 Lem.\ \ref{lem:apriori} follows easily from  Lem.\ \ref{lem:wkL1}. Indeed, assuming eq.\ \eqref{eq:wkL1} we have 
  \begin{multline}
  	\bbE_{\{x,y\}} \left ( \abs{G_\Lambda(x,\vm;y,\vxi;z)}^s \right ) \ = \ \int_0^\infty  \Pr_{\{x,y\} }\left [ \abs{G_\Lambda(x,\vm;y,\vxi;z)}^s > t \right ] \di t \\ \le \ \int_0^{C^s}  \di t + C \int_{C^s}^\infty  \frac{1}{t^{\frac{1}{s}}} \di t \
  	\ = \ \frac{C^s }{1-s}.
  \end{multline}
  
The weak-$L^1$ bound eq.\ \eqref{eq:wkL1} is a consequence of an  operator inequality which played a key role in the extension of the moment method to continuum Schr\"odinger operators \cite{Aizenman2006}:
\begin{lemma}{\cite[Lemma 3.1]{Aizenman2006}}\label{dissipative} 
       There is a universal constant $C_W<\infty $ such that
        \be   \abs{\setb{ v \in \bbR }{ \| B_1 (H - z - v)^{-1}B_2 \|   > t }}\ \leq \  C_W \, \|B_1\|_{HS } \|B_2\|_{HS} \,   \frac{1}{t}   \ee
        whenever $H$ is a self adjoint operator on a Hilbert space $\fh$, $z\in \bbC \setminus \bbR$, and  $B_1$, $B_2$ are  Hilbert-Schmidt operators on $\fh$.
    \end{lemma}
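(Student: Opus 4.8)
\noindent\textit{Sketch of the proof.} This is a statement about the Poisson and conjugate-Poisson transforms of the spectral measures of $H$; here is how I would argue.

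Since $\norm{M}\le\norm{M}_{HS}$ for any operator and $\norm{B_1(H-\bar z-v)^{-1}B_2}_{HS}=\norm{B_2^*(H-z-v)^{-1}B_1^*}_{HS}$ with $\norm{B_i^*}_{HS}=\norm{B_i}_{HS}$, it suffices to bound the Lebesgue measure of $\setb{v\in\bbR}{\norm{B_1(H-z-v)^{-1}B_2}_{HS}>t}$ under the assumption $h:=-\Im z>0$; write $z=E_0-\im h$. Fix an orthonormal basis $\set{e_n}$ of $\fh$, put $\phi_m=B_1^*e_m$ and $\psi_n=B_2 e_n$ (so $\sum_m\norm{\phi_m}^2=\norm{B_1}_{HS}^2$ and $\sum_n\norm{\psi_n}^2=\norm{B_2}_{HS}^2$), and let $\nu_{mn}$ be the pushforward under $\lambda\mapsto\lambda-E_0$ of the complex spectral measure $A\mapsto\ipc{\phi_m}{P_H(A)\psi_n}$. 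Using the elementary identity $(-y+\im h)^{-1}=-\pi Q_h(y)-\im\pi P_h(y)$, where $P_h(y)=\tfrac1\pi\tfrac{h}{y^2+h^2}$ and $Q_h(y)=\tfrac1\pi\tfrac{y}{y^2+h^2}$ are the Poisson and conjugate-Poisson kernels, the spectral theorem gives $\ipc{\phi_m}{(H-z-v)^{-1}\psi_n}=-\pi(Q_h*\nu_{mn})(v)-\im\pi(P_h*\nu_{mn})(v)$, and hence, by the triangle inequality in $\ell^2(\bbN^2)$,
\be \norm{B_1(H-z-v)^{-1}B_2}_{HS}=\Big(\sum_{m,n}\abs{\ipc{\phi_m}{(H-z-v)^{-1}\psi_n}}^2\Big)^{1/2}\le\pi\norm{(Q_h*\vec\nu)(v)}_{\ell^2}+\pi\norm{(P_h*\vec\nu)(v)}_{\ell^2},\ee
where $\vec\nu=(\nu_{mn})_{m,n}$ is an $\ell^2(\bbN^2)$-valued Borel measure on $\bbR$.

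The structural input that makes the constant independent of $\Im z$ is a bound on the total variation of $\vec\nu$: for any Borel $A$ and finite partition $A=\bigsqcup_iA_i$, Cauchy--Schwarz for the spectral projections gives $\abs{\nu_{mn}(A_i)}=\abs{\ipc{P_H(A_i)\phi_m}{P_H(A_i)\psi_n}}\le\mu_{\phi_m}(A_i)^{1/2}\mu_{\psi_n}(A_i)^{1/2}$ (with $\mu_\phi(A):=\norm{P_H(A)\phi}^2$), so that
\be \sum_i\norm{\vec\nu(A_i)}_{\ell^2}\le\sum_i\Big(\sum_m\mu_{\phi_m}(A_i)\Big)^{1/2}\Big(\sum_n\mu_{\psi_n}(A_i)\Big)^{1/2}\le\mu_\Phi(\bbR)^{1/2}\mu_\Psi(\bbR)^{1/2},\ee
with $\mu_\Phi=\sum_m\mu_{\phi_m}$, $\mu_\Psi=\sum_n\mu_{\psi_n}$ finite positive measures. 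Taking $A=\bbR$ yields $\norm{\vec\nu}_{TV}\le\norm{B_1}_{HS}\norm{B_2}_{HS}=:\Theta$, and hence $\vec\nu$ admits a polar decomposition $\di\vec\nu=\vec g\,\di\rho$ with $\rho$ a positive measure of total mass $\le\Theta$ and $\norm{\vec g(\lambda)}_{\ell^2}=1$ for $\rho$-a.e.\ $\lambda$.

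It remains to bound the two convolutions in weak $L^1$. For the Poisson term, pointwise domination by the scalar maximal function suffices: $\norm{(P_h*\vec\nu)(v)}_{\ell^2}\le\int P_h(v-\lambda)\norm{\vec g(\lambda)}_{\ell^2}\,\di\rho(\lambda)=(P_h*\rho)(v)\le C\,(M\rho)(v)$ with $M$ the Hardy--Littlewood maximal function, so the weak-$(1,1)$ maximal inequality gives $\abs{\setb{v}{\pi\norm{(P_h*\vec\nu)(v)}_{\ell^2}>t/2}}\le C\Theta/t$. For the conjugate-Poisson term this crude bound fails \tem $Q_h$ is not integrable, and a bound through $\abs{Q_h}$ alone would not be uniform in $h$ \tem so I would instead invoke the vector-valued Calder\'on--Zygmund theory: $\vec g\mapsto(Q_h*g_{mn})_{m,n}$ has scalar convolution kernel $Q_h$ with $\abs{Q_h(y)}\le\tfrac1{\pi\abs y}$, $\abs{Q_h'(y)}\le\tfrac1{\pi y^2}$, and $L^2$ Fourier multiplier $-\im\,\sgn(\xi)\e^{-h\abs\xi}$ of supremum norm $1$, all uniform in $h$; the weak-$(1,1)$ estimate for Hilbert-space-valued Calder\'on--Zygmund operators (valid with a finite-variation $\ell^2$-valued measure in place of an $L^1(\ell^2)$ function) then gives $\abs{\setb{v}{\pi\norm{(Q_h*\vec\nu)(v)}_{\ell^2}>t/2}}\le C\Theta/t$ with $C$ universal. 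Adding the two estimates and recalling the first display finishes the argument, with $C_W$ universal. The crux is this last step: all of the genuine harmonic analysis, and in particular the uniformity in $\Im z$, sits in the $\ell^2$-valued weak-$(1,1)$ bound for the Hilbert transform, which uses the cancellation of the kernel rather than its (non-integrable) size. A more elementary-looking alternative would be to first reduce, via the spectral theorem, to $H$ of finite rank, so that each $\ipc{\phi_m}{(H-z-v)^{-1}\psi_n}$ is a rational function of $v$ and one may appeal to a matrix form of Boole's equality; but that route essentially re-proves the same Calder\'on--Zygmund bound.
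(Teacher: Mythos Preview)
The paper does not prove this lemma; it merely quotes it from \cite{Aizenman2006} and uses it as a black box. So there is no in-paper proof to compare against.

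Your argument is sound as a free-standing proof. The reduction from the operator norm to the Hilbert--Schmidt norm, the rewriting of the matrix elements as Poisson plus conjugate-Poisson transforms of an $\ell^2(\bbN^2)$-valued spectral measure $\vec\nu$, the total-variation bound $\norm{\vec\nu}_{TV}\le\norm{B_1}_{HS}\norm{B_2}_{HS}$ via Cauchy--Schwarz on the spectral projections, and the final appeal to the Hardy--Littlewood maximal inequality for the Poisson piece and to the Hilbert-space-valued weak-$(1,1)$ Calder\'on--Zygmund bound for the conjugate-Poisson piece (with kernel estimates uniform in $h=\abs{\Im z}$) all check out. The one place a reader might want a sentence of justification is the extension of the weak-$(1,1)$ CZ bound from $L^1(\ell^2)$ functions to finite $\ell^2$-valued measures; this is standard (mollify and pass to the limit, or run the CZ decomposition directly on the measure), but worth a remark.

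For comparison, the proof in the cited reference \cite{Aizenman2006} takes a rather different route: it reduces to a weak-$L^1$ bound for boundary values of a contractive operator-valued analytic function on the upper half-plane, invoking results on the dissipative part of the resolvent and an operator-valued version of Kolmogorov's weak-type inequality (in the spirit of de~Branges/Naboko). That approach packages the cancellation differently \tem\ through complex-analytic positivity rather than real-variable kernel estimates \tem\ and avoids the explicit CZ machinery, at the cost of importing a less widely known operator-theoretic lemma. Your harmonic-analysis proof is more self-contained from the viewpoint of classical real analysis and makes the uniformity in $\Im z$ transparent through the uniform H\"ormander bounds on $Q_h$; the cited proof is shorter if one is willing to quote the operator-valued Kolmogorov bound. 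Either way, the underlying content is the same weak-$(1,1)$ bound for the (operator-valued) Hilbert transform.
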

    \noindent As explained in \cite{Aizenman2006}, the following is an immediate corollary of the above result:
    \bl{\cite[Proposition 3.2]{Aizenman2006}}\label{dissipative schrodinger}
    With $C_W$ as in Lem.\ \ref{dissipative},
        \begin{multline}
        	\abs{\setb{  (v_1,v_2)\in[0,1]^2  }{  \| B_1 U_1^{1/2}(H  +v_1 U_1  + v_2 U_2 - z )^{-1}U_2^{1/2} B_2 \|   > t }}
                   \\ \leq \  2 C_W \|B_1\|_{HS } \|B_2\|_{HS}  \frac{1}{t}   
        \end{multline}   
  whenever $U_1$, $U_2$ are nonnegative operators on a Hilbert space $\fh$, $H$ is a self adjoint operator on $\fh$, $z\in \bbC \setminus \bbR$, and  $B_1$, $B_2$ are  Hilbert-Schmidt operators on $\fh$.
   \el
   
   We now prove Lem.\ \ref{lem:wkL1} from these two Lemmas.  The key observation is that 
   \be \abs{G_\Lambda(x,\vm;y,\vxi;z)} \ = \ \norm{ B_{x,\vm} P_{\cS_x} \frac{1}{H_\Lambda - z} P_{\cS_y} B_{y,\vm}},\ee
   where $B_{x,\vm}$ denotes the rank one \tem hence, Hilbert-Schmidt\tem operator $B_{x,\vm}  =\ket{x,\vm}\bra{x,\vm}$, $P_{\cS_x}$ denotes the projection onto the span of $\cS_x = \setb{\ket{x,\vla}}{\vla \in \cM_\Lambda}$, and similarly for $B_{y,\vxi}$ and $P_{\cS_y}$. If $x\neq y$, Lem.\ \ref{dissipative schrodinger} implies that 
   \begin{multline}
   	\Pr_{\{x,y\}} \left [ \abs{G_\Lambda(x,\vm;y,\vxi;z)} > t \right ]
   	\\ \le \ \norm{\rho}_\infty^2 \abs{\setb{(v_x,v_y) \in [0,V_+]^2}{\norm{ B_{x,\vm} P_{\cS_x} \frac{1}{H_\Lambda - z} P_{\cS_y} B_{y,\vm}}>t}} \\
   	\le \ 2 C_W \norm{\rho}_\infty^2 V_+^2 \norm{B_{x,\vm}}\norm{B_{y,\vm}} \frac{1}{t}  \ = \  2 C_W V_+^2\norm{\rho}_\infty^2 \frac{1}{t} ,
   \end{multline}
   since $\norm{B_{x,\vm}}=\norm{B_{y,\vm}}=1$. 
   If $x=y$, a similar argument based directly on Lemma \ref{dissipative}  yields
   \begin{equation}
   \Pr_{\{x\}} \left [ \abs{G_\Lambda(x,\vm;x,\vxi;z)} > t \right ]	\ \le \ C_W \norm{\rho}_\infty V_+ \frac{1}{t}.
   \end{equation}
   Thus Lemma \ref{lem:wkL1} holds as claimed

\section{Displacement states}\label{displacement}

Our goal in this section is to derive bounds on the matrix elements  of the displacement operators $D(\beta) = \e^{\beta b^\dagger - \beta^* b}$.  The key result is the following
\begin{proposition}
	For any $\beta \in \bbC$, $\mu \in \bbR$ and $n\in \mathbb{N}$,
	\begin{equation}\label{eq:squaresumid} \sum_{m\in \N}\e^{2\mu (m-n)} \abs{\dirac{m}{D(\beta)}{n}  }^2 \ = \ \e^{\left (\e^{2\mu}-1\right ) |\beta|^2}  L_n\left (-|\beta|^2 \left ( \e^{\mu}-\e^{-\mu}\right)^2 \right )	\ ,
	\end{equation}
	where $L_n$ is the Laguerre polynomial of order $n$,
	\begin{equation}\label{eq:Laguerre}
		L_n(x) \ = \ \sum_{k=0}^n \frac{1}{k!} {n \choose k} (-x)^k.
	\end{equation}
\end{proposition}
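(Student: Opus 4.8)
The plan is to recognize the left-hand side as a diagonal matrix element of a conjugated number operator and disentangle it by the Weyl relations. Write $N=b^\dagger b$ for the number operator and set $\lambda=\e^{2\mu}$. Since $\sum_{m}\e^{2\mu m}\abs{\dirac{m}{\psi}{\psi}}$-type sums compute $\dirac{\psi}{\lambda^{N}}{\psi}$, and $D(\beta)^\dagger=D(\beta)^{-1}=D(-\beta)$, we have
$$\sum_{m\in\N}\e^{2\mu m}\abs{\dirac{m}{D(\beta)}{n}}^2 \ = \ \dirac{D(\beta)n}{\lambda^{N}}{D(\beta)n} \ = \ \dirac{n}{D(-\beta)\,\lambda^{N}\,D(\beta)}{n}.$$
First I would check this quantity is finite and the rearrangement legitimate even for $\mu>0$: $D(\beta)\ket{0}$ is the coherent state with Poissonian weights $\abs{\dirac{m}{D(\beta)}{0}}^2=\e^{-\abs{\beta}^2}\abs{\beta}^{2m}/m!$, and since $D(\beta)b^\dagger D(\beta)^{-1}=b^\dagger-\beta^*$ gives $D(\beta)\ket{n}=(n!)^{-1/2}(b^\dagger-\beta^*)^{n}D(\beta)\ket{0}$, the weights $\abs{\dirac{m}{D(\beta)}{n}}^2$ decay like $\abs{\beta}^{2m}(m!)^{-1}$ times a polynomial in $m$, so the series converges for every $\mu\in\bbR$.

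Next I would establish, as an identity of operators on the span of the number states, the disentangling formula
$$D(-\beta)\,\lambda^{N}\,D(\beta) \ = \ \e^{(\lambda-1)\abs{\beta}^2}\,\e^{(\lambda-1)\beta b^\dagger}\,\e^{(1-\lambda^{-1})\beta^* b}\,\lambda^{N}.$$
This follows from the Glauber factorization $D(\beta)=\e^{-\abs{\beta}^2/2}\e^{\beta b^\dagger}\e^{-\beta^* b}$ recalled in the text, the relations $\lambda^{N}\e^{\beta b^\dagger}\lambda^{-N}=\e^{\lambda\beta b^\dagger}$ and $\lambda^{N}\e^{-\beta^* b}\lambda^{-N}=\e^{-\lambda^{-1}\beta^* b}$ (consequences of $[N,b^\dagger]=b^\dagger$ and $[N,b]=-b$), and one application of the Baker--Campbell--Hausdorff relation $\e^{\beta^* b}\e^{\lambda\beta b^\dagger}=\e^{\lambda\abs{\beta}^2}\e^{\lambda\beta b^\dagger}\e^{\beta^* b}$; the scalar prefactors $\e^{-\abs{\beta}^2}$ and $\e^{\lambda\abs{\beta}^2}$ combine to give $\e^{(\lambda-1)\abs{\beta}^2}$.

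Applying this to $\ket{n}$ and using $\lambda^{N}\ket{n}=\lambda^{n}\ket{n}$ reduces the problem to evaluating $\dirac{n}{\e^{p b^\dagger}\e^{q b}}{n}$ with $p=(\lambda-1)\beta$ and $q=(1-\lambda^{-1})\beta^*$. Expanding both exponentials and using $\dirac{n}{(b^\dagger)^j b^k}{n}=\delta_{jk}\,n!/(n-k)!$ for $k\le n$ yields
$$\dirac{n}{\e^{p b^\dagger}\e^{q b}}{n} \ = \ \sum_{k=0}^{n}\binom{n}{k}\frac{(pq)^k}{k!} \ = \ L_n(-pq),$$
by formula \eqref{eq:Laguerre}. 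Since $pq=(\lambda-1)(1-\lambda^{-1})\abs{\beta}^2=(\lambda+\lambda^{-1}-2)\abs{\beta}^2=\abs{\beta}^2(\e^{\mu}-\e^{-\mu})^2$, collecting factors gives $\sum_{m}\e^{2\mu m}\abs{\dirac{m}{D(\beta)}{n}}^2=\e^{2\mu n}\e^{(\e^{2\mu}-1)\abs{\beta}^2}L_n\big(-\abs{\beta}^2(\e^{\mu}-\e^{-\mu})^2\big)$, which is the claim after dividing by $\e^{2\mu n}$. The only genuinely delicate point is the legitimacy of the disentangling manipulations and of commuting the conjugation with $\lambda^{N}$ when $\lambda>1$ — i.e. the domain question settled in the first step; the rest is bookkeeping with the Weyl relations.
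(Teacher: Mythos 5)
Your proof is correct and follows essentially the same route as the paper: both rest on the conjugation relations $\e^{\mu N} b^{\dagger} \e^{-\mu N}=\e^{\mu}b^{\dagger}$, $\e^{\mu N} b \e^{-\mu N}=\e^{-\mu}b$, the Glauber factorization, and the identification of $\sum_{k\le n}\binom{n}{k}(pq)^k/k!$ with a Laguerre polynomial. The only difference is bookkeeping \tem you normal-order the full sandwich $D(-\beta)\lambda^{N}D(\beta)$ and evaluate $\dirac{n}{\e^{p b^\dagger}\e^{q b}}{n}$, whereas the paper conjugates $D(\beta)$ by $\e^{\mu N}$ and then uses unitarity of $D(\e^{\mu}\beta)$ to reduce the sum over $m$ to $\norm{\e^{\gamma b}\ket{n}}^2$ \tem and your attention to convergence and domains matches what the paper's argument implicitly requires.
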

\begin{proof}  Let $N=b^\dagger b$ denote the number operator and let $\mu \in \bbR$.  Then 
 \be 
	\e^{\mu N} b^\dagger \e^{-\mu N} = \e^{\mu} b^\dagger \quad \text{and} \quad \e^{\mu N} b \e^{-\mu N}= \e^{-\mu} b,
	\ee
and thus  
\be \e^{\mu N} D(\beta)\e^{-\mu N} \ = \ \e^{-\frac{1}{2}|\beta|^2} \e^{\e^\mu \beta b^\dagger} \e^{-\e^{-\mu} \beta^* b}\ = \ \e^{\frac{1}{2}\left (\e^{2\mu}-1\right ) |\beta|^2} D(\e^{\mu}\beta)
	\e^{\left (\e^{\mu}-\e^{-\mu}\right )\beta^* b}. \ee
	Therefore 
\begin{equation}\e^{\mu (m-n)} \dirac{m}{D(\beta)}{n}  \ = \   \e^{\frac{1}{2}\left (\e^{2\mu}-1\right ) |\beta|^2}
\dirac{m}{D(\e^{\mu}\beta)
	\e^{\left (\e^{\mu}-\e^{-\mu}\right )\beta b}}{n}.\label{eq:id1}
	\end{equation}
	Squaring both sides and summing over $n$ we obtain
	\be \sum_{m}\e^{2\mu (m-n)} \abs{\dirac{m}{D(\beta)}{n}  }^2 \ = \ \e^{\left (\e^{2\mu}-1\right ) |\beta|^2} \norm{\e^{\left (\e^{\mu}-\e^{-\mu}\right )\beta b}\ket{n}}^2 ,\ee
	since $D(\e^\mu \beta)$ is unitary. 
	
	Recall that 
\be b \ket{k} \ = \ \begin{cases}\sqrt{k}\ket{k-1} & \text{if } k\ge 1,\\
 0 & \text{if } k = 0.	
 \end{cases}
\ee
Therefore, for any $\gamma\in \bbC$,
\be \e^{\gamma b} \ket{n} \ = \ \sum_{k=0}^n \frac{1}{k!} \sqrt{\frac{n!}{(n-k)!}} \gamma^{k} \ket{n-k},\ee
and thus
\be \norm{ \e^{\gamma b} \ket{n}}^2 \ = \ \sum_{k=0}^n \frac{1}{k!} {n \choose k} \abs{\gamma}^{2k}\ = \ L_n(-\abs{\gamma}^2) \ . \qedhere
\ee
\end{proof}

\begin{corollary} Let $\beta \in \bbC$ and $\mu \in \R$.  Then,
	\begin{equation}
	\lim_{n\rightarrow \infty} \sum_{m \in \N} \e^{\frac{2\mu}{\sqrt{n}}(m-n)}	\abs{\dirac{m}{D(\beta)}{n}  }^2 \ = \ I_0(4 \abs{\beta}\mu)
	\end{equation}
	where $I_0$ denotes the modified Bessel function of the first kind. In particular
\begin{equation}\label{eq:squaresumbound}
	\sup_{n\in \N} \sum_{m\in \N} \e^{\frac{2\mu}{\sqrt{n\vee 1}}(m-n)}	\abs{\dirac{m}{D(\beta)}{n}  }^2  \ < \ \infty \ .
	\end{equation}
\end{corollary}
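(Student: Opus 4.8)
The plan is to obtain the Corollary directly from the identity \eqref{eq:squaresumid} of the preceding Proposition by a rescaling of the parameter $\mu$. First I would apply \eqref{eq:squaresumid} with $\mu$ replaced by $\mu/\sqrt n$, valid for every $n\ge 1$; this rewrites the sum $\sum_{m\in\N}\e^{\frac{2\mu}{\sqrt n}(m-n)}\abs{\dirac{m}{D(\beta)}{n}}^2$ as the explicit product $\e^{(\e^{2\mu/\sqrt n}-1)|\beta|^2}\,L_n(-a_n)$ with $a_n := |\beta|^2\bigl(\e^{\mu/\sqrt n}-\e^{-\mu/\sqrt n}\bigr)^2 = 4|\beta|^2\sinh^2(\mu/\sqrt n)$. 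Two elementary facts then govern the limit: the prefactor tends to $1$ because $\e^{2\mu/\sqrt n}-1\to 0$; and, since $\sinh(t)/t\to 1$ as $t\to 0$, one has $na_n = 4|\beta|^2\mu^2\bigl(\sinh(\mu/\sqrt n)/(\mu/\sqrt n)\bigr)^2 \to 4|\beta|^2\mu^2 =: x$ (the case $\mu=0$ being trivial). Thus the whole problem reduces to showing $L_n(-a_n)\to I_0(2\sqrt x)$, and since $2\sqrt x = 4|\beta|\,|\mu|$ and $I_0$ is even this is precisely $I_0(4|\beta|\mu)$.

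For the Laguerre limit I would expand using \eqref{eq:Laguerre}: $L_n(-a_n) = \sum_{k\ge 0}\frac{1}{k!}\binom nk a_n^{\,k} = \sum_{k\ge 0}\frac{1}{(k!)^2}\,c_{n,k}\,(na_n)^k$, where $c_{n,k} := \binom nk\,k!\,n^{-k} = \prod_{j=0}^{k-1}\bigl(1-\tfrac jn\bigr)$ for $k\le n$ and $c_{n,k}=0$ for $k>n$; note $0\le c_{n,k}\le 1$ and $c_{n,k}\to 1$ for each fixed $k$. Since $na_n$ converges it is bounded, say $na_n\le M$, so each summand is dominated by $M^k/(k!)^2$, which is summable; dominated convergence over the counting measure on $\N$ then yields $L_n(-a_n)\to \sum_{k\ge 0}x^k/(k!)^2$, and this series equals $I_0(2\sqrt x)$ by the power series $I_0(z)=\sum_{k\ge 0}(z/2)^{2k}/(k!)^2$. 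Combining the three convergences established so far proves the stated limit.

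The ``in particular'' claim then costs nothing further: a convergent sequence is bounded, so $\sup_{n\ge 1}$ of the quantity in question is finite, while for $n=0$ the identity \eqref{eq:squaresumid} together with $L_0\equiv 1$ gives $\sum_m \e^{2\mu m}\abs{\dirac{m}{D(\beta)}{0}}^2 = \e^{(\e^{2\mu}-1)|\beta|^2}<\infty$; hence the supremum over all $n\in\N$ is finite. If one wants an explicit uniform constant one can instead observe that $t\mapsto \sinh(t)/t$ is increasing on $[0,\infty)$, so $\sinh(\mu/\sqrt n)\le \sinh(|\mu|)/\sqrt n$ for $n\ge1$ and therefore $na_n\le 4|\beta|^2\sinh^2|\mu|$ uniformly in $n$, which feeds the same dominated series bound above.

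I do not expect a serious obstacle: the argument is essentially a combination of the already-proven identity \eqref{eq:squaresumid} with the classical fact that $L_n(-c/n)\to I_0(2\sqrt c)$. The only step that genuinely needs care is the interchange of limit and infinite summation in the Laguerre expansion, and this is handled by the uniform domination $c_{n,k}(na_n)^k\le M^k$ noted above together with the summability of $\sum_k M^k/(k!)^2$.
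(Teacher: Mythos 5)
Your argument is correct and follows essentially the same route as the paper: substitute $\mu\mapsto\mu/\sqrt n$ in the identity \eqref{eq:squaresumid} and pass to the limit in the Laguerre expansion, recognizing the series $\sum_k x^k/(k!)^2$ as $I_0(2\sqrt x)$. Your write-up is in fact slightly more careful than the paper's, since you justify replacing $4|\beta|^2\sinh^2(\mu/\sqrt n)$ by $4|\beta|^2\mu^2/n$ and the interchange of limit and sum via domination, and you handle the $n=0$ term in the supremum bound explicitly.
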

\begin{proof} Replacing $\mu$ by $\nicefrac{\mu}{\sqrt{n}}$ in eq.\ \eqref{eq:squaresumid} yields
\be  \sum_{n\in \N} \e^{\frac{2\mu}{\sqrt{n}}(m-n)} \abs{\dirac{m}{D(\beta)}{n}  }^2 \ = \ \e^{ (\e^{\frac{2\mu}{\sqrt{n}}}-1 ) |\beta|^2}  L_n\left (-|\beta|^2 \left ( \e^{\frac{\mu}{\sqrt{n}}}-\e^{-\frac{\mu}{\sqrt{n}}}\right)^2 \right ),\ee 
yielding 
\begin{multline}\lim_{n\rightarrow \infty} \sum_{n\in \N} \e^{\frac{2\mu}{\sqrt{n}}(m-n)}	\abs{\dirac{m}{D(\beta)}{n}  }^2 \ = \ \lim_{n\rightarrow \infty} L_n\left (-\frac{4|\beta|^2\mu^2}{n} \right )\\ =  \ \lim_{n\rightarrow \infty} \sum_{k=0}^n \frac{1}{(k!)^2} \frac{n (n-1)\cdots (n-k+1)}{n^k} \left ( 4 |\beta|^2 \mu^2 \right )^k \\ = \ \sum_{k=0}^\infty \frac{1}{(k!)^2}  \left ( 4 |\beta|^2 \mu^2 \right )^k \ = \ I_0(4  \abs{\beta}\mu) \ . \qedhere  
\end{multline}
\end{proof}

It is useful to rephrase these bounds in terms of the following metric on $\bbN$:
\begin{equation}\label{eq:appmetric} d(m,n) \ = \ \abs{\sqrt{m}-\sqrt{n}}   .	
\end{equation}
The above estimates imply the following

\begin{theorem}\label{thm:squaresum}For any $\mu>0$,	\be  \sup_{n\in \N} \sum_{m\in \N} \e^{\mu d(m,n)} \abs{\dirac{m}{D(\beta)}{n}}^2 <\infty\ .\ee
\end{theorem}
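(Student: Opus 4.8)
\textbf{Proof plan for Theorem \ref{thm:squaresum}.}
The plan is to reduce the statement to the two preceding results in this appendix, namely eq.\ \eqref{eq:squaresumid} (the exact Laguerre identity) and eq.\ \eqref{eq:squaresumbound} (the uniform bound obtained by scaling $\mu \mapsto \mu/\sqrt{n}$). The point is that the weight $\e^{\mu d(m,n)} = \e^{\mu|\sqrt m - \sqrt n|}$ appearing here is, for a given $n$, always dominated by a weight of the form $\e^{c(m-n)/\sqrt{n}}$ up to a bounded loss, so that the supremum over $n$ reduces to eq.\ \eqref{eq:squaresumbound}.

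First I would establish the elementary comparison of metrics: for all $m,n \in \N$ with $n \ge 1$,
\begin{equation}
	\abs{\sqrt m - \sqrt n} \ = \ \frac{\abs{m-n}}{\sqrt m + \sqrt n} \ \le \ \frac{\abs{m-n}}{\sqrt n}.
\end{equation}
Hence $\e^{\mu d(m,n)} \le \e^{(\mu/\sqrt n)\abs{m-n}} \le \e^{(\mu/\sqrt n)(m-n)} + \e^{-(\mu/\sqrt n)(m-n)}$, splitting into $m \ge n$ and $m < n$. Applying this inside the sum and using eq.\ \eqref{eq:squaresumbound} with $\mu$ and with $-\mu$ (the bound there holds for all real $\mu$, or one simply notes the sum with $-\mu$ is the sum with $\mu$ after relabelling, which also follows from unitarity of $D(-\beta)$), we get
\begin{equation}
	\sum_{m\in \N} \e^{\mu d(m,n)} \abs{\dirac{m}{D(\beta)}{n}}^2 \ \le \ \sum_{m\in\N}\e^{\frac{2(\mu/2)}{\sqrt n}(m-n)}\abs{\dirac{m}{D(\beta)}{n}}^2 + \sum_{m\in\N}\e^{-\frac{2(\mu/2)}{\sqrt n}(m-n)}\abs{\dirac{m}{D(\beta)}{n}}^2,
\end{equation}
and each term is bounded uniformly in $n\ge 1$ by eq.\ \eqref{eq:squaresumbound} applied with $\mu/2$ in place of $\mu$. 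The case $n=0$ is handled separately: $\abs{\dirac{m}{D(\beta)}{0}}^2 = \e^{-|\beta|^2}|\beta|^{2m}/m!$, so $\sum_m \e^{\mu\sqrt m}|\beta|^{2m}/m!$ converges by a crude ratio/comparison estimate (e.g.\ $\e^{\mu\sqrt m} \le \e^{\mu}\e^{m}$, or more carefully absorbing $\e^{\mu\sqrt m}$ into the factorial decay). Taking the supremum over $n$ then yields the claim.

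There is no real obstacle here; the only thing to be careful about is the direction of the inequality $\sqrt m + \sqrt n \ge \sqrt n$, which requires $n \ge 1$ and hence the separate treatment of $n=0$, and the harmless loss of a factor $2$ in $\mu$ when splitting the two-sided exponential, which is absorbed since eq.\ \eqref{eq:squaresumbound} holds for every positive $\mu$. Alternatively, one could bypass eq.\ \eqref{eq:squaresumbound} entirely and argue directly from eq.\ \eqref{eq:squaresumid}: the right-hand side there, with $\mu$ replaced by $\mu/(2\sqrt n)$, is $\e^{(\e^{\mu/\sqrt n}-1)|\beta|^2} L_n(-|\beta|^2(\e^{\mu/(2\sqrt n)}-\e^{-\mu/(2\sqrt n)})^2)$, and using $L_n(-x) = \sum_{k=0}^n \binom{n}{k} x^k/k! \le \sum_{k\ge 0} (nx)^k/(k!)^2 = I_0(2\sqrt{nx})$ together with $n(\e^{\mu/(2\sqrt n)}-\e^{-\mu/(2\sqrt n)})^2 \to \mu^2$ shows the whole expression stays bounded as $n\to\infty$ and is manifestly finite for each fixed $n$, giving the uniform bound directly. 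Either route finishes the proof.
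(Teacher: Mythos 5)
Your main argument is correct and is essentially the paper's own proof: split the sum at the diagonal, dominate $\e^{\mu|\sqrt{m}-\sqrt{n}|}$ by the one-sided weights $\e^{\pm\frac{\mu}{\sqrt{n\vee 1}}(m-n)}$ via $|\sqrt{m}-\sqrt{n}|\le |m-n|/\sqrt{n\vee 1}$, and invoke the uniform bound eq.\ \eqref{eq:squaresumbound} for both signs of $\mu$. Your separate handling of $n=0$ and the alternative route directly through the Laguerre identity \eqref{eq:squaresumid} are harmless variants of the same idea.
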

	
\begin{proof} 
Since $d(n,m)=d(m,n)$, 
\begin{multline} \sup_{n} \sum_{m} \e^{\mu d(m,n)} \abs{\dirac{m}{D(\beta)}{n}}^2 \\ \le \ 1
+ \sup_{n} \sum_{m<n} \e^{\mu (\sqrt{n}-\sqrt{m})} \abs{\dirac{m}{D(\beta)}{n}}^2 + \sup_n \sum_{m>n}  \e^{\mu (\sqrt{m}-\sqrt{n})}
 	\abs{\dirac{m}{D(\beta)}{n}}^2 \\ < \  
 	\sup_{n} \sum_{m} \e^{\frac{\mu}{\sqrt{n}}(n-m)} \abs{\dirac{m}{D(\beta)}{n}}^2 + \sup_{n} \sum_{m} \e^{\frac{\mu}{\sqrt{n}} (m-n)}
 	\abs{\dirac{m}{D(\beta)}{n}}^2 \ <\ \infty. \qedhere 
 	\end{multline}
\end{proof}

Since $D(\beta)$ is unitary, we have $\abs{\dirac{m}{D(\beta)}{n}} \le 1$ and thus
\be  \sup_{n\in \N} \sum_{m\in \N} \e^{\mu d(m,n)} \abs{\dirac{m}{D(\beta)}{n}}^p <\infty\ \ee
for any $p \ge 2$.  To bound such sums with $p< 2$, we  use H\"older's inequality and the following 
\begin{lemma}\label{lem:expsum}For any $\mu >0$ and $\alpha \in \bbR$,
	\be  \sum_{m\in \N} (m\vee 1)^\alpha \e^{-\mu d(m,n)} \ \le  C_{\mu,\alpha} (n\vee 1)^{\frac{1}{2}+\alpha}.\ee
\end{lemma}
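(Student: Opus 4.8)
The plan is to reduce the sum to a one-parameter family of elementary exponential sums by counting lattice points in ``square-root shells.'' The case $n=0$ is immediate, since then $d(m,0)=\sqrt m$ and $\sum_{m}(m\vee 1)^{\alpha}\e^{-\mu\sqrt m}<\infty=C_{\mu,\alpha}(0\vee1)^{1/2+\alpha}$, so I would assume $n\ge 1$ and set $N:=\sqrt n\ge 1$. For each integer $k\ge 0$ the shell $\{m\ge0:k\le\sqrt m<k+1\}=\{m:k^{2}\le m<(k+1)^{2}\}$ contains exactly $2k+1\le 4(k\vee1)$ integers; for $m$ in this shell one has $(m\vee1)^{\alpha}\le C_{\alpha}(k\vee1)^{2\alpha}$ (bounding $m\le(k+1)^{2}$ when $\alpha\ge0$ and $m\ge k^{2}$ when $\alpha<0$), while $|\sqrt m-\sqrt n|\ge|k-N|-1$ gives $\e^{-\mu d(m,n)}\le\e^{\mu}\e^{-\mu|k-N|}$. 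Summing these shell bounds yields
\[
\sum_{m\ge0}(m\vee1)^{\alpha}\e^{-\mu d(m,n)}\ \le\ C_{\alpha,\mu}\sum_{k=0}^{\infty}(k\vee1)^{\beta}\e^{-\mu|k-N|},\qquad \beta:=2\alpha+1 ,
\]
so it suffices to prove $\sum_{k\ge0}(k\vee1)^{\beta}\e^{-\mu|k-N|}\le C_{\beta,\mu}N^{\beta}$ for every $\beta\in\bbR$ and every $N\ge1$, since $N^{\beta}=n^{\alpha+1/2}$.

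For this last inequality I would split the $k$-sum into the middle band $N/2\le k\le 3N/2$, the far tail $k>3N/2$, and the near part $0\le k<N/2$. On the middle band $(k\vee1)^{\beta}\le C_{\beta}N^{\beta}$ and $\sum_{k}\e^{-\mu|k-N|}\le C_{\mu}$, producing the main term $C_{\beta,\mu}N^{\beta}$. On the far tail $k-N\ge\max(k/3,N/2)$: if $\beta\ge0$ bound $k^{\beta}\e^{-\mu(k-N)}\le k^{\beta}\e^{-\mu k/3}$ and sum to a finite constant $\le C_{\beta,\mu}N^{\beta}$; if $\beta<0$ bound $k^{\beta}\le(3N/2)^{\beta}\le C_{\beta}N^{\beta}$ and $\sum_{k>3N/2}\e^{-\mu(k-N)}\le C_{\mu}$. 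On the near part $N-k\ge N/2$, so its contribution (together with the single $k=0$ term, which is of the same type) is at most $C_{\beta}N^{|\beta|+1}\e^{-\mu N/2}$, which is $\le C_{\beta,\mu}N^{\beta}$ because $N^{a}\e^{-cN}$ is bounded on $[1,\infty)$ for any $a\in\bbR$, $c>0$. Adding the three contributions gives the claim, and hence the lemma.

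The only point that needs care is the regime $\alpha<0$ (equivalently $\beta<0$, possibly $\beta<-1$), where the target $N^{\beta}$ actually \emph{decreases} in $N$: one must check that the ``far'' and ``near'' pieces—which are a priori only $O(1)$, or even polynomially large in $N$—are still dominated by $C_{\beta,\mu}N^{\beta}$ uniformly in $N\ge1$. As indicated, this reduces to the elementary fact $\sup_{N\ge1}N^{a}\e^{-cN}<\infty$ for $a\in\bbR$, $c>0$, so there is no genuine obstacle; the rest is bookkeeping with finite geometric-type sums and the lattice-point count in each shell.
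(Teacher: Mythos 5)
Your argument is correct, and it takes a somewhat different route from the paper's. The paper substitutes $m=n+k$, bounds the metric from below directly, $d(n+k,n)\ge \tfrac{1}{2\sqrt{2}}\,|k|/\sqrt{n}$ for $|k|\le n$ and $d(n+k,n)\ge \tfrac{1}{2\sqrt 2}\sqrt{k}$ for $k>n$, splits the sum at $m=2n$, and then concludes by noting that the near piece is $O(n^{\alpha+\frac12})$ and the far piece is $O(\e^{-\delta\sqrt n})$, after first reducing the claim to a $\limsup_{n\to\infty}$ statement (finiteness for each fixed $n$ being trivial). You instead reindex by the square-root shells $k=\lfloor \sqrt m\rfloor$, which converts the problem into the purely one-dimensional inequality $\sum_{k\ge 0}(k\vee 1)^{\beta}\e^{-\mu|k-N|}\le C_{\beta,\mu}N^{\beta}$ with $\beta=2\alpha+1$, $N=\sqrt n$, and then handle that by a middle/far/near split. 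Both proofs rest on the same heuristic \tem the sum localizes to $\sqrt m$ within $O(1)$ of $\sqrt n$, a window containing $O(\sqrt n)$ integers of size comparable to $n$ \tem but your implementation gives a bound that is explicitly uniform in $n\ge 1$ without the asymptotic reduction, isolates a clean reusable one-dimensional estimate, and handles the delicate case $\alpha<0$ (where the right-hand side decays in $n$) transparently via $\sup_{N\ge 1}N^{a}\e^{-cN}<\infty$; the paper's version is shorter but leaves more of that bookkeeping implicit. Your treatment of the shell cardinality $2k+1$, the bound $(m\vee1)^{\alpha}\le C_\alpha(k\vee1)^{2\alpha}$ in both sign cases of $\alpha$, and the offset $|\sqrt m-\sqrt n|\ge |k-N|-1$ are all accurate, so there is no gap.
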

\begin{proof}  Let
\be S_\alpha(n) \ := \ \sum_{m=0}^\infty (m\vee 1)^{\alpha} \e^{-\mu d(m,n)} \ = \ \e^{-\mu\sqrt{n}} + \sum_{m=1}^\infty m^\alpha \e^{-\mu \abs{\sqrt{m}-\sqrt{n}}}.\ee
Note that $S_\alpha(n) < \infty$ for any $\alpha$ and $n$.   Thus it suffices to show that
\be \limsup_{n\rightarrow \infty} n^{-\alpha-\frac{1}{2}} S_\alpha(n) \ < \ \infty. \ee
In particular, we may assume without loss of generality that $n \ge 1$. Fix $n\ge 1$ and let $m=n+k$ where $k\ge -n$.  For $k \ge  0$, we have
\be
d(n+k,n) \ = \ \sqrt{n+k} -\sqrt{n} \ = \ \frac{1}{2} \int_{n}^{n+k} \frac{1}{\sqrt{t}}\di t \ \ge \  \frac{1}{2}\frac{k}{\sqrt{n+k}},
\ee
and for $-n \le k < 0$, we have
\be 
d(n+k,n) \ = \ \sqrt{n} - \sqrt{n+k} \ \ge \ \frac{1}{2} \frac{\abs{k}}{\sqrt{n}} \ \ge \ \frac{1}{2} \frac{\abs{k}}{\sqrt{n+|k|}}. \ee
Thus for any $k \ge -n$ we have
\be
d(n+k,n) \ \ge \  \frac{1}{2\sqrt{2}} \begin{cases} \frac{\abs{k}}{\sqrt{n}} & \text{if } \abs{k}\le n ,\\
 \sqrt{k} & \text{if } k > n .	
 \end{cases}
\ee
Thus
\begin{multline}
	S_\alpha(n) \ \le \  \sum_{m=0}^{2n}(m\vee 1)^\alpha \e^{-\frac{\mu}{2\sqrt{2n}} \abs{m-n}} + \sum_{m=2n+1}^\infty m^\alpha \e^{-\frac{\mu}{2\sqrt{2}} \sqrt{m-n}} \\
	\le \ \sum_{k=-n}^n (n+\abs{k})^\alpha \e^{-\frac{\mu}{2\sqrt{2n}} \abs{k}} + \sum_{k=n+1}^\infty k^\alpha\e^{-\frac{\mu}{2\sqrt{2}} \sqrt{k}}.
\end{multline} 
The first term is  $O(n^{\alpha + \frac{1}{2}})$, while the second  $O(\e^{-\delta \sqrt{n}})$ for $\delta >0$.
 \end{proof}
 
\bc For any $\mu>0$ and $0 < p < 2$ there is $C_{\mu,p}<\infty$ such that
\be   \sum_{m\in\N} \e^{\mu d(m,n)} \abs{\dirac{m}{D(\beta)}{n}}^p < C_{\mu,p}  \left ( n \vee 1 \right )^{\frac{1}{2}-\frac{p}{4}} \ .\ee 
\ec
\bpf
By H\"older's inequality we have
\begin{multline}
	\sum_{m} \e^{\mu d(m,n)} \abs{\dirac{m}{D(\beta)}{n}}^p
\\ \le \ \left ( \sum_{m} \e^{\frac{2}{p}(\mu+1) d(m,n)} \abs{\dirac{m}{D(\beta)}{n}}^2 
\right )^{\frac{p}{2}} \left ( \sum_{m} \e^{-\frac{2}{2-p} d(m,n)} \right )^{\frac{2-p}{2}}.
\end{multline} 
The right hand side is bounded as claimed by Theorem \ref{thm:squaresum} and Lemma \ref{lem:expsum}.
\epf

\section{Combes-Thomas bound}
 \label{CT section}In this section we prove a Combes-Thomas bound for the disordered Holstein Hamiltonian, Thm.\ \ref{thm:CT} below. To begin we derive a version of the Combes-Thomas bound in an abstract setting. 
 
 \subsection{An abstract Combes-Thomas bound} Let $\mc{G}$ be a countable set and $d$ a metric on $\mc{G}$. Our goal in this section is to obtain an estimate of the form
 \begin{equation}\label{eq:CT} \abs{\dirac{n}{\frac{1}{H-z}}{m}} \ \le \ A_z \e^{-\mu_z d(n,m)}	
 \end{equation}
 for the matrix elements of the resolvent of a self-adjoint operator $H$.  Throughout we assume that the set of functions with finite support, 
 \begin{equation}\label{eq:finitesupport} \mathcal{F} \ = \ \setb{ \phi\in \ell^2(\cG)}{\phi(n)=0 \text{ for $n$ outside a finite set}}
\end{equation}
is contained in the domain of $H$.  For $\psi\in \mc{F}$ we have the $H \psi(n)  =  \sum_{m}\dirac{n}{H}{m} \psi(m)$, 
where $\dirac{n}{H}{m} \ = \ \ipc{\delta_n}{H\delta_m}$ is the operator kernel of $H$. Our key assumption is that this kernel decays exponentially away from the diagonal in the sense that
 \begin{equation}\label{eq:Malpha} M_\alpha \ := \ \sup_{n\in \mc{G}} \left ( \sum_{m\neq n} \abs{\dirac{n}{H}{m}}^2  \e^{\alpha d(n,m)} \right )^{\frac{1}{2}}  \ < \ \infty \ 
 \end{equation}
 for some $\alpha >0.$  
 
 More generally, we may consider subsets $\cS,\cT\subset \cG$ and the corresponding projections $P_{\cS}$, $P_{\cT}$ onto the subspaces spanned by $\setb{\ket{n}}{n\in \cS}$ and $\setb{\ket{n}}{n\in \cT}$ respectively.  We shall also obtain estimates of the form
 \be\label{eq:PCT}  \norm{P_{\cS} \frac{1}{H-z} P_{\cT}} \ \le \ A_z \e^{-\mu_z d(\cS,\cT)},\ee 
  where 
 \be d(\cS,\cT) = \inf_{\substack{n\in \cS , \ m\in \cT}} d(n,m).\ee
 
An estimate of the form eq.\ \eqref{eq:CT} or eq.\ \eqref{eq:PCT} is called a Combes-Thomas bound. Such estimates are well known to follow from a bound of the form
 \be \label{inf1}  \sup_{n\in \mc{G}} \sum_{m\neq n} \abs{\dirac{n}{H}{m}} \e^{\alpha d(n,m)} \ < \ \infty
 \ee
 (see \cite[Theorem 10.5]{aizenman2015random}).  If the metric is uniformly exponentially summable, i.e.,
 \be \sup_m \sum_{n} \e^{-\mu d(n,m)} \ < \ \infty\ee 
 for some $\mu>0$,  then eq.\ \eqref{inf1} \tem\ and thus eq.\ \eqref{eq:CT} \tem \ follows from eq.\ \eqref{eq:Malpha} and the Cauchy-Schwarz inequality.   Our goal here is to obtain eq.\ \eqref{eq:CT} from eq.\ \eqref{eq:Malpha} under an assumption weaker that uniform exponential summability. 
 
For each $m \in \mc{G}$, let
\be  \mc{G}_m \ = \ \setb{n\in \mc{G}}{\dirac{n}{H}{m}\neq 0}.\ee  In place of uniform exponential summability, we shall require that,  for some $\mu <\alpha$,
\begin{equation}\label{eq:Fmu} F_\mu(m) \ := \ \sqrt{\sum_{n  \in \mc{G}_m } \e^{-\mu d(n,m)}}	
\end{equation}
is finite for every $m\in \mc{G}$. However, we will not assume $F_\mu$ to be uniformly bounded.

In general, eq.\ (\ref{eq:Malpha}) may not guarantee a unique self-adjoint extension of the operator $H$ restricted to $\mc{F}$.  However, the Holstein Hamiltonians considered in this paper are easily seen to be essentially self-adjoint on $\mc{F}$, since the off-diagonal parts of these operators are bounded. Thus it suffices for our purposes to restrict our attention to  operators that are essentially self-adjoint on $\mc{F}$. 
\begin{theorem}\label{thm:abstCT} Let $H$ be a self-adjoint operator on $\ell^2(\mc{G})$ with domain $\mc{D} \supset \mc{F}$ and suppose that $\mc{F}$ is a core for $H$.  Suppose that eq.\ \eqref{eq:Malpha} holds for the operator kernel of $H$ for some $\alpha >0$.  If for some $\mu < \alpha$, $F_\mu$ is relatively $H$ bounded \tem  i.e., there are $a,b\ge 0$ such that
\begin{equation}\norm{F_\mu \psi} \ \le \ a \norm{H\psi} + b \norm{\psi}\label{eq:Fmurelbound}	
\end{equation}
for all $\psi \in \mc{F}$ \tem  then for each $z\in \bbC \setminus \sigma(H)$  and $\cS,\cT \subset \mc{G}$
\be  \norm{P_{\cS} \frac{1}{H-z}P_{\cT}} \ \le \ A_{z;\nu} \e^{-\nu d(\cS,\cT)}\ee
where $\nu$ is any number such that
\be 0 \ \le \ \nu \ < \ \frac{\alpha-\mu}{2M_\alpha} \min \left \{M_\alpha,\frac{1}{ a},  \frac{1}{ a \norm{\frac{H}{H-z}} + b \norm{\frac{1}{H-z}} } \right \} ,\ee
and
\be  A_{z;\nu} \ = \ \frac{\norm{\frac{1}{H-z}}}{1- \frac{2 M_\alpha \nu}{\alpha -\mu} \left (  a \norm{\frac{H}{H-z}} + b \norm{\frac{1}{H-z}} \right )},\ee
which is finite.
\end{theorem}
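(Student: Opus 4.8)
The plan is to prove the abstract Combes--Thomas bound, Theorem \ref{thm:abstCT}, by the standard tilting argument: conjugate $H$ by a multiplication operator $\e^{\nu \phi}$, where $\phi$ is a suitable Lipschitz function adapted to the metric $d$, show that the conjugated operator $H_\nu := \e^{\nu \phi} H \e^{-\nu \phi}$ is a small bounded perturbation of $H$, so that $z \notin \sigma(H_\nu)$ with a quantitative resolvent bound, and then read off exponential decay of the kernel of $(H-z)^{-1}$ from the boundedness of $\e^{\nu \phi}(H-z)^{-1}\e^{-\nu\phi} = (H_\nu - z)^{-1}$.

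Here are the steps in order. First I would fix subsets $\cS, \cT \subset \cG$ with $R := d(\cS,\cT)$, and set $\phi(n) := \min\{ d(n,\cT), R\}$, which is $1$-Lipschitz with respect to $d$, equals $0$ on $\cT$, and equals $R$ on $\cS$. Second, I would estimate the commutator-type perturbation $W_\nu := H_\nu - H = (\e^{\nu\phi}H\e^{-\nu\phi} - H)$, whose kernel is $\dirac{n}{W_\nu}{m} = \dirac{n}{H}{m}\left(\e^{\nu(\phi(n)-\phi(m))}-1\right)$. Using $|\phi(n)-\phi(m)|\le d(n,m)$ and the elementary bound $|\e^{t}-1| \le |t|\e^{|t|} \le$ (for $\nu < \alpha$, say after splitting) something like $\e^{\nu d(n,m)} \le \e^{(\alpha-\mu)d(n,m)/2}\cdot$ a constant times $\nu$ times $\e^{\mu d(n,m)/2}$ — the point is to show $\dirac{n}{W_\nu}{m}$ is dominated by $\nu\, \e^{(\mu/2)d(n,m)}|\dirac nHm|$ times a controllable factor of $\e^{(\alpha-\mu)d(n,m)/2}$. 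Then a Schur test, applied with the weight split so that one factor is absorbed by $M_\alpha$ (via the $\ell^2$ sum with weight $\e^{\alpha d}$) and the other by $F_\mu$, bounds $W_\nu$ as an operator between $\mc F$ and $\ell^2(\cG)$: one gets $\norm{W_\nu \psi} \le \frac{2M_\alpha \nu}{\alpha-\mu}\norm{F_\mu \psi}$ for $\psi \in \mc F$, roughly, after optimizing the split. Third, combine this with the relative-boundedness hypothesis \eqref{eq:Fmurelbound}: writing $\psi = (H-z)^{-1}\varphi$ and using $\norm{H\psi} \le \norm{\frac{H}{H-z}}\norm{\varphi}$, $\norm{\psi}\le\norm{\frac1{H-z}}\norm{\varphi}$, one gets $\norm{W_\nu (H-z)^{-1}} \le \frac{2M_\alpha\nu}{\alpha-\mu}\left(a\norm{\frac{H}{H-z}}+b\norm{\frac1{H-z}}\right) =: \theta_\nu$, and for $\nu$ in the stated range $\theta_\nu < 1$. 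Fourth, since $H_\nu - z = (H-z) + W_\nu = \left(1 + W_\nu(H-z)^{-1}\right)(H-z)$ is invertible by Neumann series with $\norm{(H_\nu-z)^{-1}} \le \frac{\norm{1/(H-z)}}{1-\theta_\nu} = A_{z;\nu}$, and $P_{\cS}\e^{\nu\phi} = \e^{\nu R}P_{\cS}$, $\e^{-\nu\phi}P_{\cT} = P_{\cT}$, we conclude
\be \norm{P_{\cS}\tfrac{1}{H-z}P_{\cT}} = \e^{-\nu R}\norm{P_{\cS}\e^{\nu\phi}\tfrac{1}{H-z}\e^{-\nu\phi}P_{\cT}} = \e^{-\nu R}\norm{P_{\cS}(H_\nu-z)^{-1}P_{\cT}} \le A_{z;\nu}\e^{-\nu d(\cS,\cT)}. \ee
A technical point to handle along the way: one must justify that $\e^{\pm\nu\phi}$ maps the core $\mc F$ into $\mc D(H)$ and that $H_\nu$ so defined agrees with $H + W_\nu$ and is closed — this is where essential self-adjointness on $\mc F$ and the assumption $\mc F \subset \mc D(H)$ are used; since $\phi$ is bounded (by $R$), $\e^{\pm\nu\phi}$ is a bounded invertible multiplication operator preserving $\ell^2(\cG)$, and the identity of kernels is a formal computation valid on $\mc F$, extended by density.

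The main obstacle I anticipate is the bookkeeping in the Schur-test step: one must carefully split the weight $\e^{\nu d(n,m)}$ arising from $W_\nu$ into a piece $\e^{(\alpha-\mu)d/2}$ that, together with the residual $\e^{\mu d/2}$, reconstructs $\e^{\alpha d /2}$ to be controlled by $M_\alpha$ in the $\ell^2(m)$-direction, and a piece $\e^{\mu d/2}$ controlled by $F_\mu$ in the $\ell^2(n)$-direction, while tracking the factor of $\nu$ and extracting exactly the constant $\frac{2M_\alpha}{\alpha-\mu}$ claimed (this uses $\sup_{t\ge0} t\,\e^{-(\alpha-\mu)t/2}\cdot(\text{something}) $-type estimates, i.e. $|\e^{\nu t}-1|\le \nu t\e^{\nu t}$ and then $\nu t \e^{-(\alpha-\mu)t/2}\le \frac{2\nu}{\alpha-\mu}\cdot\frac{1}{e}$... actually more simply $|\e^{\nu t}-1|\le \frac{\nu}{(\alpha-\mu)/2}\e^{(\alpha-\mu)t/2}$ valid precisely when $\nu \le (\alpha-\mu)/2$, which drives the range of $\nu$). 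Everything else is routine functional-analytic manipulation; I would present the Schur estimate as a short lemma before assembling the theorem.
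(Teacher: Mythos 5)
Your proposal is correct and follows essentially the same route as the paper's proof: conjugation by $\e^{\nu\phi}$ with a truncated distance function, the kernel estimate $\abs{\e^{\nu t}-1}\le \frac{2\nu}{\alpha-\mu}\e^{(\alpha-\mu)t/2}$ combined with a Cauchy--Schwarz/Schur splitting of $\e^{\alpha d}$ into $M_\alpha$- and $F_\mu$-controlled pieces to get $\norm{\delta H_\nu\psi}\le \frac{2M_\alpha\nu}{\alpha-\mu}\norm{F_\mu\psi}$, and then the Neumann series for $\bigl(1+\delta H_\nu(H-z)^{-1}\bigr)(H-z)$, including the correct identification of why the range of $\nu$ includes the $M_\alpha$ term and where the core/closedness issues enter. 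The only cosmetic difference is that you truncate the weight at $R=d(\cS,\cT)$ while the paper truncates at an arbitrary $L$ and lets $L\to\infty$; both serve the same purpose of keeping $\e^{\pm\nu\phi}$ bounded.
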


\begin{proof}Fix $\cT \subset G$ and $L>0$. Let $h_{L;\cT}(n)  =  d(n,\cT)\wedge L.$
Thus $h_{L;\cT}(n) \le L$ for all $n$. We wish to show that 
$H_\nu = \e^{\nu h_{L;\cT}} H \e^{-\nu h_{L;\cT}}$
is a small perturbation of $H$ and use this to bound the matrix elements of the resolvent of $H$.   Initially we define $H_\nu$ on $\mc{F}$.  Given $\psi\in \mc{F}$, we have $ \e^{-\nu h_{L;\cT}} \psi\in \mc{F}\subset \mc{D}$ and thus it makes sense to take 
\begin{equation}\label{eq:Hepsdefn} H_\nu\psi \ := \ \e^{\nu h_{L;\cT}} H \e^{-\nu h_{L;\cT}}\psi .
\end{equation}
Let
\be\label{eq:deltaHeps}\delta H_\nu \psi \ :=\ H_\nu \psi - H \psi.
\ee

If $p>\nu$, then  \begin{multline}
	\abs{\dirac{n}{\delta H_\nu}{m}} \ = \ \abs{ \e^{\nu (h_{L;\cT}(n) -h_{L;\cT}(m))} - 1  } \abs{\dirac{n}{H}{m}} \\ \le \ \abs{\e^{\nu d(n,m)} -1} \abs{\dirac{n}{H}{m}} \ \le \ 
	\abs{\e^{\nu d(n,m)} -1}\e^{-p d(n,m)} \e^{p d(n,m)} \abs{\dirac{n}{H}{m}} \\ \le \ \frac{  \nu}{p}     \e^{p d(n,m)} I[n\neq m] \abs{\dirac{n}{H}{m}},
\end{multline} 
since $\max_{x> 0} (e^{\nu x}-1)\e^{-p x} \le \nicefrac{ \nu}{p}.$
Taking $p = \nicefrac{\alpha-\mu}{2}$ we find
\begin{multline}\label{eq:deltaHepsrelbound}
	\norm{\delta H_\nu \psi}^2 \ = \ \sum_n \abs{\sum_{m} \dirac{n}{\delta H_\nu}{m} \psi(m)}^2 \ = \ \sum_n \abs{\sum_{m} \dirac{n}{\delta H_\nu}{m}\e^{\frac{\mu}{2}d(n,m)} \e^{-\frac{\mu}{2} d(n,m)} \psi(m)}^2 \\
	  \le \ \frac{4\nu^2}{(\alpha-\mu)^2} \left [ \sup_{n} \sum_{m\neq n} \abs{\dirac{n}{H}{m}}^2 \e^{\alpha d(n,m)} \right ] \sum_{n}\sum_{m\in G_n} \e^{-\mu d(n,m)} \abs{\psi(m)}^2 \\ = \ \frac{4 M_\alpha^2}{(\alpha-\mu)^2} \nu^2 \norm{F_\mu \psi}^2 \  \le \ \frac{4 M_\alpha^2}{(\alpha-\mu)^2} \nu^2  \left ( a \norm{H\psi} + b \norm{\psi}\right )^2.
\end{multline}

Now for any $\psi \in \mc{H}$,
$ \norm{\e^{-\nu h_{L;\cT}} \psi }  \le \norm{\psi}.$
Thus for $\psi \in \mc{F}$,
\be  \norm{H \e^{-\nu h_{L;\cT}}\psi} \ \le \  \norm{H_\nu \psi} \ \le 
\ \norm{H\psi} + \norm{\delta H_\nu \psi} \ \le \  C_{\nu} \left ( \norm{H\psi} + \norm{\psi} \right )\ee
where $C_\nu <\infty$ by eq.\ \eqref{eq:deltaHepsrelbound}. Since $\mc{F}$ is a core for $H$ it follows by taking limits that $\e^{-\nu h_{L;\cT}}\psi \in \mc{D}$ whenever $\psi \in \mc{D}$.  Hence the definitions eqs.\ (\ref{eq:Hepsdefn}, \ref{eq:deltaHeps}) and the bound eq.\ \eqref{eq:deltaHepsrelbound} can be extended to $\psi\in \mc{D}$.  Furthermore if  
\be \nu < \frac{\alpha-\mu}{2 M_\alpha a}\ee 
it follows from eq.\ \eqref{eq:deltaHepsrelbound} that
$\delta H_\nu=H_\nu-H$ is $H$-bounded with relative bound $<1$ and thus by \cite[Ch 4, Thm. 1.1]{kato1995perturbation}  that $H_\nu$ is a closed operator on $\mc{D}$.

Since $H_\nu=\e^{\nu h_{L;\cT}}H\e^{-\nu h_{L;\cT}}$, with $\e^{\pm \nu h_{L;\cT}}$ bounded,  we have 
\be \frac{1}{H-z} \ = \ \e^{-\nu h_{L;\cT}} \frac{1}{H_\nu-z} \e^{\nu h_{L;\cT}}\ee
for any $z$ in the resolvent set of either $H$ or $H_\nu$; thus $\sigma(H) =\sigma(H_\nu)$.  Given $z$ in the common resolvent set, we conclude from eq.\ \eqref{eq:deltaHepsrelbound} that
\be \norm{ \delta H_\nu \frac{1}{H-z}} \ \le \ \frac{2M_\alpha }{\alpha -\mu} \nu \left ( a \norm{H\frac{1}{H-z}} + b \norm{\frac{1}{H-z}} \right ). \ee
Thus the Neumann series
\be  \frac{1}{H_\nu - z}  \ = \ \sum_{n=0}^\infty \frac{1}{H-z} \left [ \delta H_\nu \frac{1}{H-z} \right ]^n\ee
converges provided 
\be  \nu < \frac{\alpha -\mu}{2 M_\alpha \left ( a \norm{\frac{H}{H-z}} + b \norm{\frac{1}{H-z}} \right ) }.\ee
For such $\nu$ and $z$ we have, since $h_{L;\cT}(m)=0$ for $m\in \cT$, 
\begin{multline}
\norm{P_{\cS} \frac{1}{H-z} P_{\cT}} \ = \ \norm{\e^{-\nu h_{L;\cT}}P_{\cS} \frac{1}{H_\nu -z}P_{\cT}} \ \le \ \e^{-\nu (d(\cS,\cT)\wedge L)}\norm{\frac{1}{H_\nu -z}} \\
	\le \ \frac{\norm{\frac{1}{H-z}}}{1- \frac{2M_\alpha }{\alpha -\mu} \nu \left ( a \norm{H\frac{1}{H-z}} + b \norm{\frac{1}{H-z}} \right ) } \e^{-\nu (d(\cS,\cT)\wedge L)}.
\end{multline}
Taking $L\rightarrow \infty$ yields the desired result.\end{proof}

 \subsection{Combes-Thomas estimates for the Disordered Holstein Hamiltonian} In this section we apply Thm.\ \ref{thm:abstCT} to the operator $H_{\cS}(\gamma)$, seen as an operator on $\ell^2(\cS)$, where $\cS$ is any subset of  $\cE_\Lambda$ with $\Lambda \subset \bbZ^d$. Recall the definition of the metric $d_\Lambda$ on $\cE_\Lambda$ \tem see eq.\ \eqref{eq:dLambda}.
\begin{theorem}\label{thm:CT} Fix $\mu >0$ and let
\be S_\mu \ := \ 2\sqrt[4]{2D}\e^{-2\mu } \frac{ \mu^3}{ (1+\mu)^2} \frac{1}{  \sup_{n} \sum_{m} \abs{\dirac{n}{D(\beta)}{m}}^2 \e^{4\mu \abs{\sqrt{n}-\sqrt{m}}}}.\ee
Then $S_\mu>0$ and given $\Lambda\subset \bbZ^D$,  $\mc{S}\subset \cE_\Lambda$,  $z \in \bbC \setminus \sigma(H_{\mc{S}}(\gamma))$, and $\cS_1$, $\cS_2 \subset \mc{S}$, we have
\begin{equation}\label{eq:CTbound} \norm{P_{\cS_1}\frac{1}{H_{\mc{S}}(\gamma)-z} P_{\cS_2}} \ \le \ A_{z;\nu} \e^{-\nu d_\Lambda(\cS_1;\cS_2)}	
\end{equation}
where
\be 0 \ \le \ \nu \ < \  \, \min \left \{\mu,   \frac{ S_\mu\omega}{\gamma }  ,    \frac{ S_\mu\omega}{\gamma } \frac{1}{ \left (  \norm{\frac{H_{\mc{S}}}{H_{\mc{S}}-z}} + 4\omega  \norm{\frac{1}{H_{\mc{S}}-z}} \right ) } \right \}, \ee 
and
\be  A_{z;\nu} \ =  \ \frac{\norm{\frac{1}{H_{\mc{S}}-z}}}{1- \nu\frac{\gamma}{S_\mu\omega} \left (   \norm{\frac{H_{\mc{S}}}{H_{\mc{S}}-z}} + 4\omega \norm{\frac{1}{H_{\mc{S}}-z}} \right )},\ee
with $H_{\mc{S}}=H_{\mc{S}}(\gamma)$.
\end{theorem}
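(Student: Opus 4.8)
The plan is to deduce Thm.\ \ref{thm:CT} from the abstract Combes--Thomas bound, Thm.\ \ref{thm:abstCT}, applied with $\mc{G}=\mc{S}$, with the metric $d_\Lambda$ of eq.\ \eqref{eq:dLambda} restricted to $\mc{S}$, and with the self-adjoint operator $H=H_{\mc{S}}(\gamma)=P_{\mc{S}}H_\Lambda(\gamma)P_{\mc{S}}$ viewed as acting on $\ell^2(\mc{S})$.  First I would record that $\mc{F}$, the space of finite-support vectors in $\ell^2(\mc{S})$, is a core for $H_{\mc{S}}(\gamma)$: in the eigenbasis $\cE_\Lambda$ the operator $\omega H_\Lambda^{(\mathrm{ph})}+V_\Lambda$ is diagonal with entries $\omega N_\Lambda(\vm)+v_x$, hence essentially self-adjoint on $\mc{F}$, while $\gamma\Delta_\Lambda$ is bounded, so the Kato--Rellich theorem applies and restricting to $\mc{S}$ preserves this.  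It then remains to check the two hypotheses of Thm.\ \ref{thm:abstCT}: the kernel-decay condition eq.\ \eqref{eq:Malpha} and the relative-boundedness condition eq.\ \eqref{eq:Fmurelbound} for $F_\mu$.

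For eq.\ \eqref{eq:Malpha}, observe that the only off-diagonal matrix elements of $H_{\mc{S}}(\gamma)$ in $\cE_\Lambda$ are the $\gamma K_{x,\vm}^{y,\vxi}$ with $x\sim y$ from eq.\ \eqref{Kdef}; for such a pair $L_\Lambda(x,\vm;y,\vxi)=1$ and $\vm$, $\vxi$ agree away from $\{x,y\}$, so $d_\Lambda(x,\vm;y,\vxi)=1+r_\Lambda(\vm,\vxi)$ with $r_\Lambda$ supported on the two sites $x$ and $y$.  Because $\abs{K_{x,\vm}^{y,\vxi}}^2=\abs{\dirac{\vm(x)}{D(\beta)}{\vxi(x)}}^2\abs{\dirac{\vm(y)}{D(\beta)}{\vxi(y)}}^2$, summing over $\vxi$ and over $y\sim x$ factorizes into one-site sums; without the weight each of these equals $1$ by unitarity, and with the weight each is bounded uniformly by Thm.\ \ref{thm:squaresum} (the $p=2$ instance of Prop.\ \ref{disp0}).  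This yields, for every $\alpha>0$,
\[
M_\alpha^2\ \le\ \gamma^2 e^{\alpha}\,2D\left(\sup_n\sum_m\abs{\dirac{n}{D(\beta)}{m}}^2 e^{\alpha\abs{\sqrt n-\sqrt m}}\right)^2\ <\ \infty,
\]
with $M_\alpha$ of order $\gamma$; taking $\alpha=4\mu$ brings in exactly the displacement-operator supremum appearing in $S_\mu$.  For the relative bound on $F_\mu$ (with $\mu<\alpha$), the same factorization together with the $\alpha=0$ case of Lem.\ \ref{lem:expsum} gives $F_\mu(x,\vm)^2=\sum_{(y,\vxi)\in\mc{G}_{x,\vm}}e^{-\mu d_\Lambda(x,\vm;y,\vxi)}\le c^2\,(N_\Lambda(\vm)\vee1)$ for a constant $c=c(\mu,\beta,D)<\infty$.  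Since $H_\Lambda^{(\mathrm{ph})}$ is the diagonal operator with eigenvalues $N_\Lambda(\vm)$ and $\omega H_\Lambda^{(\mathrm{ph})}=H_\Lambda(\gamma)-\gamma\Delta_\Lambda-V_\Lambda$ with $\Delta_\Lambda\ge0$, $V_\Lambda\ge0$, we have the operator inequality $\omega H_{\mc{S}}^{(\mathrm{ph})}\le H_{\mc{S}}$, so $\langle\psi,H_{\mc{S}}^{(\mathrm{ph})}\psi\rangle\le\omega^{-1}\langle\psi,H_{\mc{S}}\psi\rangle\le\omega^{-1}\norm{H_{\mc{S}}\psi}\norm{\psi}$, whence $\norm{F_\mu\psi}^2\le c^2\left(\omega^{-1}\norm{H_{\mc{S}}\psi}\norm{\psi}+\norm{\psi}^2\right)$; an arithmetic--geometric inequality then rewrites this as $\norm{F_\mu\psi}\le a\left(\norm{H_{\mc{S}}\psi}+4\omega\norm{\psi}\right)$ with $a$ of order $c/\omega$.

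With eqs.\ \eqref{eq:Malpha} and \eqref{eq:Fmurelbound} verified, Thm.\ \ref{thm:abstCT} gives the bound eq.\ \eqref{eq:CTbound}; substituting $M_\alpha$ (of order $\gamma$), $b=4\omega a$, and the explicit constant $c$ into its formulas for the admissible $\nu$ and for $A_{z;\nu}$, and collecting the $\mu$-dependent factors (coming from $\alpha-\mu=3\mu$, from the H\"older step in Prop.\ \ref{disp0}, and from Lem.\ \ref{lem:expsum}) together with $e^{-\alpha/2}=e^{-2\mu}$ and the power of $2D$, produces the stated ranges with the displayed $S_\mu$.  I expect the relative-boundedness step to be the main obstacle: one must pass carefully from the pointwise estimate $F_\mu(x,\vm)\le c\sqrt{N_\Lambda(\vm)\vee1}$ to a genuine operator bound $\norm{F_\mu\psi}\le a\norm{H_{\mc{S}}\psi}+b\norm{\psi}$ holding on all of $\mc{F}$ (using the operator inequality $\omega H_{\mc{S}}^{(\mathrm{ph})}\le H_{\mc{S}}$ and care with operator square roots), and one must track the constants precisely enough to recover the explicit threshold $S_\mu$.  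The kernel-decay estimate, by contrast, is a routine consequence of Prop.\ \ref{disp0} once the structure of the hopping term in the displaced basis $\cE_\Lambda$ is exploited.
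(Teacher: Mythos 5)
Your proposal is correct and takes essentially the same route as the paper: the paper likewise deduces Thm.\ \ref{thm:CT} from Thm.\ \ref{thm:abstCT} with $\mc{G}=\mc{S}$ and the metric $d_\Lambda$, bounds $M_\alpha$ by factorizing the hopping kernel over the two sites $x\sim y$ and invoking Thm.\ \ref{thm:squaresum} (with the weight $e^{\alpha\abs{\sqrt n-\sqrt m}}$, $\alpha=4\mu$), and verifies the relative bound eq.\ \eqref{eq:Fmurelbound} for $F_\mu$ via Lem.\ \ref{lem:expsum} together with the operator inequality $\omega H^{(\mathrm{ph})}_{\mc{S}}\le H_{\mc{S}}$ (positivity of $\gamma\Delta_\Lambda$ and $V_\Lambda$), ending with constants satisfying $b=4\omega a$ exactly as you indicate. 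Your only deviation is cosmetic: you pass directly to $F_\mu(x,\vm)^2\le c^2\left(N_\Lambda(\vm)\vee 1\right)$, whereas the paper keeps the local factors $\sqrt{\vm(x)\vee 1}\,\sum_{y\sim x}\sqrt{\vm(y)\vee 1}$ before making the same reduction through $\ipc{\psi}{H_{\mc{S}}\psi}$.
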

\begin{remark}
	In the applications of this bound in the present work, we take $z = E1 + \im \epsilon$ with $|\epsilon|<1$ and $\dist(E,\sigma(H_{\mc{S}}))=\Delta >0$, in which case we may use the estimates
	\be  \norm{\frac{H_{\mc{S}}}{H_{\mc{S}}-z}} \ \le \ 1 + \frac{|E|}{\Delta} \quad \text{and} \quad \norm{\frac{1}{H_{\mc{S}}-z}} \ \le \ \frac{1}{\Delta}.\ee 
	Thus the bound \eqref{eq:CTbound} holds with 
	\be \nu < \min \left \{ \mu, S_\mu \frac{\omega}{\gamma}  \frac{\Delta}{\Delta + 2|E| + 4\omega } \right \}\ee
	and
	\be  A_{z;\nu} \ \le \ \frac{1}{\Delta - \nu \frac{\gamma}{S_\mu\omega}\left (   \Delta + 2|E| + 4\omega\right )}.\ee 
	Choosing $\nu \le  \frac{S_\mu}{2} \frac{\omega}{\gamma}  \frac{\Delta}{\Delta + 2|E| + 4 \omega }$, we have $A_{z;\nu} \le \nicefrac{2}{\Delta}$.
\end{remark}

\begin{proof}To begin, note that $S_\mu >0$ by Thm.\ \ref{thm:squaresum}.  Furthermore
\begin{align}M_{\alpha}(\mc{S}) \ :=& \  \left ( \sup_{\ket{x,\vm}\in \mc{S}} \sum_{\ket{y,\vec{\xi}}\neq \ket{x,\vm} \in \cE_\Lambda} \abs{\dirac{x,\vm}{H_{\Lambda}}{y,\vxi}}^2 \e^{\alpha d_\Lambda(x,\vec{m};y,\vec{\xi})} \right )^{\frac{1}{2}}
\\ \nonumber \le& \ \gamma  \left ( \sup_{\ket{x,\vm}\in \cE_\Lambda} \sum_{\substack{ y\in \Lambda , \ y\sim x \\ \vxi\in \cM_\Lambda } }
\abs{K_{x,\vm}^{y,\vxi}}^2 \e^{\alpha d_\Lambda(x,\vec{m};y,\vec{\xi})} \right )^{\frac{1}{2}}\\ \nonumber 
=& \ \gamma \left ( \sup_{\ket{x,\vm}\in \cE_\Lambda} \sum_{\substack{y\in \Lambda , \, y\sim x  \\ \xi_x,\, \xi_y \in \bbN } }
	\abs{ \dirac{ \bm{m}(x) }{ D(-\beta) }{\xi_x}}^2
	\abs{\dirac{ \bm{m}(y) }{ D(\beta) }{ \xi_y}}^2 \e^{\alpha \left (1 + \abs{\sqrt{\vm(x)}-\sqrt{\xi_x}} + \abs{\sqrt{\vm(y)} - \sqrt{\xi_y}} \right )} \right )^{\frac{1}{2}} \\
	\nonumber 
	\le& \ \gamma \sqrt{2D} \e^{\nicefrac{\alpha}{2}}  \sup_{n} \sum_{m} \abs{\dirac{n}{D(\beta)}{m}}^2 \e^{\alpha \abs{\sqrt{n}-\sqrt{m}}}  \ = \   (2D)^{\nicefrac{3}{4}}\frac{\gamma}{2}  \frac{\alpha^3}{(4+\alpha)^2}  \frac{1}{S_{4\alpha}}. 
\end{align}	

Let $\mu >0$ and let
\be F_\mu(x,\vec{m};\mc{S}) \ := \ \sqrt{ \sum_{\ket{y,\vxi}\in \mc{S}}I \left [ \dirac{y,\vxi}{H_{\mc{S}}(\gamma)}{x,\vm} \neq 0 \right ] \e^{-\mu d(x,\vm;y,\vxi)} }.\ee 
Then
\begin{multline}F_\mu(x,\vec{m};\mc{S})^2 \ \le \   \sum_{\substack{y\sim x \\ n,m \in \bbN}}\e^{-\mu (\abs{\sqrt{\vm(x)+n}-\sqrt{\vm(x)}} + \abs{\sqrt{\vm(y)+n}-\sqrt{\vm(y)}} )} \\ \le \   \left ( \frac{\mu + 2}{\mu} \right )^4 \sqrt{\vm(x) \vee 1} \sum_{y\sim x}\sqrt{\vm(y) \vee 1},
\end{multline}
by Lem.\ \ref{lem:expsum}. Thus for any $\psi \in P_{\mc{S}}\mc{H}_\Lambda$ we have
\begin{align}
\norm{F_\mu(x,\vec{m})\psi}^2 \ \le & \ \left ( 1 + \frac{2}{\mu} \right )^4 \frac{1}{2} \sum_{\ket{x,\vec{m}}\in \mc{S}} \sum_{y\sim x} \left ( 
\sum_{y\sim x} \frac{1}{\sqrt{2D}} \vm(x)\vee 1 + \sqrt{2D} \vm(y)\vee 1 \right ) \abs{\psi(x,\vec{m})}^2 \\ \nonumber 
\le & \ 	\left ( 1 + \frac{2}{\mu} \right )^4 \left ( \sqrt{2D}\norm{\psi}^2  + \frac{\sqrt{2D}}{2}\sum_{\ket{x,\vec{m}}\in \mc{S}} \sum_{y\in \Lambda}  \vm(y)  \abs{\psi(x,\vec{m})}^2 \right ) \\ \nonumber 
\le &  \ \left ( \frac{\mu+2}{\mu} \right )^4 \sqrt{2D} \left (\frac{1}{2\omega} \ipc{\psi}{H_{\mc{S}}(\gamma)\psi} + \norm{\psi}^2  \right ) \\\nonumber 
\le & \ \left ( \frac{\mu+2}{\mu} \right )^4 \sqrt{2D} \left ( \frac{t^2}{4\omega^2}  \norm{H_{\mc{S}}(\gamma) \psi}^2 + \left ( 1 +\frac{1}{2t^2} \right ) \norm{\psi}^2 \right ),
\end{align}
where $t>0$ is arbitrary.
Taking square roots and $t=1$ we find the upper bound  
\begin{equation}
\norm{F_\mu(x,\vec{m})\psi} \ \le \ \left ( \frac{\mu+2}{\mu} \right )^2 \sqrt[4]{2D} \left ( \frac{1}{2 \omega}  \norm{H_{\mc{S}}(\gamma)\psi} + 2 \norm{\psi} \right ).
\end{equation}

Thus the hypotheses of Thm. \ref{thm:abstCT} are satisfied with $\mc{G}=\mc{S}$, $H = H_{\mc{S}}(\gamma)$, 
\be  a= \left ( \frac{\mu +2}{\mu} \right )^2 \sqrt[4]{2D} \frac{1}{2 \omega} \ ,\quad \text{and} \quad b=  2 \left ( \frac{\mu +2}{\mu} \right )^2 \sqrt[4]{2D}  .\ee
Taking $\mu = \nicefrac{\alpha}{2}$ yields the result.
\end{proof}

\section{Dynamical Localization}\label{sec:DL}
In this section, we sketch a proof of Thm.\ \ref{thm:gf2ec}, following closely arguments in ref.~\cite{AW}.  The core of the argument is an abstract averaging principle which may be formulated as follows.  We fix a separable Hilbert space $\cH$ and a self-adjoint operator $H_0$ on $\cH$ with compact resolvent.  Let $P$ be a projection on $\cH$ and consider the family \be\label{eq:Hv} H_v=H_0+v P, \quad v\in \bbR. \ee  For each $v\in \bbR$, $H_v$ is a self-adjoint operator with compact resolvent.  Thus given $\phi,\psi\in \cH$ and an interval $I\subset \bbR$ it makes sense to define the correlator
\be\label{eq:Qv} Q_v(\phi,\psi;I) \ = \ \sum_{E\in \sigma(H_v)\cap I} \abs{\ipc{\phi}{P_{E}(H_v)\psi}}.\ee
Note that 
\be
Q_v(\phi,\phi) \ = \ \sum_{E \in \sigma(H_v)\cap I} \ipc{\phi}{P_{E}(H_v) \phi} \ = \ \norm{P_I(H_v)\phi}^2,
\ee
and
\be
\abs{Q_v(\phi,\psi;I)} \ \le \ Q_v(\phi,\phi;I)^{\frac{1}{2}} Q_v(\psi,\psi;I)^{\frac{1}{2}} \ = \  \norm{P_I(H)\phi}\norm{P_I(H)\psi},\ee 
by Cauchy-Schwarz applied to the Hilbert space inner product and the sum over $E$.

Thm.\ \ref{thm:gf2ec} follows  from the following abstract averaging bound
\bt\label{thm:aba} For any $0 < s <1$, any interval $I\subset \bbR$, any $\phi \in \ran P$ with $\norm{\phi}=1$, any $\psi\in \cH$,   any $m\in \bbN$, and any $v\in \bbR$,
\be \int_{\bbR} Q_{u+v}(\phi,\psi;I)^{2-s} \frac{\di u}{|u|^s} \ \le \ \norm{P_I(H)\psi}^{2-2s}  \sum_{n=1}^N  \int_{I} \abs{\ipc{\phi_n}{(H_v - E)^{-1}\psi}}^s \di E,\ee
where $\{\phi_n\}_{n=1}^N$, $N=\dim \ran P$, is any orthonormal basis for the range of $P$.
\et
\begin{remark}
	The value $N=\infty$ is allowed.
\end{remark}

Before sketching the proof of Thm.\ \ref{thm:aba}, let us show how it may be used to prove Thm.\ \ref{thm:gf2ec}. To begin, note that given  positive $h:[0,V_+]\rightarrow [0,\infty)$ we have
\begin{multline}\int_0^{V_+} h(v) \rho(v) \di v \ = \ \int_0^{V_+}\int_0^{V_+} h(v-v'+v') \rho(v)\di v\rho(v')\di v' \\ \le \ \frac{\norm{\rho}_\infty}{V^s_+} \int_0^{V_+} \left ( \int_{-\infty}^\infty h(v+u) \frac{\di u}{|u|^s} \right ) \rho(v) \di v.	
\end{multline}
Thus
\begin{multline}\bbE_{\{x\}}\left ( Q_\Lambda(x,\vm;y,\vxi;\bJ_k)^{2-s} \right ) \ = \  \int_{0}^{V_+} Q_\Lambda(x,\vm;y,\vxi;\bJ_k)^{2-s} \rho(v_x)\di v_x   \\ \le \ \frac{\norm{\rho}_\infty}{V^s_+}
\int_0^{V_+} \left ( \int_{-\infty}^\infty Q_{\Lambda;v_x \mapsto v+u} (x,\vm;y,\vxi;\bJ_k)^{2-s} \frac{\di u}{|u|^s} \right ) \rho(v) \di v.
\end{multline}
Applying Thm.\ \ref{thm:aba} yields
\begin{multline}\label{eq:intermeddl}\bbE_{\{x\}}\left ( Q_\Lambda(x,\vm;y,\vxi;\bJ_k)^{2-s} \right ) 
\\ \le \  \frac{\norm{\rho}_\infty}{V^s_+} \norm{P_{\bJ_k}\ket{y,\vxi}}^{2-2s} \sum_{\vla}  \int_0^{V_+}\int_{0}^{\omega k+V_++4D\gamma} \abs{\dirac{x,\vla}{\frac{1}{H_\Lambda - E  }}{y,\vxi}}^s \di E \rho(v_x) \di v_x ,
\end{multline}
where $P_{\bJ_k}=P_{\bJ_k}(H_\Lambda)$ is the spectral projection for $H_\Lambda$ onto $\bJ_k = [0,\omega k + V_++4D\gamma]$.
By H\"older's inequality
\be 
\Ev{Q_\Lambda(x,\vm;y,\vxi;\bJ_k)} \ \le \  \Ev{Q_\Lambda(x,\vm;y,\vxi;\bJ_k)^{2-s}}^{\frac{1}{2-s}}.
\ee 
Thus, averaging eq.\ \eqref{eq:intermeddl}, we obtain
\begin{multline} \Ev{Q_\Lambda(x,\vm;y,\vxi;\bJ_k)} \\ \le \ \norm{P_{\bJ_k} \ket{y,\vxi}}^{1 - \frac{s}{2-s}}
\left [  \frac{\norm{\rho}_\infty}{V^s_+} \sum_{\vla} \int_{0}^{\omega k + V_+ + 4D\gamma} \bbE_{\{x\}} \left ( \abs{G_\Lambda(x,\vla;y,\vxi;E)}^s\right ) \di E \right ]^{\frac{1}{2-s}}.\end{multline}
Since
\begin{multline}
\norm{P_{\bJ_k}(H)\ket{y,\vxi}}^2 \ = \ \dirac{y,\vxi}{P_{\bJ_k}(H)}{y,\vxi} \\ \le \ (1+\omega(k+1)) \dirac{y,\vxi}{\frac{1}{H+1}}{y,\vxi} \ \le \ \frac{1 + \omega(k+1)}{1+ \omega N_\Lambda(\vxi)} \ \le \ C_{\omega} \frac{k\vee 1}{N_\Lambda(\vxi)\vee 1},
\end{multline}
we find that
\begin{multline}
	\bbE_{\{x\}}\left ( Q_\Lambda(x,\vm;y,\vxi;\bJ_k)^{2-s} \right ) 
	\\ \le C    \left [ \frac{k \vee 1}{N_\Lambda(\vxi) \vee 1} \right ]^{1-\frac{s}{2-s}}\sum_{\vla} \int_{0}^{\omega k+V_+ + 4D\gamma} \bbE_{\{x\}} \left ( \abs{G_\Lambda(x,\vla;y,\vxi;E)}^s\right ) \di E,
\end{multline}
which is eq. \eqref{eq:gf2ec}.  This completes the proof of Thm.\ \ref{thm:gf2ec}, assuming Thm.\ \ref{thm:aba}.

Thm.\ \ref{thm:aba} follows from an identity for a family of correlators that interpolate between $Q_v(\phi,\psi;I)$ and $Q_v(\phi,\phi;I)$:
\begin{equation}\label{eq:Qs}
Q_v(\phi,\psi;I;s) \ = \ \sum_{E\in \sigma(H_v)\cap I}  \abs{\ipc{\phi}{P_{E}(H_v)\phi}}^{1-s} \abs{\ipc{\phi}{P_{E}(H_v)\psi}}^s. 
\end{equation}
Since 
\be \abs{\ipc{\phi}{P_{E}(H_v)\psi}} \ \le \ \sqrt{ \ipc{\phi}{P_{E}(H_v)\phi} \ipc{\psi}{P_{E}(H_v)\psi}},\ee
$Q_v(\phi,\psi;I;s)$ is well defined in the range $0 \le s \le 2$ and satisfies
\begin{multline} Q_v(\phi,\psi;I;s) \ \le \ \sum_{E\in \sigma(H_v)\cap I} \abs{\ipc{\phi}{P_{E}(H_v)\phi}}^{1-\frac{s}{2}} \abs{\ipc{\psi}{P_{E}(H_v)\psi}}^{\frac{s}{2}} \\ \le \ Q_v(\phi,\phi;I)^{1-\frac{s}{2}} Q_v(\psi,\psi;I)^{\frac{s}{2}} \\  = \ \norm{P_I(H)\phi}^{2-s} \norm{P_I(H)\psi}^s \ = \  \norm{\phi}^{2-s} \norm{\psi}^s,	
\end{multline}
by H\"older's inequality. Furthermore, as a function of $s$, $Q_v(\cdot;s)$ is log-convex.  Thus
\begin{multline}\label{eq:s=1bound} Q_v(\phi,\psi;I,1) \ \le \ Q_v(\phi,\psi;I,2)^{\frac{1-s}{2-s}} Q_v(\phi,\psi;I,s)^{\frac{1}{2-s}}  \\ \le \  \norm{P_I(H)\psi}^{2 \frac{1-s}{2-s}} Q_v(\phi,\psi;I,s)^{\frac{1}{2-s}}.\end{multline}

Regarding the correlator $Q_v(\phi,\psi;I,s)$ we have the following identity, which adapts Lemma 4.4 of \cite{AW} to the present context:
\begin{lemma}\label{lem:aba}
	Let $0 <s <1$.  Then for all $\phi\in \ran P$ and $\psi \in \cH$ we have
	\begin{multline}\label{eq:sbound}
	\int_{\bbR}Q_{v+u}(\phi,\psi;I,s) \frac{\di u}{|u|^s} \\ = \ \int_{I} \sum_{\kappa \in \sigma(K_v(E))} \ipc{\phi}{\Pi_\kappa(E)\phi}^{1-s} \abs{\ipc{\phi}{\Pi_\kappa(E) P \left (H_v -E \right )^{-1} \psi}}^{s}  \di E,
	\end{multline}
	where $\Pi_\kappa(E)$ is the spectral projection onto the eigenspace for eigenvalue $\kappa$ for the $E$-dependent compact operator
	\begin{equation}
		K_v(E) \ = \ P \left (H_v - E \right )^{-1} P,
	\end{equation}
	which acts on the range of $P$.
\end{lemma}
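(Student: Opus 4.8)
The plan is to prove Lemma~\ref{lem:aba} by a rank-$N$ spectral-averaging argument, reading off all spectral data of $H_{v+u}=H_v+uP$ from the compact self-adjoint operator $K_v(E)=P(H_v-E)^{-1}P$ on $\ran P$. The algebraic inputs are the second resolvent identity, which after sandwiching by $P$ gives $K_{v+u}(\lambda)=(1+uK_v(\lambda))^{-1}K_v(\lambda)$, and the associated Krein-type formula
\[
(H_{v+u}-\lambda)^{-1}=(H_v-\lambda)^{-1}-(H_v-\lambda)^{-1}P\,[\,u^{-1}+K_v(\lambda)\,]^{-1}\,P(H_v-\lambda)^{-1},
\]
valid whenever all three resolvents exist. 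Since $H_v$ has compact resolvent, $K_v(\lambda)$ is compact, and for real $\lambda\notin\sigma(H_v)$ it is self-adjoint; on each component of $\bbR\setminus\sigma(H_v)$, Kato analytic perturbation theory gives real-analytic eigenvalue branches $\kappa_j(\lambda)$ and eigenprojections $\Pi_j(\lambda)$, with $\kappa_j'(\lambda)=\|(H_v-\lambda)^{-1}\hat\eta_j\|^2>0$ for a normalized eigenvector $\hat\eta_j$, so every branch is strictly increasing.

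Next I would run a residue computation. From the Krein formula, for $E\in I\setminus\sigma(H_v)$ the resolvent $(H_{v+u}-\lambda)^{-1}$ is singular at $\lambda=E$ precisely when $\kappa:=-1/u\in\sigma(K_v(E))$; expanding $u^{-1}+K_v(\lambda)$ near $\lambda=E$ along the branch through $\kappa$ (so $u^{-1}+\kappa_j(\lambda)=\kappa'(E)(\lambda-E)+O((\lambda-E)^2)$) and taking the residue gives
\[
P_{\{E\}}(H_{v+u})=\frac{1}{\kappa'(E)}\,(H_v-E)^{-1}\,\Pi_\kappa(E)\,(H_v-E)^{-1}.
\]
Using $\Pi_\kappa(E)=P\Pi_\kappa(E)P$, $\Pi_\kappa(E)K_v(E)=\kappa\,\Pi_\kappa(E)$ and $P\phi=\phi$, this simplifies, for $\phi\in\ran P$, to
\[
\ipc{\phi}{P_{\{E\}}(H_{v+u})\phi}=\frac{\ipc{\phi}{\Pi_\kappa(E)\phi}}{u^{2}\kappa'(E)},\qquad
\ipc{\phi}{P_{\{E\}}(H_{v+u})\psi}=\frac{-1}{u\,\kappa'(E)}\,\ipc{\phi}{\Pi_\kappa(E)P(H_v-E)^{-1}\psi}.
\]
Since $0<s<1$, eigenvalues $E$ with $\ipc{\phi}{\Pi_\kappa(E)\phi}=0$ contribute zero to $Q_{v+u}(\phi,\psi;I,s)$ and to the right side of \eqref{eq:sbound} alike, and for the remaining ones the term of $Q_{v+u}(\phi,\psi;I,s)$ carried by $E$ equals
\[
\frac{\ipc{\phi}{\Pi_\kappa(E)\phi}^{1-s}\,\abs{\ipc{\phi}{\Pi_\kappa(E)P(H_v-E)^{-1}\psi}}^{s}}{|u|^{2-s}\,\kappa'(E)}.
\]

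Finally I would integrate against $|u|^{-s}\,du$ and use Tonelli. After the cancellation $|u|^{-s}\cdot|u|^{-(2-s)}=|u|^{-2}$, the $u$-integral splits over the branches; on branch $\kappa_j$ the monotone change of variables $u=-1/\kappa_j(E)$ satisfies $|u|^{-2}\,du=\kappa_j'(E)\,dE$, which cancels the remaining $|u|^{-2}$ together with the $\kappa_j'(E)$ in the denominator and leaves $\int_{I_j}\ipc{\phi}{\Pi_j(E)\phi}^{1-s}\abs{\ipc{\phi}{\Pi_j(E)P(H_v-E)^{-1}\psi}}^{s}\,dE$ over the sub-interval $I_j\subset I$ of the branch; summing over branches recombines the integrand into $\sum_{\kappa\in\sigma(K_v(E))}$, which is \eqref{eq:sbound}.

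I expect the genuine difficulty to lie in the measure-theoretic and analytic bookkeeping of this last step rather than in any of the identities: justifying the interchange of summation and integration; discarding the measure-zero set of $E$ consisting of $\sigma(H_v)\cap I$, the points where $K_v(E)$ has a nontrivial kernel (the $\kappa=0$ summand is supported there, being tied to the point spectrum of the Feshbach operator $(1-P)H_v(1-P)$ on $\ran(1-P)$), and the branch crossings at which an eigenvalue of $K_v(\lambda)$ changes multiplicity; and checking that $u\mapsto(E,-1/u)$ gives an a.e.\ bijection between the contributing eigenvalues of $H_{v+u}$ in $I$ and the pairs $(E,\kappa)$ with $E\in I$, $\kappa\in\sigma(K_v(E))\setminus\{0\}$. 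I would carry out this bookkeeping exactly as in the proof of Lemma~4.4 of \cite{AW}, and refer there for the remaining details.
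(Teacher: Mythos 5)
Your proposal is correct and follows essentially the same route as the paper: the paper's own ``proof'' simply defers to Lemma 4.4 of \cite{AW}, whose spectral-averaging argument (Birman--Schwinger correspondence via the Krein formula, monotone eigenvalue branches of $K_v(E)$, the residue identities for $\ipc{\phi}{P_{\{E\}}(H_{v+u})\phi}$ and $\ipc{\phi}{P_{\{E\}}(H_{v+u})\psi}$, and the change of variables $u=-1/\kappa_j(E)$) is exactly what you reconstruct. Your key formulas check out, and deferring the remaining measure-theoretic bookkeeping to the proof of Lemma 4.4 in \cite{AW} is precisely what the authors do.
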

For a proof of Lem.\ \ref{lem:aba}, we refer to the proof of Lemma 4.4 of \cite{AW}.  Although, the result and proof given in ref.\ \cite{AW} are formulated in the specific context of multi-particle Schr\"odinger operators, they are easily generalized to the abstract setting given here with only notational changes.

To prove Thm.\ \ref{thm:aba}, we put eqs.\ \eqref{eq:s=1bound} and \eqref{eq:sbound} together to obtain
\begin{multline}\int_{\bbR}Q_{v+u}(\phi,\psi;I)^{2-s} \frac{\di u}{|u|^s} \\ \le \ 
 \norm{P_I(H)\psi}^{2-2s}
\int_{I} \sum_{\kappa \in \sigma(K_v(E))} \ipc{\phi}{\Pi_\kappa(E)\phi}^{1-s} \abs{\ipc{\phi}{\Pi_\kappa(E) P \left (H_v -E \right )^{-1} \psi}}^{s}  \di E.\end{multline}
Introducing the partition of unity  $P= \sum_{n}\ipc{\phi_n}{\cdot} \phi_n$  on the right hand side and bring the fractional power inside the sum we obtain
\begin{multline} \int_{\bbR}Q_{v+u}(\phi,\psi;I)^{2-s} \frac{\di u}{|u|^s} \\ \le \  \norm{P_I(H)\psi}^{2-2s}
 \sum_{n=1}^N\int_{I} \left ( \sum_{\kappa \in \sigma(K_v(E))} \ipc{\phi}{\Pi_\kappa(E)\phi}^{1-s} \abs{\ipc{\phi}{\Pi_\kappa(E)\phi_n}}^{s}\right )\abs{ \ipc{\phi_n}{  \left (H_v -E \right )^{-1} \psi}}^{s}  \di E. \end{multline}
The factor in parentheses on the right hand side is a correlator for $K_v(E)$, and thus bounded by $\norm{\phi}^{2-s}\norm{\phi_n}^s$.  The result follows.

\end{document}